\newenvironment{manualexample}[1]{%
  \manualtheoreminner
}{\endmanualtheoreminner}
\DeclarePairedDelimiterX\Basics[1](){ #1}
  \newtheorem{myexp}{Example}
\newtheorem{theorem}{Theorem}
\newtheorem{remark}{Remark}
\newtheorem{lemma}{Lemma}
\newtheorem{example}{Example}
\newtheorem{definition}{Definition}
\newenvironment{proof}[1]{\medskip\par\noindent
{\bf Proof:\,}\,#1}{{\mbox{\,$\blacksquare$}\par}}
\newcommand*{\QEDA}{\hfill\ensuremath{\blacksquare}}%
\begin{document}

\title{Topological Interference Management with Confidential Messages}

\author{
Jean de Dieu Mutangana \qquad Ravi Tandon\\ 
Department of Electrical and Computer Engineering\\
University of Arizona, Tucson, AZ, USA\\
E-mail: \{\textit{mutangana, tandonr}\}@email.arizona.edu
}
\maketitle
\begin{abstract}
\footnote{This work was supported by NSF grants CNS-1715947, CCF-2100013, and CAREER-1651492.}
The topological interference management (TIM) problem refers to the study of the $K$-user  partially connected  interference networks with no channel state information at the transmitters (CSIT), except for the knowledge of network topology. In this paper, we study the \textit{TIM problem with confidential messages} (TIM-CM), where message confidentiality must be satisfied in addition to reliability constraints. In particular,  each transmitted message must be decodable at its intended receiver and remain confidential at the remaining $(K-1)$ receivers.

Our main contribution is to present a comprehensive set of results for the TIM-CM problem by studying the symmetric secure degrees of freedom (SDoF). To this end, we first characterize necessary and sufficient conditions for feasibility of positive symmetric SDoF for any arbitrary topology. We next present two achievable schemes for the TIM-CM problem: For the first scheme, we use the concept of \textit{secure partition} and, for the second one, we use the concept of \textcolor{black}{\textit{secure independent sets}}.  We also present outer bounds on symmetric SDoF for any arbitrary network topology. Using these bounds, we characterize the optimal symmetric SDoF of all $K=2$-user and $K=3$-user network topologies.
\end{abstract}

\begin{IEEEkeywords}
Topological interference management, Confidential messages, channel state information uncertainty at transmitters, network topology, multi-user interference networks, partially connected networks, secure degrees of freedom.
\end{IEEEkeywords}
\IEEEpeerreviewmaketitle

\section{Introduction}

\IEEEPARstart{P}{hysical} layer security (PHY-SEC) exploits the inherent randomness of the wireless channel such as fading or noise in order to  establish  secure communication between the legitimate network users. It was introduced by Wyner in his seminal work on the degraded wiretap channel \cite{Wyner}. Wyner's model was extended to the nondegraded wiretap channel in \cite{CsiszarKorner}, and then to the Gaussian wiretap channel in \cite{CheongHellman}.  Numerous extensions to other multi-terminal problems ensued, including secrecy constrained broadcast channels (BC), multiple access channels (MAC), multiple-input multiple-output (MIMO) channels, and multi-user interference channels (IC), e.g., references \cite{KhistiWornell, OggierHassibi, LiuShamai, MukherjeeTandonUlukus,  NafeaYener, NafeaYener1, LinLin, TYLiuUlukus, LiPetropulu2011, SLinPLin2013, ShlezingeDabora, MutanganaTandonTIFS, XieUlukus,  MutanganaTandonISIT, AttiaTandon, MutanganaTandonICC20}. For an overview of PHY-SEC research progress and potential applications in next generation wireless systems, we refer the reader to comprehensive surveys \cite{WuKhistiCaire, PoorSchaefer, YenerUlukus, AMukherjeeFakoorianSwinlehurst}.

A large body of works in PHY-SEC have been achieved under the assumption that channel state information at the transmitters (CSIT) is available --be it instantaneous \cite{KhistiWornell, OggierHassibi, LiuShamai}, delayed \cite{MaddahAliTse, TandonPoorShamai, LashgariAvestimehr, SeifTandonLi}, or alternating \cite{MukherjeeTandonUlukus, TandonJafarPoor}. Moreover, a major portion of prior works on multi-user interference networks (with or without secrecy constraints) have assumed that the network is \textit{fully connected}, i.e., that each transmitter in a given network is connected to every receiver in the same network and vice versa, e.g.,  \cite{MaddahAliTse, MutanganaTandonISIT, XieUlukus}. For instance, in \cite{XieUlukus} the authors studied the $K$-user fully connected interference channel with confidential messages (IC-CM) and showed that when CSIT is perfectly available, then the sum secure degrees of freedom (sum SDoF) scales linearly with  $K$. Additionally, it is known that, for the fully connected interference networks, SDoF is zero when CSIT is not available. This is due to the fact that, under such settings, all receivers are statistically equivalent from each transmitter's perspective which puts decodability and secrecy constraints in direct conflict. 

As indicated in \cite{WuKhistiCaire, PoorSchaefer, YenerUlukus, AMukherjeeFakoorianSwinlehurst}, a considerable portion of works on PHY-SEC have also studied secrecy constrained wireless channels in the absence of CSIT. For instance, in \cite{TYLiuUlukus} the authors characterized sum SDoF for the MIMO wiretap channel with no CSI anywhere for $T \geq 2\min (n_t ,n_r)$, where parameters $n_t$, $n_r$, and $T$ respectively represent the number of antennas at the transmitter, number of antennas at the receiver, and coherence time. In \cite{MutanganaTandonISIT} the authors showed that positive sum SDoF is achievable for the $K$-user fully connected IC-CM with no CSIT by leveraging the intersymbol interference (ISI) heterogeneity which is inherently present within the subclass of channels with memory.  Therein, it was shown that the achievable sum SDoF scales linearly with $K$ under some ISI heterogeneity conditions. Note that, in reality, the signals propagating over the wireless medium face several physical obstacles in addition to signal energy dissipation over traveled distance  \cite{Goldsmith, TseViswanathFundamentals}. Particularly, in multi-user interference networks, the above physical phenomena lead to the presence of \textit{weak channels}, i.e., channels whose signal strength is at noise floor levels or below a preset minimum power threshold  for reliable signal detection at the receivers. It is thus reasonable  to model multi-user wireless networks by discarding the weak channels, a setting referred to as that of \textit{partially connected} networks, i.e., where each transmitter is only connected to a subset of receivers and vice versa. 

Motivated by the above reasons, research considerations from the other extreme in terms of CSIT availability and network connectivity have recently gained traction under the framework of topological interference management (TIM) \cite{Jafar, MalekiJafarCadambe, Gesbert, NaderializadehAvestimehr, YiCaire, ShiZhangLetaief, LiuHanLiMa, AquilinaRatnarajah, DoumiatiAssaadArtail, YiSun}. In particular, the TIM problem studies partially connected multi-user interference networks with no CSIT, except for the knowledge of network topology and channel statistics. In other words, research on the TIM problem seeks to exploit the inherent heterogeneity due to the partial connectivity in order to ensure network reliability in the absence of CSIT.  We note here that the TIM problem is closely related to the index coding (ICOD) problem \cite{BirkandKol, Bar-YossefBirkJayramKol, HuangRouayheb, ArbabjolfaeiKim, ArbabjolfaeiHKimFund}, a framework under which the messages from all transmitters pass through a single common node (or equivalently, a server) which encodes them into a common function, that it then broadcasts to all receivers. Each receiver then applies its side information (also known as antidotes) to the broadcast function in order to decode its intended message. The authors in \cite{Jafar} and \cite{MalekiJafarCadambe} have shown that the TIM problem can be reformulated into the ICOD problem and vice versa. Moreover, due to the fact that the common node has access to all messages and the presence of antidotes at the receivers, the channel capacity for a given ICOD problem is an upper bound on the capacity of its TIM network counterpart. Note that the general ICOD and the general TIM are both still open problems. We refer the reader to \cite{ArbabjolfaeiHKimFund} and references therein for an overview of fundamentals of ICOD. A variant of the ICOD problem called pliable index coding (PICOD) was recently proposed \cite{BrahmaFragouli2015, BrahmaFragouli2018, Su2019, LiuTuninetti}. PICOD is a slightly more relaxed problem where each receiver is satisfied by decoding any one arbitrary message (not belonging to its side information set) from the broadcast function.

In this paper, we study the TIM problem with an additional secrecy constraint,  a framework that we refer to as \textit{TIM with confidential messages} (TIM-CM). Here, each transmitted message must be decodable at its intended receiver and remain confidential at the remaining $(K-1)$ receivers. The Besides a small number of recent results, e.g., \cite{AttiaTandon, MutanganaTandonICC20}, the TIM-CM problem has largely remained unexplored. The work in \cite{AttiaTandon} derived a lower and an upper bound on sum SDoF for the \textit{regular} TIM-CM problem. This is the secrecy constrained version of the well understood \textit{regular} TIM problem, where each user is assumed to receive signals from a constant number of transmitters \cite{Gesbert}, and whose sum degrees of freedom (DoF) has been derived in \cite{Gesbert} through interference avoidance based on fractional graph coloring. The work in \cite{MutanganaTandonICC20} derived a lower bound on sum SDoF for the \textit{half-rate-feasible} TIM-CM problem. This is the secrecy constrained version of the well understood \textit{half-rate-feasible} TIM  problem whose upper bound on symmetric DoF has been shown in \cite{Jafar, MalekiJafarCadambe} to be $1/2$ per user, hence the name ``half-rate-feasible."  

Just as there is a direct relationship between the general TIM problem and the general ICOD problem, one can also reformulate the TIM-CM problem of the current paper into an \textit{ICOD problem with confidential messages} (private-ICOD) and vice versa. The main difference between TIM-CM and private-ICOD is that, unlike in TIM-CM, each receiver in private-ICOD has an arbitrary subset of transmitted messages and jamming signals as side information, i.e., requiring that each receiver be only able to decode its intended message and nothing more beyond its side information set. Under the private-ICOD settings, all messages and jamming signals (from the transmitters) are assumed to be available to the common node which encodes them into a common function to broadcast to all receivers. Each receiver can use its side information to decode its indented message. As a consequence of the possible reformulation,  we also consider the literature on  secrecy constrained index coding, e.g.,  \cite{DauSkachekChee2012, MojahedianGohariAref2015, OngVellambiYeohKliewerYuan2016, OngKliewerVellambi2018, LiuVellambiKimSadeghi2018, LiuSadeghiAboutorabSharififar2020, NarayananPrabhakaranRaviDeyKaramchandani2018, NarayananPrabhakaranRaviDeyKaramchandani2020, SasiRajan2019, LiuTuninetti2019, LiuTuninetti2020}, to get a comprehensive picture of works related to TIM-CM.

The majority of results on secrecy constrained ICOD problems can be broadly placed in two major categories: $(i)$ ICOD problems with secrecy constraints against external eavesdroppers, i.e.,  requiring secrecy against any illegitimate network users that may eavesdrop on the broadcast function. This is a problem also referred to as secure index coding (secure-ICOD), e.g., \cite{MojahedianGohariAref2015, OngVellambiYeohKliewerYuan2016, OngKliewerVellambi2018, LiuVellambiKimSadeghi2018}. $(ii)$ ICOD problems with secrecy constraints against other legitimate users within the same network. This second category can be further subdivided into two problem subclasses, namely, private-ICOD (also defined above) and private pliable index coding (private-PICOD), which focuses on a slightly more relaxed setting where each receiver is satisfied by decoding any one arbitrary message from the transmitted messages and nothing more beyond its side information set. For example, in recent work \cite{LiuVellambiKimSadeghi2018}, the authors studied the $K$-user secure-ICOD problem and characterized the optimal capacity region for $K\leq 4$. On the other hand, in a recent paper \cite{NarayananPrabhakaranRaviDeyKaramchandani2018} and its long version \cite{NarayananPrabhakaranRaviDeyKaramchandani2020}, the authors studied the $K$-user private-ICOD problem and settled the capacity region for $K\leq 3$.  In order to obtain this result,  there is an additional assumption made in \cite{NarayananPrabhakaranRaviDeyKaramchandani2018}: a key sharing mechanism enables the common node to share secret keys with arbitrarily subsets of receivers in addition to the broadcast of the common function. Moreover, it is proved in \cite{NarayananPrabhakaranRaviDeyKaramchandani2018} that, due to the presumption of existence of this secret key sharing mechanism,  the achievable secrecy rate matches that of the nonsecrecy constrained ICOD problem, i.e., there is no rate penalty for secrecy.  In the absence of the secrect key sharing mechanism between the server and receivers,  the authors \cite{NarayananPrabhakaranRaviDeyKaramchandani2018} also studied a resulting \textit{weak secrecy private-ICOD} (WS-ICOD) model and provided necessary conditions for the feasibility of positive secrecy rate. For this model, no achievable schemes were proposed and capacity region was not derived for any value of $K$. A direct reformulation of the TIM-CM problem of the current paper would lead to a WS-ICOD model because WS-ICOD precludes the assumption of the secret key sharing mechanism beyond the broadcast function. Moreover, there is a rate penalty for secrecy under TIM-CM settings. In regards to the private-PICOD problem, recent works derived lower bounds on secrecy rate \cite{SasiRajan2019, LiuTuninetti2019, LiuTuninetti2020}.

Motivated by the above discussion, in this paper, we focus on the following question: What are the bounds on symmetric SDoF for the $K$-user partially connected interference networks with confidential messages in the absence of CSIT, except for the knowledge of network topology and channel statistics? In other words, our aim is to explore how much this topology knowledge alone can be exploited in order to manage interference and establish limits on symmetric SDoF for the TIM-CM problem. 

\indent\textbf{Contributions:}
We summarize our main contributions as follows:
\begin{enumerate}[(i)]
\item \textit{(Necessary and sufficient conditions for symmetric SDoF feasibility):} We characterize necessary and sufficient conditions under which non-zero SDoF is feasible for any topology. These feasibility conditions for TIM-CM are related to conditions of its WS-ICOD counterpart \cite [Lemma 8]{NarayananPrabhakaranRaviDeyKaramchandani2020} since its graph is a complementary of the network coding graph of WS-ICOD. 

\item \textit{(Inner bounds on symmetric SDoF):}  We propose achievable symmetric SDoF schemes for the general TIM-CM problem by  introducing two transmission schemes, namely, \textit{secure partition} for the first lower bound and \textit{secure independent} sets for the second one. 

\item \textit{(Outer bounds on symmetric SDoF):}  We also obtain upper bounds on symmetric SDoF for the general TIM-CM problem. To this end, we first show how to obtain a nontrivial upper bound on symmetric SDoF through a careful analysis of the received signal structures at all $K$ receivers and their respective interference components with regard to the underlying topology.  We next present a second upper bound which we show to be tighter for some examples. The main difference between this upper bound and the first one is that here we leverage the potential presence of \textit{fractional signal generators} within the network, a concept that was introduced in  \cite{NaderializadehAvestimehr, Gesbert} for the nonsecrecy constrained TIM problem. This is a paradigm where the interference signal component of the received signal at some receiver can generate either a statistically equivalent version of the received signal at some other receiver or its cleaner version (i.e., with less interference). 

\item \textit{(Optimal symmetric SDoF for all $K$-user TIM-CM topologies with $K\leq 3$):} Finally, we apply the proposed upper and lower bounds to characterize the optimal symmetric SDoF for all $K$-user TIM-CM topologies with $K\leq 3$. 
\end{enumerate}

\section{System Model} \label{SystemModel}

\begin{figure*}[!t]
	\begin{center}
		\includegraphics[width=0.7\textwidth]{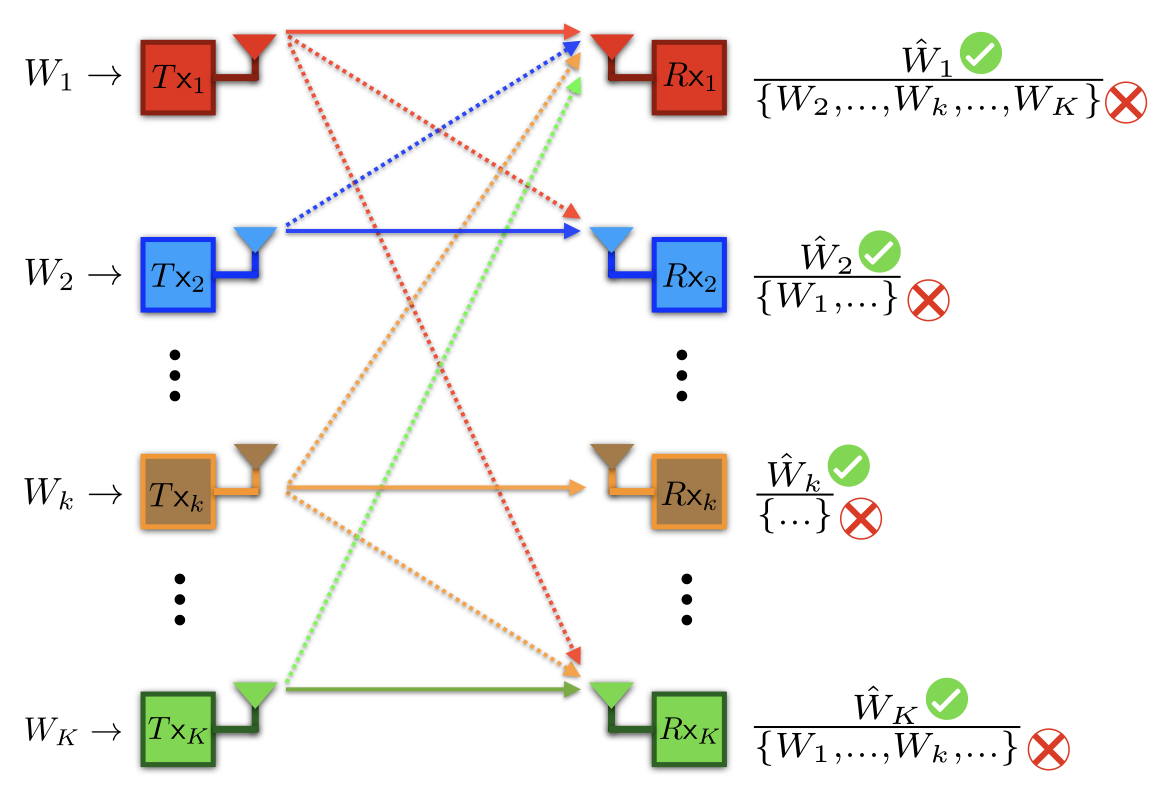}
	\vspace{2pt}
		\caption{TIM-CM problem where each user $R\textsf{x}_k$ should only decode its intended message $W_k$ and the remaining (interfering) messages must remain confidential.}
		\label{fig:Fig1}
	\end{center}
\end{figure*} 
We consider the $K$-user partially connected single-input single-output (SISO)  interference network with confidential messages as depicted in Fig. \ref{fig:Fig1}. We use $\mathcal{T}_k$ to denote the index set of all transmitters that are connected to receiver $R\textsf{x}_k$, for $k\in\{1, 2, \dots, K\}$. Similarly,  we use $\mathcal{R}_k$ to denote the index set of all receivers that are connected to transmitter $T\textsf{x}_k$, for $k\in\{1, 2, \dots, K\}$. Therefore, the network can be fully described by its topology $\mathcal{G}=(\mathcal{T}_1,\mathcal{T}_2, \dots, \mathcal{T}_K,  \mathcal{R}_1, \mathcal{R}_2, \dots, \mathcal{R}_K)$. Additionally, we use $\mathcal{I}_k$ to represent the index set of all transmitters that cause interference at $R\textsf{x}_k$.  In turn, this implies that $\mathcal{T}_k=\{k\}\cup \mathcal{I}_k$. For a given topology $\mathcal{G}$, the signal received at $R\textsf{x}_k$ at time $t$ is given by
\begin{align}
\label{SystemModelEquation}
Y_k(t)=h_{kk}X_k(t)+\sum_{i\in\mathcal{I}_k}h_{ki}X_i(t)+Z_k(t),
\end{align}
where $h_{ki}$ represents the channel coefficient between Transmitter $i$ and Receiver $k$, assumed be time invariant and i.i.d. across users. $Z_k(t)$ is the zero-mean unit-variance complex Gaussian channel noise. We next provide the  adjacency matrix definition which will be useful later.
\begin{definition}(Adjacency matrix): For a network topology $\mathcal{G}$, we define an adjacency matrix $\mathbf{B}$ as:
\[ \mathbf{B}_{(j,i)}=\begin{cases} 
      1, & h_{ji}\neq 0 \\
      0, & {otherwise},
   \end{cases}
\]
where the matrix $\mathbf{B}$ is of size $K\times K$. Here, $h_{ji}\neq 0$ ($h_{ji}=0$) indicates whether there is a connection (no connection) between Transmitter $i$ and Receiver $j$ \cite{Jafar, Gesbert}.
\end{definition} 

\textbf{CSIT/CSIR assumptions}: Under the TIM-CM framework, there is no CSIT, except that the transmitters have the statistics of the channel coefficients and full access to the network topology $\mathcal{G}$.  For coherent signal detection, in addition to having access to $\mathcal{G}$,  we assume that there is global CSI at the receivers (CSIR). That is, each receiver $R\textsf{x}_k$, for $k\in\{1, 2, \dots, K\}$, has casual access to the set of all non-zero channel coefficients denoted by $\mathcal{H}=\{h_{ki}:i \in \mathcal{T}_k\}_{k=1}^K$. 


Each transmitter  $T\textsf{x}_k$ , for $k\in\{1, 2, \dots, K\}$, uses the knowledge of  $\mathcal{G}$ to encode its message $W_k$ into an $n$-length vector $X^n_k=[X_k(1), X_k(2), \dots, X_k(n)]^\top$  via an encoding function $f_{_{Enc}}(W_k|\mathcal{G})$. The vector  $X^n_k$ is subject to the following power constraint:
\begin{align}
\label{PowerConstraint}
\frac{1}{n}\mathbb{E}(||X^n_k||^2)\leq P,
\end{align}
 where $P$ is the average transmit power. Each receiver $R\textsf{x}_k$ receives a vector  $Y^n_k=[Y_k(1), Y_k(2), \dots, Y_k(n)]^\top$ and uses the knowledge of both $\mathcal{G}$ and $\mathcal{H}$ to recover its intended message $W_k$ via a decoding function $f_{_{Dec}}(Y^n_k|\mathcal{G}, \mathcal{H})$.  The considered system model  is subject to reliability and secrecy constraints as we explain next.

Transmitter $T\textsf{x}_k$ wants to securely send a message $W_k$, which is uniformly distributed in $\mathcal{W}_k=\{1, 2, \dots, 2^{nR_k}\}$, to receiver $R\textsf{x}_k$. Here, $\mathcal{W}_k$ represents the index set of all message at $T\textsf{x}_k$. A secure rate of communication $R_k(P, \mathcal{G})=\frac{\log(|\mathcal{W}_k|)}{n}$ is achievable, if there exists a sequence of encoding and decoding functions $f_{_{Enc}}(W_k|\mathcal{G})$, $f_{_{Dec}}(Y^n_k|\mathcal{G}, \mathcal{H})$ such that, as $n\rightarrow\infty$, both the decodability and confidentiality constraints are satisfied:

\noindent\textbf{Decodability constraint:} Each receiver $R\textsf{x}_k$ must be able to decode its intended message $W_k$,
\begin{align}
\label{DecodabilityConstraint}
&Pr[W_k\neq \hat{W}_k]= o(n).
\end{align}
\textbf{Confidentiality constraints:}   All messages seen as interference at receiver $R\textsf{x}_k$ must remain confidential,
\begin{align}
\label{ConfidentialityConstraint}
&\frac{1}{n}I(W_{\mathcal{I}_k}; Y_{k}^{n}|W_k,\mathcal{H})= o(n), \quad \forall k\in\{1, 2, \dots, K\}.
\end{align}
Here, $W_{\mathcal{I}_k}=\{W_i:i\in\mathcal{I}_k, i\neq k\}$ is the set of all messages seen as interference at $R\textsf{x}_k$, and $Y_{k}^{n}$ is the signal received at $R\textsf{x}_k$ over the $n$-length transmission block.

\textcolor{black}{\begin{definition} \label{SDoFDefinition}(Secure degrees of freedom (SDoF) tuple): Consider a network topology $\mathcal{G}$ and the defined above average signal transmission power parameter $P$. We say that an SDoF $K$-tuple $\left(\textup{\textsf{SDoF}}^{(1)}, \textup{\textsf{SDoF}}^{(2)}, \dots, \textup{\textsf{SDoF}}^{(K)}\right)$ is achievable, if there exist achievable rates $\left(R_1(P, \mathcal{G}), R_2(P, \mathcal{G}), \ldots, R_K(P, \mathcal{G})\right)$, for each $P$,  such that 
\begin{align}
\label{SDoFPrelogEquation}
\textup{\textsf{SDoF}}^{(k)}=\lim_{P\to\infty}\frac{R_k(P, \mathcal{G})}{\log(P)}, \quad k\in\{1, 2, \dots, K\},
\end{align}
where $\textup{\textsf{SDoF}}^{(k)}$ is the achievable SDoF for user $k$. 
\end{definition}}

In this paper, we focus on the \textit{symmetric SDoF} which we define next.

\begin{definition}\label{SDoF_symDefinition}(Symmetric SDoF): The symmetric SDoF, $\textup{\textsf{SDoF}}^{\textsf{\textup{sym}}}$ is defined as follows: 
\begin{align}
&\textup{\textsf{SDoF}}^{\textsf{\textup{sym}}}\nonumber\\&\qquad=\sup\big\{ D:\left(\textup{\textsf{SDoF}}^{(1)}, \textup{\textsf{SDoF}}^{(2)}, \dots, \textup{\textsf{SDoF}}^{(K)}\right)\nonumber\\&\qquad=(D, \dots, D)\text{~~is achievable.}\big\}
\end{align}
\end{definition}

\section{Main Results and Discussion}

In this Section, we present our main results and an accompanying discussion together with illustrative examples. In Section \ref{SectionSDOFFeasibility andInnerBounds}, we present necessary and sufficient conditions for the feasibility of positive symmetric SDoF for the general TIM-CM problem, the first inner bound on symmetric SDoF based on secure partition, and the second inner bound based on secure independent sets, respectively in Theorems \ref{TheoremSDoFSymmetry}, \ref{TheoremGeneralTIM-CM-SecurePartition}, and \ref{TheoremGeneralTIM-CMSDoFNonGreedy}. In Section \ref{SectionOuterBoundsonSymmetricSDoF}, we present our first and second upper bounds on symmetric SDoF, respectively in Theorems \ref{Theorem4UpperBound1} and \ref{Theorem5UpperBound2}. Finally, in Section \ref{SectionCaseStudies}, we present two case studies where we respectively characterize the optimal symmetric SDoF for all $2$-user and  $3$-user TIM-CM topologies.

\subsection{Feasibility of Positive Symmetric SDoF and Inner Bounds}\label{SectionSDOFFeasibility andInnerBounds}
\label{SymmetricSDoFFeasibilitySection}


Consider the $K$-user TIM-CM model defined by equation \eqref{SystemModelEquation}.  The following Theorem, which is proved in Appendix \ref{appendic:Theorem1Proof}, answers the question: When is  symmetric SDoF zero for a given TIM-CM network? 

\begin{theorem}\label{TheoremSDoFSymmetry}
For a network topology $\mathcal{G}$, the symmetric SDoF is zero if and only if there exists a pair $i, j\in\{1, 2, \dots, K\}$, for $i\neq j$, such that
\begin{enumerate}[(i)]
\item $i\in\mathcal{I}_{j}$
\item $j \in\mathcal{I}_{i}$ 
\item $\mathcal{I}_{j}\setminus \{i\} \subseteq \mathcal{I}_{i}\setminus{\{j\}}$ or  $\mathcal{I}_{i}\setminus{\{j\}} \subseteq \mathcal{I}_{j}\setminus{\{i\}}$.
\end{enumerate}
\end{theorem}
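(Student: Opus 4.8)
The plan is to prove the two directions separately. For the ``if'' direction (a conflict pair forces zero symmetric SDoF) I would give a genie-aided converse built on the statistical equivalence of received signals with identical connectivity; for the ``only if'' direction (absence of any conflict pair yields positive symmetric SDoF) I would exhibit an achievable scheme that exploits the interference heterogeneity guaranteed by the failure of condition (iii). The converse is self-contained, so I would carry it out in full; the achievability is where the real work lies.

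\emph{Converse.} Suppose a pair $(i,j)$ satisfies (i)--(iii), and assume without loss of generality the containment $\mathcal I_j\setminus\{i\}\subseteq\mathcal I_i\setminus\{j\}$ (the other case is symmetric and yields $\textsf{SDoF}^{(j)}=0$). Using $i\in\mathcal I_j$ and $j\in\mathcal I_i$ together with this containment, the transmitter sets satisfy $\mathcal T_j=\{i,j\}\cup(\mathcal I_j\setminus\{i\})\subseteq\{i,j\}\cup(\mathcal I_i\setminus\{j\})=\mathcal T_i$, so every transmitter heard by $R\textsf x_j$ is also heard by $R\textsf x_i$. Starting from $H(W_i)=H(W_i\mid W_j)$ (messages are independent) and applying Fano's inequality to the decodability constraint \eqref{DecodabilityConstraint}, I get $nR_i\le I(W_i;Y_i^n\mid W_j,\mathcal H)+n\epsilon_n$. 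I would then hand $R\textsf x_i$ the genie signals $\{X_\ell^n:\ell\in\mathcal D\}$, where $\mathcal D=(\mathcal I_i\setminus\{j\})\setminus(\mathcal I_j\setminus\{i\})$ collects the ``extra'' interferers seen only by $R\textsf x_i$. Because these signals are produced from messages and local randomness independent of $W_i$ and of the remaining terms of $Y_i^n$, subtracting them yields an observation $\widehat Y_i^n$ whose connectivity is exactly $\mathcal T_j$, with $I(W_i;Y_i^n\mid W_j,\mathcal H)\le I(W_i;\widehat Y_i^n\mid W_j,\mathcal H)$.

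Now $\widehat Y_i^n$ and $Y_j^n$ are noisy generic linear combinations of the \emph{same} transmit vector $\{X_\ell^n:\ell\in\mathcal T_j\}$, so by the DoF-level statistical equivalence of received signals with identical connectivity they carry the same information about $W_i$ up to a pre-log-free term: $I(W_i;\widehat Y_i^n\mid W_j,\mathcal H)=I(W_i;Y_j^n\mid W_j,\mathcal H)+o(n\log P)$. Since $i\in\mathcal I_j$, the message $W_i$ is one of the interfering messages $W_{\mathcal I_j}$ at $R\textsf x_j$, so the confidentiality constraint \eqref{ConfidentialityConstraint} gives $I(W_i;Y_j^n\mid W_j,\mathcal H)\le I(W_{\mathcal I_j};Y_j^n\mid W_j,\mathcal H)=o(n\log P)$. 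Chaining the bounds, $nR_i\le o(n\log P)+n\epsilon_n$, and dividing by $n\log P$ and letting $n\to\infty$ then $P\to\infty$ in Definition \ref{SDoFDefinition} forces $\textsf{SDoF}^{(i)}=0$; as the symmetric point requires a common value across users, $\textsf{SDoF}^{\textsf{sym}}=0$. I would double-check that all three conditions are genuinely used: (i) to make $W_i$ an interfering message at $R\textsf x_j$, (ii) to place $X_j$ in both $\mathcal T_i$ and $\mathcal T_j$, and (iii) to get $\mathcal T_j\subseteq\mathcal T_i$.

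\emph{Achievability and main obstacle.} For the ``only if'' statement I would argue the contrapositive: if no pair satisfies (i)--(iii), construct a scheme with strictly positive symmetric SDoF. The failure of (iii) means every mutually interfering pair $(i,j)$ has $\mathcal I_j\setminus\{i\}\not\subseteq\mathcal I_i\setminus\{j\}$ and $\mathcal I_i\setminus\{j\}\not\subseteq\mathcal I_j\setminus\{i\}$, so each receiver that must be kept ignorant of $W_i$ hears an interferer that $R\textsf x_i$ does not, and conversely; non-mutually-interfering pairs impose no secrecy obstruction in the DoF sense. The intended plan is to convert this heterogeneity into, for each message, a cooperative-jamming component aligned along a topological direction visible at every interfered receiver but transparent at the intended one. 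The hard part will be precisely this step: turning the \emph{pairwise} heterogeneity guaranteed by $\lnot$(iii) into a \emph{single} scheme that simultaneously protects every message against all $K-1$ receivers at once and still delivers a common positive SDoF. This is exactly the role of the secure-partition and secure-maximal-independent-set constructions developed later, and I expect to complete this direction by invoking a specialization of those schemes, in contrast to the converse, which stands on its own.
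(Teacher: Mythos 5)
Your converse is correct and is, modulo relabeling of $i$ and $j$, the paper's own argument: the paper likewise enhances the receiver whose interference set dominates (there $R\textsf{x}_j$, under the inclusion $\mathcal{I}_i\setminus\{j\}\subseteq\mathcal{I}_j\setminus\{i\}$) by a genie supplying the excess interfering signals $X^n_{\mathcal{I}_j\setminus\mathcal{I}_i}$, observes that the cleaned observation has connectivity exactly $\mathcal{T}_i$ and is therefore statistically equivalent to $Y^n_i$ (no CSIT plus i.i.d.\ coefficients), and then invokes the confidentiality constraint \eqref{ConfidentialityConstraint} at the other receiver to conclude that the secrecy rate vanishes. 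Your additional conditioning on the other message throughout is harmless --- if anything it makes the final appeal to \eqref{ConfidentialityConstraint} cleaner, since that constraint is stated conditioned on the intended message.

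The genuine gap is in the achievability direction, which you leave unproven. The obstacle you name --- turning the pairwise heterogeneity into ``a single scheme that simultaneously protects every message against all $K-1$ receivers at once'' --- is self-imposed: no such simultaneity is needed for \emph{positive} symmetric SDoF. The paper's proof is a greedy secure TDMA scheme: in slot $k$ only $T\textsf{x}_k$ sends a message, so only $W_k$ needs protection in that slot, and the jamming set $\mathcal{C}_k$ is assembled receiver-by-receiver from exactly the pairwise fact you extracted. Concretely, for each $m$ with $k\in\mathcal{I}_m$: if $m\notin\mathcal{I}_k$, take $c=m$ as the jammer (heard at $R\textsf{x}_m$, not at $R\textsf{x}_k$); if $m\in\mathcal{I}_k$, the pair $(k,m)$ is mutually interfering, so the negation of condition (iii) supplies some $c\in\mathcal{I}_m\setminus\{k\}$ with $c\notin\mathcal{I}_k\cup\{k\}$. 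The union of these jammers is disjoint from $\mathcal{T}_k$, so decodability at $R\textsf{x}_k$ is untouched while every receiver that hears $W_k$ is covered by artificial noise; wiretap coding then yields one secure DoF in that slot and hence $\textup{\textsf{SDoF}}^{\textup{\textsf{sym}}}\geq 1/K>0$. Your fallback of invoking Theorem \ref{TheoremGeneralTIM-CM-SecurePartition} is not circular and would also work, but only after verifying that every singleton $\{k\}$ is a secure independent set in the sense of Definition \ref{SecureIndependentSet} --- which is precisely the verification above, and it is the step your proposal omits; without it the ``only if'' half of the theorem is not established. Note also that your phrase ``non-mutually-interfering pairs impose no secrecy obstruction'' is loose: when $k\in\mathcal{I}_m$ but $m\notin\mathcal{I}_k$, jamming at $R\textsf{x}_m$ is still required (otherwise that wiretap link has zero secrecy pre-log); the correct statement is that a valid jammer, namely $T\textsf{x}_m$ itself, always exists in that case.
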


Consequently, as also detailed in Appendix \ref{appendic:Theorem1Proof}, for any TIM-CM network that does not satisfy the conditions of Theorem \ref{TheoremSDoFSymmetry}, positive symmetric SDoF is feasible. In the Appendix, we present a simple scheme satisfying positive symmetric SDoF through \textit{secure} time division multiple access (secure TDMA), which leads to the smallest achievable symmetric SDoF of $\frac{1}{K}$ per user. We use Examples \ref{Example1} and  \ref{Example2} to highlight the principles behind Theorem \ref{TheoremSDoFSymmetry}.

\begin{figure*}[!t]
        \captionsetup[subfigure]{aboveskip=-1pt,belowskip=-1pt}
        \centering
        \begin{subfigure}[b]{0.49\textwidth}
                \includegraphics[width=\textwidth]{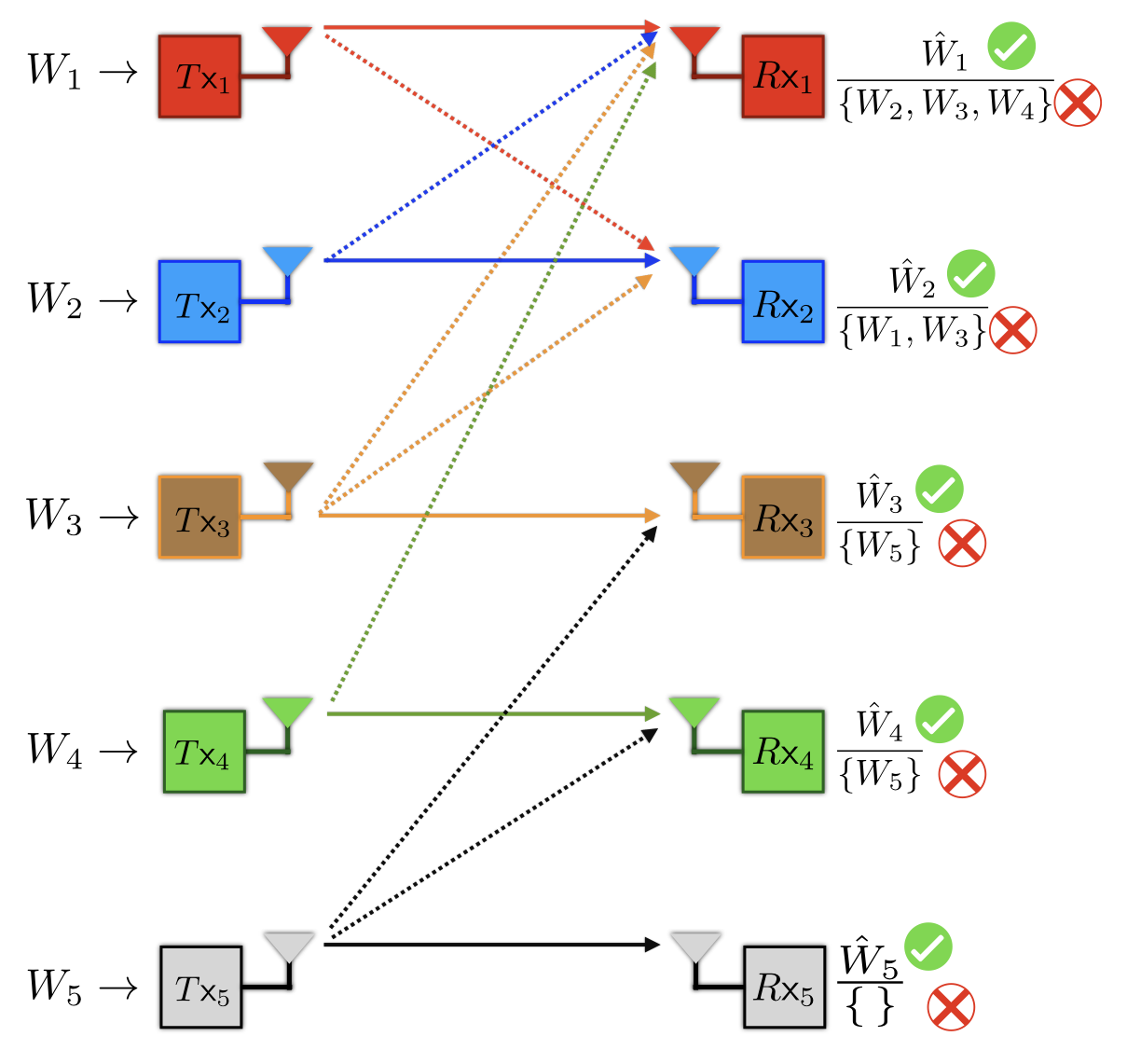}
                \caption{}
                 \label{a-ZeroSDoF}
        \end{subfigure} 
        \begin{subfigure}[b]{0.49\textwidth}
                \includegraphics[width=\textwidth]{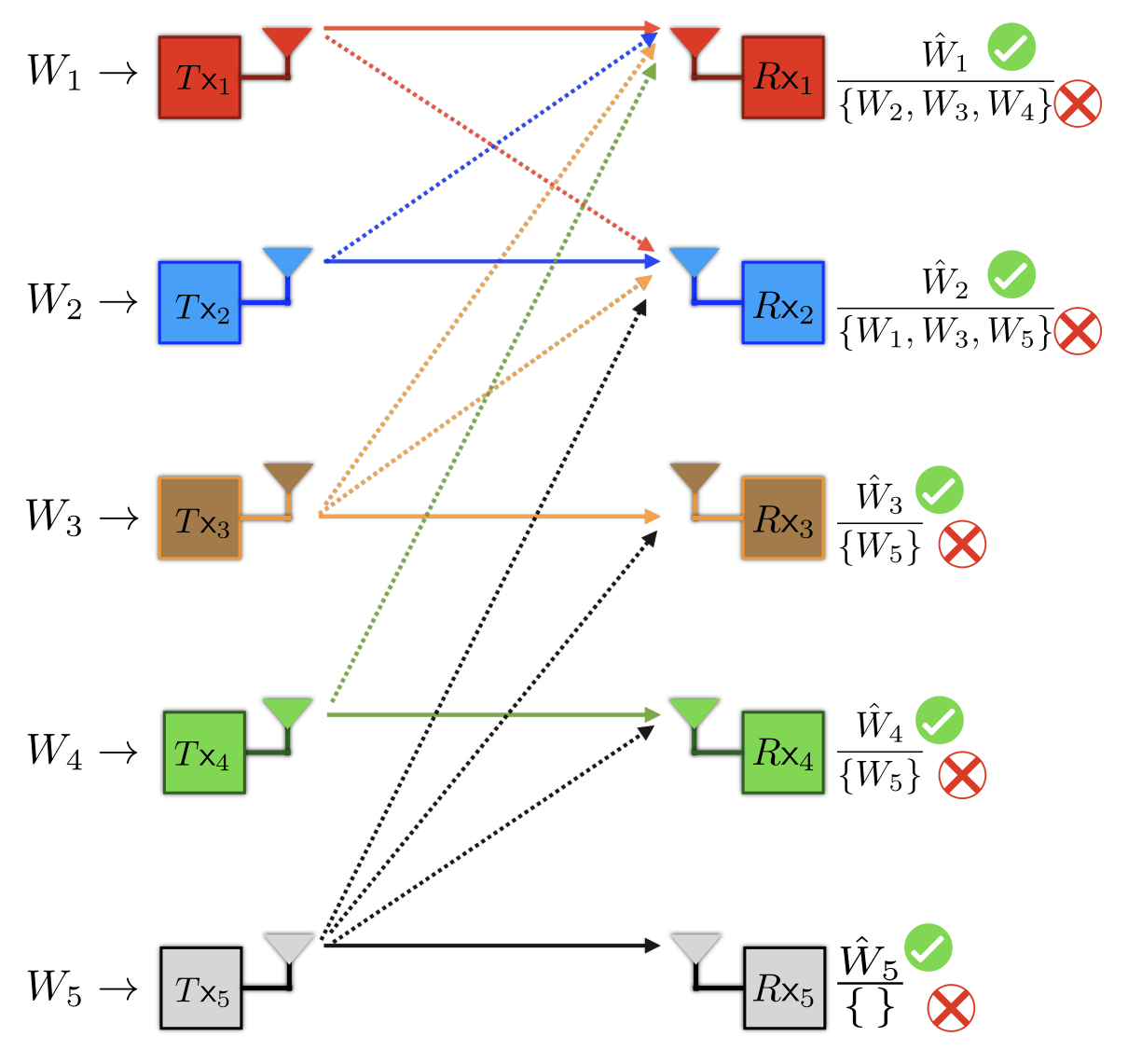}
                \caption{}
                \label{b-NonZeroSDoF}
        \end{subfigure}
        \vspace{5pt}
        \caption{$5$-user TIM-CM networks depicting: (a) When symmetric SDoF is \textit{zero} versus (b) When it is \textit{nonzero}.}
        \label{fig:FigSym}
\end{figure*}
\begin{manualexample}{1}  \label{Example1}($5$-user TIM-CM network: Zero symmetric SDoF):  \normalfont
Consider the network topology in Fig. \ref{fig:FigSym} \subref{a-ZeroSDoF}. At receiver $R\textsf{x}_1$, we have $\mathcal{I}_1=\{2, 3, 4\}$ and $\mathcal{T}_1=\{1\}\cup \mathcal{I}_1=\{1, 2, 3, 4\}$. At receiver $R\textsf{x}_2$, we have $\mathcal{I}_2=\{1, 3\}$ and $\mathcal{T}_2=\{2\}\cup \mathcal{I}_2=\{1, 2, 3\}$. Consider the user pair $(i, j)=(1, 2)$. Therefore, for this pair,  the conditions of Theorem \ref{TheoremSDoFSymmetry} are satisfied because $(i)$ $1\in\mathcal{I}_2$, $(ii)$ $2\in\mathcal{I}_1$, and $(iii)$ $\mathcal{I}_2\setminus\{1\}\subseteq\mathcal{I}_1\setminus\{2\}$.  Since the conditions of Theorem \ref{TheoremSDoFSymmetry} are satisfied, then symmetric SDoF is zero for this network. 

The intuition behind the infeasibility of positive symmetric SDoF for this topology is as follows: Suppose transmitter $T\textsf{x}_1$ wants to send its message during a given time slot. Then, transmitters $T\textsf{x}_2$, $T\textsf{x}_3$, and $T\textsf{x}_4$, which are all seen at its intended receiver $R\textsf{x}_1$ have to remain silent in order to avoid causing interference and thus preventing it from decoding the message sent by $T\textsf{x}_1$. Moreover, any message sent by $T\textsf{x}_1$  will be seen as interference at $R\textsf{x}_2$. Therefore, a separate transmitter that is seen at $R\textsf{x}_2$ but not seen at $R\textsf{x}_1$ is needed to protect the message from  $T\textsf{x}_1$ by jamming  $R\textsf{x}_2$. However, there is no such transmitter within the network topology of Fig. \ref{fig:FigSym} \subref{a-ZeroSDoF}, thereby leading to zero secure communication rate for the transmit receive pair  $T\textsf{x}_1-R\textsf{x}_1$. The proof of Theorem \ref{TheoremSDoFSymmetry} essentially formalizes the above logical arguments.
\end{manualexample}

\begin{manualexample}{2}  \label{Example2}($5$-user TIM-CM network: Nonzero symmetric SDoF): \normalfont 
Consider the network topology in Fig. \ref{fig:FigSym} \subref{b-NonZeroSDoF}. It can easily be verified that for all user pairs $i, j\in\{1, 2, \dots, 5\}$, for $i\neq j$,  the conditions of Theorem \ref{TheoremSDoFSymmetry} are not satisfied. Therefore, we can achieve a positive symmetric SDoF of at least $\frac{1}{K}=\frac{1}{5}$ over five time slots as we show next. 

The proposed transmission works over $T=5$ \textit{sub-blocks}, where each sub-block is of length $n_B$. This leads to the transmission block of length $n=n_BT$.  Moreover, we assume that $n_B$ is large enough to satisfy the decodability and confidentiality constraints in \eqref{DecodabilityConstraint}-\eqref{ConfidentialityConstraint}. From here onward in the achievable schemes of this paper, with a slight abuse of notation, we refer to a sub-block of length $n_B$ as a \textit{time slot}. Under the same convention, we will refer to $T$ as the \textit{transmission blocklength}. Furthermore, we assume that the transmitted signals are observed at discrete and synchronous time slots at the receivers.

\begin{enumerate}[(i)]
\item In the first sub-block (time slot), transmitter $T\textsf{x}_1$ sends its message signal using a wiretap code \cite{Wyner}, whereas $T\textsf{x}_5$ acts as a cooperative jammer by sending an artificial noise signal. All the other transmitters remain silent. Hence, the received signals at the first and second receivers are respectively given by $Y_1=h_{11}X_1+Z_1$ and $Y_2=h_{21}X_1+h_{25}X_5+Z_2$. The following secrecy rate is achievable at $R\textsf{x}_1$: $R_1=I(X_1; Y_1|\mathcal{H})-I(X_1; Y_2|\mathcal{H})$. In particular, by assuming that the transmitters use Gaussian codebooks \cite{LiPetropulu2011, SLinPLin2013, LinLin, TYLiuUlukus}, it can be readily verified that the achievable SDoF at $R\textsf{x}_1$ is $\textup{\textsf{SDoF}}^{(1)}= \lim_{P\to\infty} \frac{R_1}{\log(P)}=1$. Transmission for the remaining four users follows the same logic as the first one and we summarize it next. 

\item In time slot $t=2$, let $T\textsf{x}_2$ transmit its message and let $T\textsf{x}_4$ jam $R\textsf{x}_1$.
\item In time slot $t=3$, let $T\textsf{x}_3$ transmit  its message and let $T\textsf{x}_1$ simultaneously jam $R\textsf{x}_1$ and  $R\textsf{x}_2$.
\item In time slot $t=4$, let $T\textsf{x}_4$ transmit its message and let $T\textsf{x}_2$ jam $R\textsf{x}_1$.
\item In time slot $t=5$, let  $T\textsf{x}_5$ transmit its message and let $T\textsf{x}_3$ and $T\textsf{x}_4$ collectively jam $R\textsf{x}_2$, $R\textsf{x}_3$, and $R\textsf{x}_4$. Hence, each receiver gets its intended message over five time slots, i.e., $\textup{\textsf{SDoF}}^{\textsf{\textup{sym}}}\geq \frac{1}{5}$.
\end{enumerate}
\end{manualexample}

We next present general transmission schemes beyond the (greedy) secure TDMA approach. We show how to achieve lower bounds on symmetric SDoF that are greater than or equal to $\frac{1}{K}$ under feasible topology conditions. Clearly, this implies that, under such schemes, the transmission blocklength may be less than the number of users in the whole network. As we explain later, transmission is done through the process of \textit{secure interference avoidance}. Let us first define  \textit{interference avoidance}.

\begin{definition} (Interference avoidance): In order to enable decodability at its intended receiver, each transmitter $T\textsf{x}_{k}$, for $k\in \{1, 2, \dots, K\}$, uses time slot $t$, for $t\in \{1, 2, \dots, T\}$, to transmit an information message $W_k$, for $k\in \{1, 2, \dots, K\}$, if and only if both those users that it causes interference to and those users that cause interference at its intended receiver $R\textsf{x}_{k}$ are not using the same time slot $t$ to transmit any signals. This principle is called  interference avoidance.
\end{definition}

The above transmission principle would suffice, if we were only concerned with decodability, i.e., under the TIM model. However, for the TIM-CM problem, we are additionally required to preserve secrecy. Thus, more restrictions on transmission have to be imposed beyond interference avoidance as we explain next.

\begin{definition} (Secure interference avoidance):  In order to enable both decodability and secrecy, $T\textsf{x}_{k}$ can transmit its message $W_k$ by following the defined above interference avoidance plus a new requirement that there must exist a set $\mathcal{C}$ of cooperative jamming transmitters (that do not interfere at $R\textsf{x}_{k}$) to protect $W_k$ at all receive nodes where it is seen as interference by sending artificial noise.
\end{definition}

The above described (secure) interference avoidance principles are closely related to the notion of (secure) \textit{independent sets}  and \textit{secure partition} as we explain next. 

\begin{definition} \label{IndependentSet}(Independent set (IS)):  Consider a $K$-user TIM-CM network. A set of users $\mathcal{U}_{IS} \subseteq\{1, 2, \dots, K\}$, is an \textit{independent set}, if for all $i\neq j \in \mathcal{U}_{IS}$,  $\mathbf{B}_{(i, j)}=\mathbf{B}_{(j, i)}=0$. This implies that,  for all $i\neq j\in \mathcal{U}_{IS}$, transmitter $T\textsf{x}_{j}$ is not connected to receiver $R\textsf{x}_{i}$ and transmitter $T\textsf{x}_{i}$ is not connected to receiver $R\textsf{x}_{j}$. This can simply be represented as $j\notin \mathcal{T}_i$ and $i\notin \mathcal{T}_j$.
\end{definition}
\begin{definition}\label{SecureIndependentSet} (Secure independent set (SIS)):  A set of users $\mathcal{U}_{SIS} \subseteq\{1, 2, \dots, K\}$, is a \textit{secure independent set}, if $\mathcal{U}_{SIS}$ is an independent set, and there exists
a set $\mathcal{C} \subseteq\{1, 2, \dots, K\}\setminus \mathcal{U}_{SIS}$, such that:
\begin{enumerate}[(i)]
\item $\mathcal{U}_{SIS}\cap\mathcal{R}_\mathcal{C}=\emptyset$ 
\item $\mathcal{R}_{\mathcal{U}_{SIS}}\setminus \mathcal{U}_{SIS}\subseteq \mathcal{R}_\mathcal{C}$,
\end{enumerate}
\textcolor{black}{where $\mathcal{R}_{\mathcal{U}_{SIS}}$ and $\mathcal{R}_\mathcal{C}$ respectively represent the set of all receivers that are connected to transmitters in $\mathcal{U}_{SIS}$ and $\mathcal{C}$.}

In other words,  condition $(i)$ implies that the intended receivers for the transmitters in $\mathcal{U}_{SIS}$ do not see the cooperative jamming signals from the transmitters in $\mathcal{C}$; and condition $(ii)$ implies that all unintended receivers that see signals from the transmitters in $\mathcal{U}_{SIS}$  should be in the coverage of the cooperative jamming set $\mathcal{C}$. 
\end{definition}

\textcolor{black}{The intuition behind the above definition is that each transmitter in the secure independent set  $\mathcal{U}_{SIS}$ will also be able to achieve confidentiality, as it will be simultaneously protected by artificial noise signal(s) that are sent by the cooperative jamming transmitters  in $\mathcal{C}$.}

\textcolor{black}{\begin{definition} \label{SecurePartition}(Secure partition (SP)): 
A partition $\mathcal{P}= \{\mathcal{P}_1, \mathcal{P}_2, \ldots, \mathcal{P}_{T}\}$, where $\mathcal{P}_t\neq \emptyset$ for all $t\in\{1, 2, \dots, T\}$, $\mathcal{P}_m \cap \mathcal{P}_n \neq \emptyset$ for all $m\neq n \in\{1, 2, \dots, T\}$, and $\bigcup_{t=1}^{T} \mathcal{P}_t = \{1, 2, \dots, K\}$,  is a secure partition, if every subset $\mathcal{P}_t$, for $t=1, 2, \ldots, T$,  is a secure independent set.
\end{definition}}

The following Theorem states our first lower bound on the symmetric SDoF.
\begin{theorem}\label{TheoremGeneralTIM-CM-SecurePartition}
The symmetric SDoF for the $K$-user TIM-CM network is lower bounded as follows:
\begin{align}
&\textup{\textsf{SDoF}}^{\textup{\textsf{sym}}}\geq \max_{\mathcal{P}} \frac{1}{|\mathcal{P}|},\nonumber\\
&\text{such that~} \mathcal{P}= \{\mathcal{P}_1, \mathcal{P}_2, \ldots, \mathcal{P}_{T}\}, \text{~for~} |\mathcal{P}|=T,\nonumber\\& \text{~is a secure partition of~} \{1, 2, \dots, K\}.
\end{align}
\end{theorem}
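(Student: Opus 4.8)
The plan is to prove the bound by achievability: for an arbitrary secure partition $\mathcal{P}=\{\mathcal{P}_1,\ldots,\mathcal{P}_T\}$ with $T=|\mathcal{P}|$, I would exhibit a $T$-slot scheme that delivers each user a secure DoF of $1$ in exactly one slot, yielding a per-user symmetric SDoF of $1/T$; maximizing over all secure partitions then gives the claim. In slot $t$, each transmitter indexed by $\mathcal{P}_t$ encodes its message $W_k$ with a Gaussian wiretap code at power $P$, every transmitter in the cooperative-jamming set $\mathcal{C}_t$ associated with the secure independent set $\mathcal{P}_t$ sends independent artificial Gaussian noise at power $P$, and all remaining transmitters stay silent. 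Since $\mathcal{P}$ partitions $\{1,\ldots,K\}$, every user transmits its own message in precisely one of the $T$ slots and is silent (or jams) in the rest.

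First I would verify decodability. Because $\mathcal{P}_t$ is an independent set (Definition \ref{IndependentSet}), for $i\neq k\in\mathcal{P}_t$ transmitter $T\textsf{x}_i$ is not connected to $R\textsf{x}_k$, so no intra-slot message interference reaches an intended receiver; condition (i) of Definition \ref{SecureIndependentSet}, $\mathcal{U}_{SIS}\cap\mathcal{R}_{\mathcal{C}_t}=\emptyset$, further guarantees that none of the jamming signals from $\mathcal{C}_t$ are seen at the intended receivers in $\mathcal{P}_t$. Hence each $R\textsf{x}_k$ with $k\in\mathcal{P}_t$ observes the interference-free and jamming-free channel $Y_k=h_{kk}X_k+Z_k$, over which $W_k$ is reliably decodable for $n_B$ large, satisfying \eqref{DecodabilityConstraint} and supporting a legitimate rate $I(X_k;Y_k|\mathcal{H})=\log P+o(\log P)$.

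Next I would handle confidentiality, which I expect to be the main obstacle. Condition (ii) of Definition \ref{SecureIndependentSet}, $\mathcal{R}_{\mathcal{U}_{SIS}}\setminus\mathcal{U}_{SIS}\subseteq\mathcal{R}_{\mathcal{C}_t}$, ensures that every unintended receiver which sees any message from $\mathcal{P}_t$ also lies in the coverage of the jamming set $\mathcal{C}_t$. The delicate point is that a single eavesdropping receiver $R\textsf{x}_j$ may simultaneously observe several messages from $\mathcal{P}_t$ mixed with one or more jamming streams, so the \emph{joint} leakage $\frac{1}{n}I(W_{\mathcal{I}_j};Y_j^n|W_j,\mathcal{H})$ in \eqref{ConfidentialityConstraint} must be driven to zero. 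I would control this by coupling the wiretap codebooks with the artificial-noise streams so that the aggregate received power at $R\textsf{x}_j$ is dominated by jamming at the same power order $P$ as the messages; treating the jamming as noise then bounds each leakage term $I(X_k;Y_j|\mathcal{H})$ by an $O(1)$ quantity, and a standard stochastic-encoder / random-binning equivocation argument (as in the Gaussian wiretap constructions of \cite{LiPetropulu2011, SLinPLin2013, LinLin, TYLiuUlukus}) makes the normalized mutual information vanish as $n_B\to\infty$. Because the jamming is never seen at the intended receivers, the legitimate rate is unaffected, so each transmitted user attains a secure rate $\log P-O(1)$, i.e. a secure DoF of $1$ in its assigned slot.

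Finally I would assemble the slots: each user achieves DoF $1$ in its single assigned slot and $0$ in the remaining $T-1$ slots, so its SDoF over the length-$T$ block is $1/T=1/|\mathcal{P}|$, uniformly across all users, giving a symmetric SDoF of $1/|\mathcal{P}|$. Since $\mathcal{P}$ was an arbitrary secure partition, optimizing over all secure partitions yields $\textup{\textsf{SDoF}}^{\textsf{\textup{sym}}}\geq\max_{\mathcal{P}}1/|\mathcal{P}|$. The crux of the argument is the joint-confidentiality step, where one must certify that a single jamming set $\mathcal{C}_t$ can simultaneously protect \emph{all} the messages it is responsible for without eroding any intended receiver's degree of freedom; the separation is exactly what conditions (i) and (ii) of the secure independent set guarantee.
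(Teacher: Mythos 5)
Your proposal is correct and follows essentially the same route as the paper's own proof: schedule each secure-partition subset $\mathcal{P}_t$ together with its cooperative jamming set $\mathcal{C}_t$ in one of $T=|\mathcal{P}|$ slots, check decodability via the independent-set property and condition $(i)$ of Definition \ref{SecureIndependentSet}, check confidentiality via condition $(ii)$ with wiretap coding, and maximize over all secure partitions. Your write-up is in fact more explicit than the paper's one-paragraph argument, particularly in spelling out the joint-leakage step at a receiver that overhears several messages protected by a single jamming set.
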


\begin{proof}
To prove the above result, we need to argue that for any secure partition $\mathcal{P}$, a symmetric SDoF of $1/|\mathcal{P}|$ is achievable. The result then follows by maximizing over all secure partitions. Since by Definition \ref{SecurePartition}, each subset $\mathcal{P}_t$ is a secure independent set, we can schedule all users in $\mathcal{P}_t$ in a single time slot $t$, for $t\in\{1, 2, \dots, T\}$,   while satisfying decodability \eqref{DecodabilityConstraint} and secrecy \eqref{ConfidentialityConstraint}. To this end, we also  simultaneously schedule all transmitters in a separate set $\mathcal{C}_t$, for $t\in\{1, 2, \dots, T\}$, to act as cooperative jammers by transmitting artificial noise symbols during time slot $t$ in order to protect all the information signals from the $|\mathcal{P}_t|$ transmitters at the receivers where they are seen as interference.  Thus, by separately scheduling  $|\mathcal{P}|$ subsets over $|\mathcal{P}|$ slots, and the fact a user only appears in a single subset, we achieve a lower bound of $1/|\mathcal{P}|$. Clearly, under this transmission scheme, each user is only assigned a single secure interference free channel use per transmission blocklength $|\mathcal{P}|=T$ and the achievable SDoF is maximized when all receivers can be served with their intended messages over the minimum cardinality of $\mathcal{P}$.
\end{proof}

\begin{figure*}[!t]
        \captionsetup[subfigure]{aboveskip=-10pt,belowskip=-1pt}
        \centering
        \begin{subfigure}[b]{0.65\textwidth}
                \includegraphics[width=\textwidth]{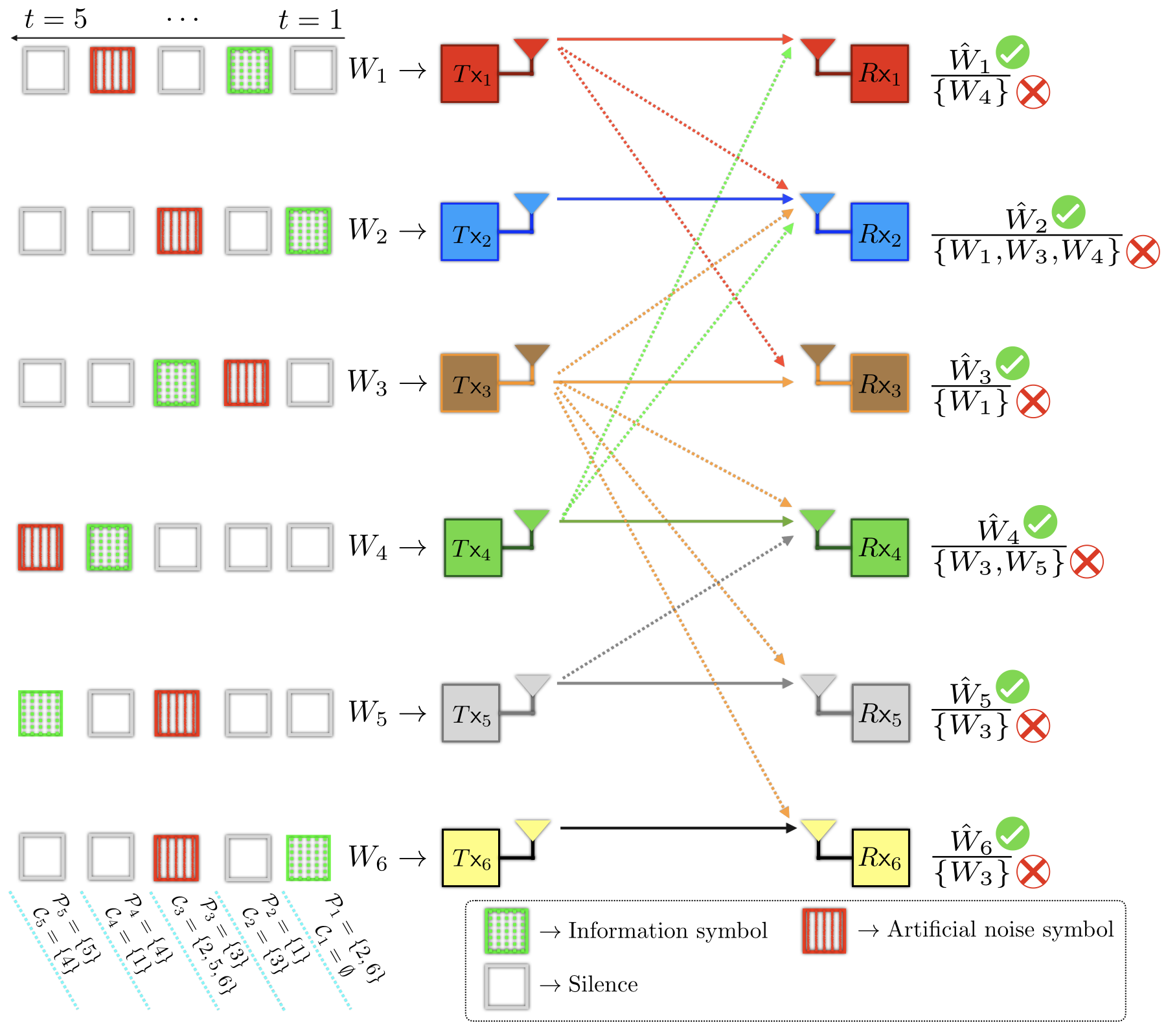}
                \vspace{3pt}
                  \caption{}
                \label{a-SecurePartition1}
        \end{subfigure}      
        \quad 
        \begin{subfigure}[b]{0.7\textwidth}
                \includegraphics[width=\textwidth]{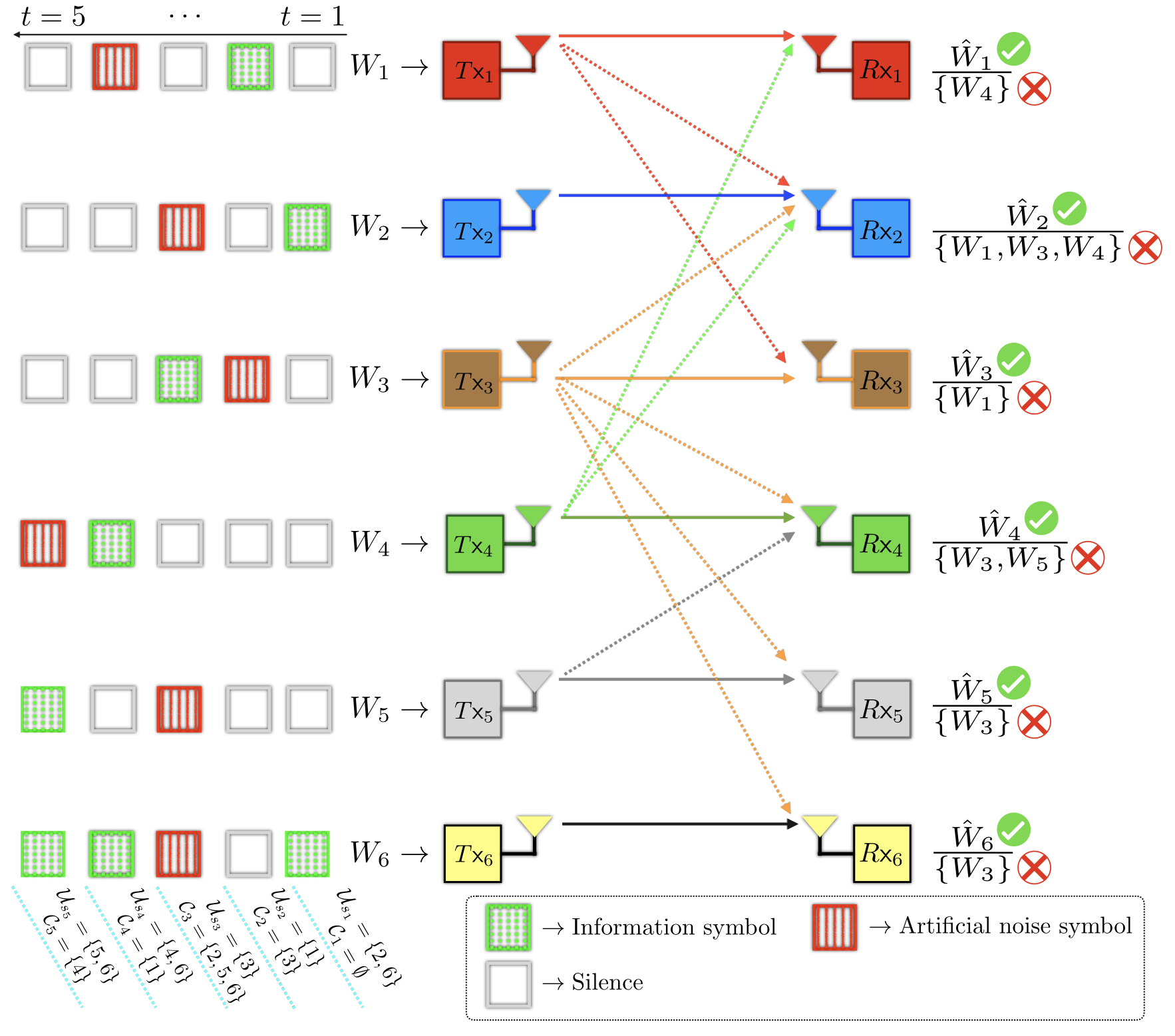}
                \vspace{3pt}
                \caption{}
                \label{b-SecureMaximalIndependentSets1}
        \end{subfigure}
        \vspace{5pt}
        \caption{$6$-user TIM-CM network contrasting: (a) Transmission via secure partition  versus  (b) Transmission via {\color{black}secure independent sets}.}
        \label{fig:Fig2}
\end{figure*}

We use Examples \ref{3.A} and \ref{4.A} to highlight the principles behind  Theorem \ref{TheoremGeneralTIM-CM-SecurePartition}.
\begin{manualexample}{3.A}  \label{3.A} ($6$-user TIM-CM network: Transmission via SP): \normalfont Consider the network topology $\mathcal{G}=(\mathcal{T}_1,\mathcal{T}_2, \dots, \mathcal{T}_6,  \mathcal{R}_1, \mathcal{R}_2, \dots, \mathcal{R}_6)$ depicted in Fig. \ref{fig:Fig2}. From this topology, we can directly deduce the following:
\begin{itemize}
\item \textit{(Independent sets)}: Consider a set of transmitters $\mathcal{U}_{IS}=\{1, 5, 6\}$. It can be verified that, for all $i, j\in\mathcal{U}_{IS}$, the adjacency matrix of this network satisfies $\mathbf{B}_{(i, j)}=\mathbf{B}_{(j, i)}=0$. Therefore, by Definition \ref{IndependentSet},  Transmitters $1$, $5$, and $6$ form an independent set $\{1, 5, 6\}$. In other words, $T\textsf{x}_1$, $T\textsf{x}_5$, and $T\textsf{x}_6$ are not connected to each other's  desired destinations $R\textsf{x}_1$, $R\textsf{x}_5$, and $R\textsf{x}_6$. Similarly, we can show that the sets $\{3\}$, $\{4, 6\}$, and $\{2, 5, 6\}$ are also independent sets.

\item \textit{(Secure independent sets)}: Consider a set of transmitters $\mathcal{U}_{SIS}=\{5, 6\}$. Clearly, for all $i, j\in\mathcal{U}_{SIS}$, $\mathbf{B}_{(i, j)}=\mathbf{B}_{(j, i)}=0$. Moreover, consider a set $\mathcal{C}=\{4\}$. This is a set of transmitter(s) such that, for a set of receivers $\mathcal{R}_\mathcal{C}=\{1, 2, 4\}$ that can see signals from $\mathcal{C}$ and the set of receivers $\mathcal{R}_{\mathcal{U}_{SIS}}=\{4, 5, 6\}$ that can see signals from $\mathcal{U}_{SIS}$, the following conditions are satisfied: $(i)$ $\mathcal{U}_{SIS}\cap\mathcal{R}_\mathcal{C}=\emptyset$ and  $(ii)$ $\mathcal{R}_{\mathcal{U}_{SIS}}\setminus \mathcal{U}_{SIS} \subseteq \mathcal{R}_\mathcal{C}$. Therefore, by Definition \ref{SecureIndependentSet}, the set $\mathcal{U}_{SIS}=\{5, 6\}$ is a secure independent set.  Similarly, we can show that the sets $\{1\}$, $\{3\}$, $\{2, 6\}$, and $\{4, 6\}$  are also secure independent sets.

\item \textit{(Secure partition)}: From the above secure independent sets and Definition \ref{SecurePartition}, we can create  a secure partition $\mathcal{P}=\{\mathcal{P}_1, \mathcal{P}_2, \dots, \mathcal{P}_5\}$, where $\mathcal{P}_1=\{2, 6\}$, $\mathcal{P}_2=\{1\}$, $\mathcal{P}_3=\{3\}$, $\mathcal{P}_4=\{4\}$, and $\mathcal{P}_5=\{5\}$.
\end{itemize}
As depicted in Fig. \ref{fig:Fig2}\subref{a-SecurePartition1}, we can schedule transmitters in secure partition subsets $\mathcal{P}_1, \mathcal{P}_2, \dots, \mathcal{P}_5$  respectively over $T=|\mathcal{P}|=5$ time slots and have the following sets of transmitters serve as their respective cooperative jamming sets: $\mathcal{C}_1=\emptyset$, $\mathcal{C}_2=\{3\}$, $\mathcal{C}_3=\{2, 5, 6\}$, $\mathcal{C}_4=\{1\}$, and $\mathcal{C}_5=\{4\}$.  Therefore, we are able to separately schedule $|\mathcal{P}|=5$ subsets respectively over  $|\mathcal{P}|=5$ time slots, which leads to a symmetric SDoF of $\frac{1}{|\mathcal{P}|}=\frac{1}{5}$.
\end{manualexample}

\begin{manualexample}{4.A}  \label{4.A} ($5$-user TIM-CM network: Transmission via SP): \normalfont Consider the network depicted in Fig. \ref{fig:Fig3}\subref{a-SecurePartition}. Following a similar analogy as in Example \ref{3.A}, we can use its topology $\mathcal{G}=(\mathcal{T}_1,\mathcal{T}_2, \dots, \mathcal{T}_5,  \mathcal{R}_1, \mathcal{R}_2, \dots, \mathcal{R}_5)$ to obtain three secure partition subsets that collectively contain all five users once. For example, as also shown in Fig. \ref{fig:Fig3}\subref{a-SecurePartition}, we can schedule transmitters in  $\mathcal{P}_1=\{2, 3, 5\}$, $\mathcal{P}_2=\{1\}$, and $\mathcal{P}_3=\{4\}$  respectively over $T=|\mathcal{P}|=|\{\mathcal{P}_1, \mathcal{P}_2, \mathcal{P}_3\}|=3$ time slots and have the following sets serve as their respective cooperative jamming sets: $\mathcal{C}_1=\{1\}$, $\mathcal{C}_2=\{4\}$, and $\mathcal{C}_3=\emptyset$.  Therefore, this leads to a symmetric SDoF of $\frac{1}{|\mathcal{P}|}=\frac{1}{3}$.
\end{manualexample}

Depending on the underlying network topology, it may be possible to achieve more symmetric SDoF than the secure partition scheme of Theorem \ref{TheoremGeneralTIM-CM-SecurePartition}, which allows each user to have only one secure interference free channel use per transmission block.  In particular, each user may be able to use more than one secure interference free channel per transmission block. We will clarify this fact later on using concrete examples. 



The following Theorem states our second lower bound on the symmetric SDoF.

{\color{black}\begin{theorem} \label{TheoremGeneralTIM-CMSDoFNonGreedy}
The symmetric SDoF for the $K$-user TIM-CM network is lower bounded as follows:
\begin{align}
&\textup{\textsf{SDoF}}^{\textup{\textsf{sym}}} \nonumber\\&        \geq\sup_{T\in\mathbb{N}}\max_{\mathcal{U}_{s_1}, \mathcal{U}_{s_2}, \dots, \mathcal{U}_{s_{T}}\in\mathcal{U}^{SIS}}\min_{k\in\{1, 2, \dots, K\}}\hspace{-10pt}\frac{\sum_{t=1}^T\mathbf{1}(k\in{\mathcal{U}}_{s_{t}})}{T},
\end{align}
where $\mathcal{U}^{SIS}=\{\mathcal{U}_{s_1}, \mathcal{U}_{s_2}, \dots, \mathcal{U}_{s_M}\}$ is the set of all secure independent sets for the network topology $\mathcal{G}$. Here, $\mathbf{1}(k\in\mathcal{A})\triangleq 1$, if  $k\in\mathcal{A}$ and $0$ otherwise. 
\end{theorem}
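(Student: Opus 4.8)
The plan is to build a $T$-slot time-sharing scheme that directly generalizes the proof of Theorem~\ref{TheoremGeneralTIM-CM-SecurePartition}, the key difference being that the scheduled subsets are now allowed to overlap rather than partition $\{1,\dots,K\}$. First I would fix any $T\in\mathbb{N}$ and any tuple $(\mathcal{U}_{s_1},\dots,\mathcal{U}_{s_T})$ of secure maximal independent sets drawn, with repetition, from $\mathcal{U}^{SMIS}$. In time slot $t$ I schedule every user in $\mathcal{U}_{s_t}$ to transmit its message via a wiretap code, and simultaneously activate a cooperative jamming set $\mathcal{C}_t$ sending artificial noise. Since each $\mathcal{U}_{s_t}$ is in particular a secure independent set, Definition~\ref{SecureIndependentSet} guarantees the existence of such a $\mathcal{C}_t\subseteq\{1,\dots,K\}\setminus\mathcal{U}_{s_t}$ with the two required coverage properties.

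Next I would verify that this single-slot schedule meets both constraints, exactly as in Theorem~\ref{TheoremGeneralTIM-CM-SecurePartition}. Decodability \eqref{DecodabilityConstraint} holds because the independent-set property ensures that no scheduled transmitter interferes at another scheduled user's intended receiver, while condition~$(i)$ of Definition~\ref{SecureIndependentSet} ensures the jammers in $\mathcal{C}_t$ are invisible at those same intended receivers. Confidentiality \eqref{ConfidentialityConstraint} holds because condition~$(ii)$ ensures that every unintended receiver which overhears a message from $\mathcal{U}_{s_t}$ lies in the coverage of $\mathcal{C}_t$ and is thus jammed. Hence every user scheduled in slot $t$ obtains one secure interference-free channel use, i.e.\ individual SDoF equal to $1$ within that slot.

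I would then count, for each user $k$, the number of slots in which it is scheduled, namely $n_k=\sum_{t=1}^T\mathbf{1}(k\in\mathcal{U}_{s_t})$; since the $T$ slots are orthogonal in time, the per-user SDoF accumulated over the whole block is $n_k/T$. By Definition~\ref{SDoF_symDefinition} a symmetric point forces a common value across users, so this scheme achieves symmetric SDoF equal to the bottleneck $\min_{k} n_k/T$. Taking the maximum over all choices of the SMIS tuple and the supremum over $T$ then yields the claimed bound, since each finite pair $(T,\text{tuple})$ already certifies an achievable symmetric SDoF and the network's symmetric SDoF is at least every such value.

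The routine single-slot verification is inherited essentially verbatim from Theorem~\ref{TheoremGeneralTIM-CM-SecurePartition}; the only genuinely new point---and the one I would state with care---is that allowing the SMIS sets to repeat or overlap lets a user accumulate $n_k>1$ secure slots, which is precisely what can push the ratio above the $1/|\mathcal{P}|$ of the partition scheme. The main thing to check is that the cross-slot coupling is harmless: a single node's role may switch between message-carrier and jammer across different slots, but because the scheme uses strict time-division over the $T$ slots there is no inter-slot interference, so the per-slot guarantees simply add up and the $\min$--$\max$--$\sup$ combination is a valid lower bound.
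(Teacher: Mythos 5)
Your proposal is correct and follows essentially the same route as the paper's own proof: schedule each secure maximal independent set $\mathcal{U}_{s_t}$ together with its cooperative jamming set $\mathcal{C}_t$ in its own time slot, count the per-user secure interference-free uses $\sum_{t=1}^T\mathbf{1}(k\in\mathcal{U}_{s_t})$, take the bottleneck minimum over users, and optimize over the tuple and $T$. Your explicit verification of decodability via the independent-set property plus condition~$(i)$ of Definition~\ref{SecureIndependentSet}, and of confidentiality via condition~$(ii)$, simply spells out details the paper's proof leaves implicit, so there is no substantive difference.
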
}

\begin{proof}
We can take $T$ secure independent sets $\mathcal{U}_{s_1}, \mathcal{U}_{s_2}, \dots, \mathcal{U}_{s_T}\in \mathcal{U}^{SIS}$ and proceed with transmission, while preserving decodability \eqref{DecodabilityConstraint}  and confidentiality \eqref{ConfidentialityConstraint}, as follows. Let all $|\mathcal{U}_{s_t}|$ transmitters  in the set $\mathcal{U}_{s_t}$ send information messages during the same time slot $t\in\{1, 2, \dots, T\}$. Since $\mathcal{U}_{s_t}$ is a secure independent set, there exists a set $\mathcal{C}_t$, $t\in\{1, 2, \dots, T\}$, of cooperative jammers so that all transmitters in  $\mathcal{U}_{s_t}$  can be simultaneously scheduled while satisfying both the decodability and confidentiality constraints.  By doing this for all $T$ sets respectively over $T$ time slots, this implies that each user $k$ will get  a total of $\sum_{t=1}^T\mathbf{1}(k\in\mathcal{U}_{s_t})$ secure interference free channels to use over transmission block $T$. This gives each user $k\in\{1, 2, \dots, K\}$, a total of $\frac{\sum_{t=1}^T\mathbf{1}(k\in\mathcal{U}_{s_t})}{T}$ secure degrees of freedom. Since we are interested in the achievable symmetric SDoF, the largest SDoF that all users can simultaneously achieve for a given choice of $T$ and $\mathcal{U}_{s_1}, \mathcal{U}_{s_2}, \dots, \mathcal{U}_{s_T}$ is given by $\min_{k\in\{1, 2, \dots, K\}} \frac{\sum_{t=1}^T\mathbf{1}(k\in\mathcal{U}_{s_t})}{T}$. Therefore, by optimizing the above over $T$ and $\mathcal{U}_{s_1}, \mathcal{U}_{s_2}, \dots, \mathcal{U}_{s_T}$, we obtain $\sup_{T\in\mathbb{N}}\max_{\mathcal{U}_{s_1}, \mathcal{U}_{s_2}, \dots, \mathcal{U}_{s_T}}\hspace{-2pt}\min_{k\in\{1, 2, \dots, K\}}\hspace{-4pt}\frac{\sum_{t=1}^T\mathbf{1}(k\in\mathcal{U}_{s_t})}{T}$, which is the symmetric SDoF lower bound in Theorem \ref{TheoremGeneralTIM-CMSDoFNonGreedy}.  Clearly, under this transmission scheme, each user may be assigned one or more secure interference free channel uses per transmission blocklength $T$.
\end{proof}

\begin{figure*}
        \captionsetup[subfigure]{aboveskip=-10pt,belowskip=-1pt}
        \centering
        \begin{subfigure}[b]{0.75\textwidth}
                \includegraphics[width=\textwidth]{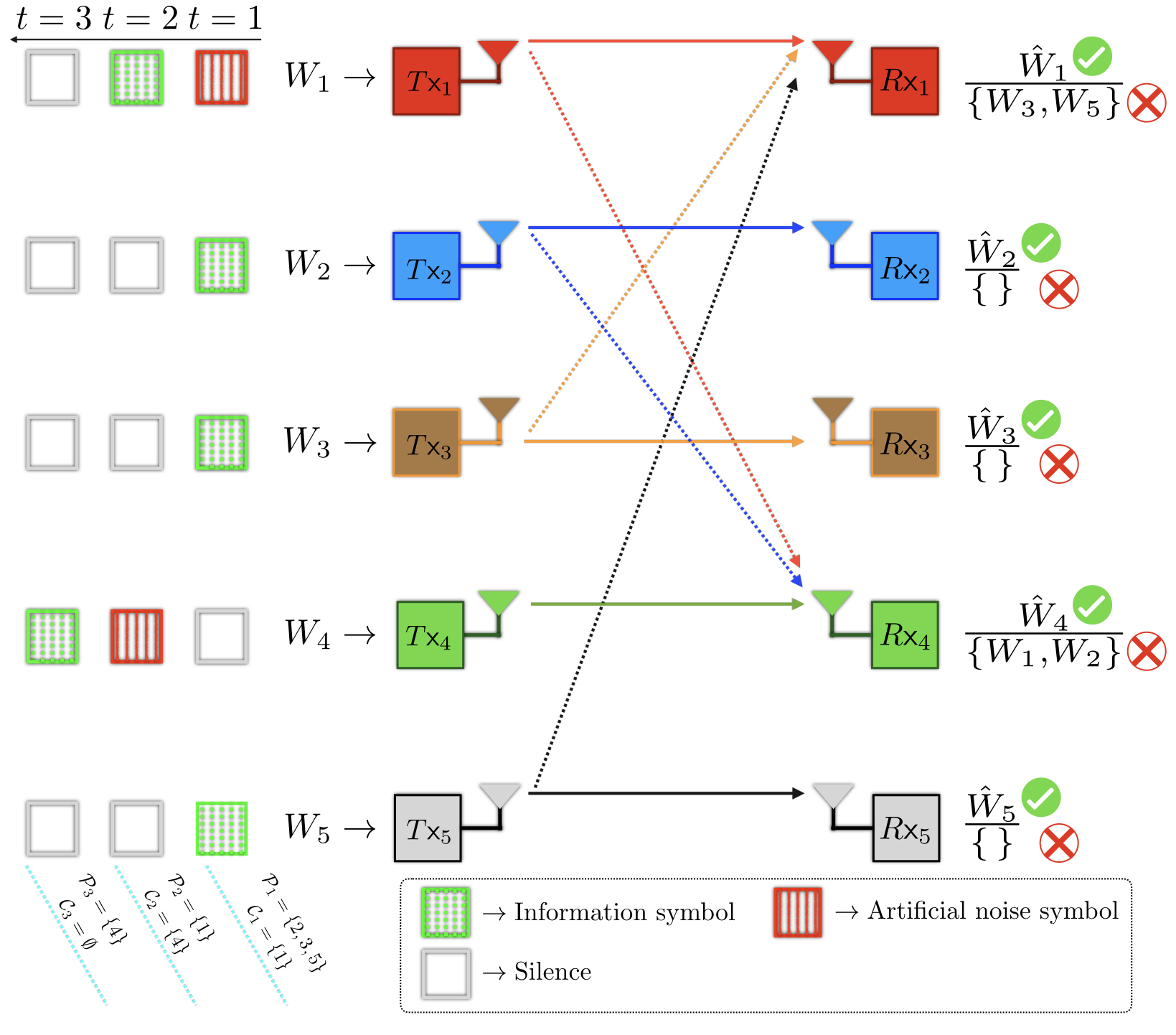}
               \vspace{1pt}
                \caption{}
                \label{a-SecurePartition}
        \end{subfigure}      
        \quad 
        \begin{subfigure}[b]{0.78\textwidth}
                \includegraphics[width=\textwidth]{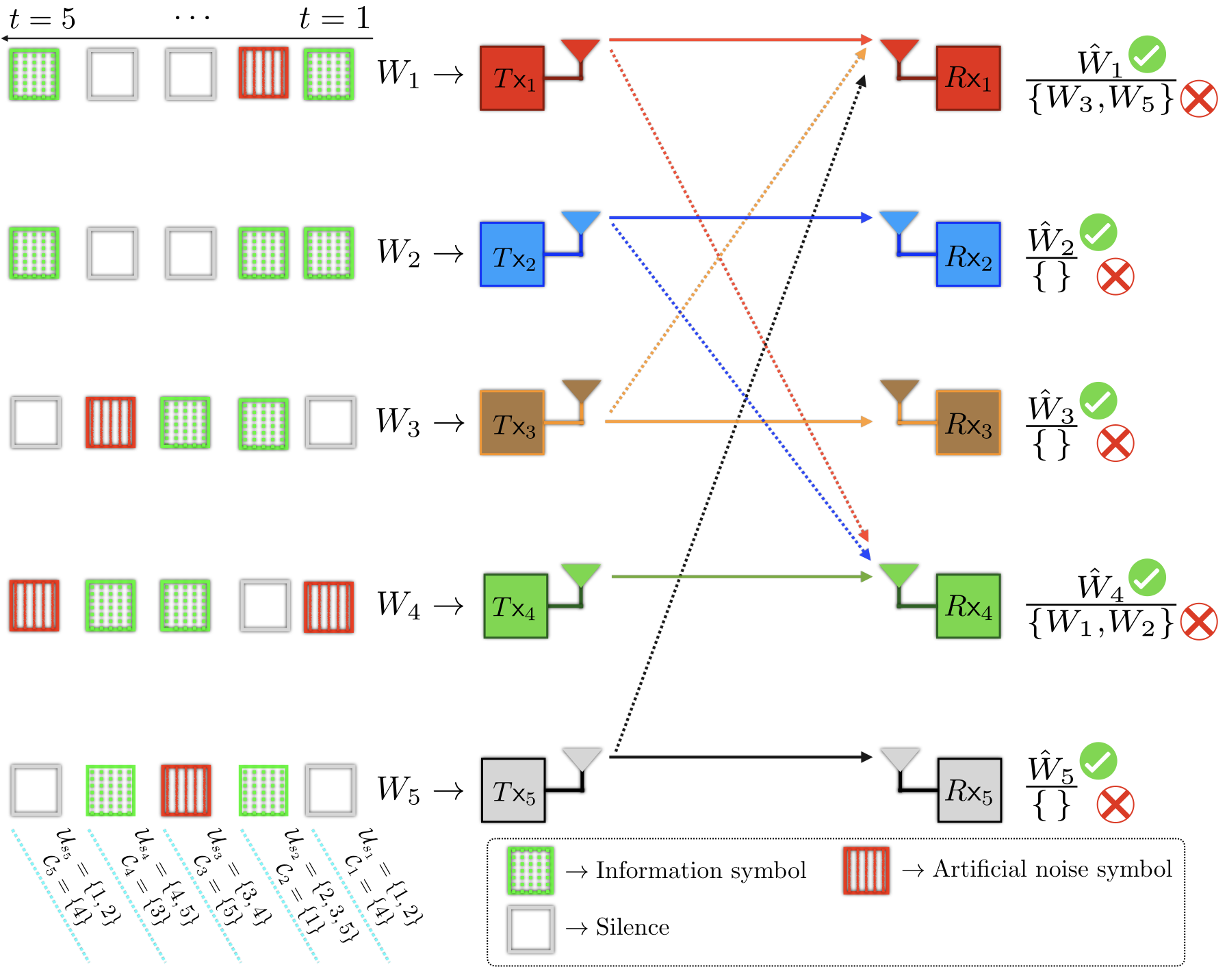}
                \vspace{1pt}
                \caption{}
                \label{b-SecureMaximalIndependentSets}
        \end{subfigure}
        \vspace{5pt}
        \caption{$5$-user TIM-CM network contrasting: (a) Transmission via secure partition  versus  (b) Transmission via secure independent sets.}
        \label{fig:Fig3}
\end{figure*}
We use Examples \ref{3.B} and \ref{4.B} to highlight the principles behind Theorem \ref{TheoremGeneralTIM-CMSDoFNonGreedy}.

\begin{manualexample}{3.B} \label{3.B}($6$-user TIM-CM network: Transmission via SIS): \normalfont Consider the network of Fig. \ref{fig:Fig2}.  From the topology of this network, we can readily infer the following:
\begin{itemize}
\item \textit{(Independent sets)}: According to Definition \ref{IndependentSet}, this topology leads to independent sets $\{3\}$,  $\{4, 6\}$,  $\{1, 5, 6\}$, and $\{2, 5, 6\}$.
\item \textit{(Secure independent sets)}: According to Definition \ref{SecureIndependentSet}, this topology leads to secure independent sets $\{1\}$, $\{3\}$, $\{2, 6\}$,  $\{4, 6\}$, and $\{5, 6\}$.
\end{itemize}

\textcolor{black}{Transmission works as follows: In the first time slot, we can pick any of the three largest secure independent sets $\{2, 6\}$,  $\{4, 6\}$, and $\{5, 6\}$ and schedule its users. Let us, for example, schedule the users in the SIS set $\mathcal{U}_{s_1}=\{2, 6\}$. Note that, as it is also depicted in Fig. \ref{fig:Fig2}\subref{b-SecureMaximalIndependentSets1}, signals sent by Transmitters $2$ and $6$ are only observed at their respectively intended Receivers $2$ and $6$ and not seen at any other receivers as interference. Therefore, the messages from these two transmitter do not need protection at any other receivers. In other words, there is no need for any of the remaining four transmitters in the same network to act as a cooperative jammer during the first time slot. Thus, the cooperative jamming set in the first time slot is the empty set denoted by $\mathcal{C}_1=\emptyset$.}

\textcolor{black}{In total, following the same logic,} we can obtain five (not necessarily unique) secure  independent sets that collectively contain each of the five users at least once, namely $\mathcal{U}_{s_1}=\{2, 6\}$, $\mathcal{U}_{s_2}=\{1\}$, $\mathcal{U}_{s_3}=\{3\}$, $\mathcal{U}_{s_4}=\{4, 6\}$, and $\mathcal{U}_{s_5}=\{5, 6\}$. Accordingly, the following (not necessarily unique) sets of transmitters can serve as cooperative jammers, respectively over five time slots: $\mathcal{C}_1=\emptyset$, $\mathcal{C}_2=\{3\}$, $\mathcal{C}_3=\{2, 5, 6\}$, $\mathcal{C}_4=\{1\}$, and $\mathcal{C}_5=\{4\}$. As depicted in Fig. \ref{fig:Fig2}\subref{b-SecureMaximalIndependentSets1}, by allowing the transmitters in the secure independent sets (cooperative jamming sets)  to respectively send their messages (artificial noise symbols) over $T=5$ time slots, one set per time slot, we can achieve a symmetric SDoF of $\frac{1}{5}$. We note here that this SDoF value is equivalent to the $\frac{1}{5}$ that we achieved in Example \ref{3.A} for the same topology through the secure partition scheme of Theorem \ref{TheoremGeneralTIM-CM-SecurePartition}.
\end{manualexample}

Note that: $(i)$  In Example \ref{4.A}, we transmit using the SP scheme of Theorem \ref{TheoremGeneralTIM-CM-SecurePartition} and, thus, only allow one secure interference free channel use per user over the whole transmission blocklength. $(ii)$  On the other hand (for the same exact topology as in Example \ref{4.A}), in Example \ref{4.B} below, we transmit using the SIS scheme of Theorem \ref{TheoremGeneralTIM-CMSDoFNonGreedy} which allows more than one secure interference free channel uses per user.

\begin{manualexample}{4.B}\label{4.B}($5$-user TIM-CM network: Transmission via SIS): \normalfont Consider the network in  Fig. \ref{fig:Fig3}. 

Following a similar analogy as in Example \ref{3.B}, we can use its topology $\mathcal{G}=(\mathcal{T}_1,\mathcal{T}_2, \dots, \mathcal{T}_5,  \mathcal{R}_1, \mathcal{R}_2, \dots, \mathcal{R}_5)$ to obtain five (not necessarily unique) secure independent sets that collectively contain each of the five users at least once. For example, as also shown in Fig. \ref{fig:Fig3}\subref{b-SecureMaximalIndependentSets}, we can schedule transmitters in  $\mathcal{U}_{s_1}=\{1, 2\}$, $\mathcal{U}_{s_2}=\{2, 3, 5\}$, $\mathcal{U}_{s_3}=\{3, 4\}$, $\mathcal{U}_{s_4}=\{4, 5\}$, and $\mathcal{U}_{s_5}=\{1, 2\}$ respectively over $T=5$ time slots and have the following (not necessarily unique) sets serve as their respective cooperative jamming sets: $\mathcal{C}_1=\{4\}$, $\mathcal{C}_2=\{1\}$, $\mathcal{C}_3=\{5\}$, $\mathcal{C}_4=\{3\}$, and $\mathcal{C}_5=\{4\}$.  Therefore, we are able to allow two secure interference free channels uses for each user over the transmission block $T$, leading to a symmetric SDoF of $\frac{2}{5}$. We note here that this symmetric SDoF value is greater than the $\frac{1}{3}$ that we achieved in Example \ref{4.A} for the same network through the secure partition scheme of Theorem \ref{TheoremGeneralTIM-CM-SecurePartition}.
\end{manualexample}

\begin{remark}  \label{Remark2}(Transmission without secrecy): For the topology in Fig. \ref{fig:Fig3}, the TIM scheme of \cite{NaderializadehAvestimehr} achieves a DoF of $\frac{1}{2} $. To achieve this, in the absence of secrecy constraints, we can schedule the transmitters in the two independent sets $\{1, 2\}$ and $\{3, 4, 5\}$ to respectively transmit over $T=2$ time slots, one set per time slot, thereby achieving a (nonsecure) symmetric DoF of $\frac{1}{2}$. Moreover, we note here that, this DoF matches the upper bound derived by \cite{Jafar} and \cite{MalekiJafarCadambe}.
\end{remark}

\subsection{Outer Bounds on Symmetric SDoF}\label{SectionOuterBoundsonSymmetricSDoF}

In this Section, we present upper bounds on symmetric SDoF for the TIM-CM problem. The Theorem below states our first upper bound on the symmetric SDoF and its proof is provided in Appendix \ref{appendix:Theorem4UpperBound1Proof}. Here, we first provide a few important definitions before stating the Theorem. 

\textcolor{black}{Let $\mathcal{S}_1$ and $\mathcal{S}_2$ be two arbitrary and disjoint subsets of receivers, i.e., such that $\mathcal{S}_1\cap \mathcal{S}_2=\emptyset$  and $\mathcal{S}_1, \mathcal{S}_2 \subseteq \{1, 2, \dots, K\}$. Suppose  $\mathcal{S}_2=\{m_1, m_2, \dots, m_{|\mathcal{S}_2|}\}$ and let $\mathcal{U}(\mathcal{S}_2)=\{\mathcal{T}_{m_q}\}_{q=1}^{|\mathcal{S}_2|}$ be the collection of all index sets of transmitters whose signals are respectively seen at the receivers in $\mathcal{S}_2$.
Furthermore, define $\mu (\mathcal{S}_1, \mathcal{S}_2) \triangleq \min_{\mathcal{V}_2(\mathcal{S}_2), \mathcal{V}_1(\mathcal{S}_1)}|(\mathcal{U}(\mathcal{S}_2) \setminus  \mathcal{V}_2(\mathcal{S}_2)) \setminus \mathcal{V}_1(\mathcal{S}_1)|$. Suppose  $\mathcal{S}_2=\{m_1, m_2, \dots, m_{|\mathcal{S}_2|}\}$, where $\mathcal{V}_2(\mathcal{S}_2)=  \{j_{1}, j_{2}, \dots, j_{|\mathcal{S}_2|}\}$ is an arbitrary set of indexes $j_{q}\in \mathcal{I}_{m_q}$ for $m_q\in \mathcal{S}_2$ and $\mathcal{V}_1(\mathcal{S}_1)= \{\ell_{1}, \ell_{2}, \dots, \ell_{|\mathcal{S}_1|}\}$ is an arbitrary set of indexes $\ell_{p}\in \mathcal{I}_{r_p}$ for $r_p\in \mathcal{S}_1=\{r_1, r_2, \dots, r_{|\mathcal{S}_1|}\}$. Moreover, let $\mathcal{I}_{k}$ be the index set of all signals that are seen as interference at receiver $R\textsf{x}_{k}$ and $\mathcal{T}_k=\{k\}\cup\mathcal{I}_k$.
\begin{theorem} \label{Theorem4UpperBound1}
The symmetric SDoF for the $K$-user TIM-CM network is upper bounded as follows:
\begin{align}
&\textup{\textsf{SDoF}}^{\textup{\textsf{sym}}}\nonumber\\& \leq \min_{\mathcal{S}1, \mathcal{S}_2}\frac{|\mathcal{S}_1|+\mu (\mathcal{S}_1, \mathcal{S}_2)}{|\mathcal{S}_1|+\sum_{p=1}^{|\mathcal{S}_1|}\mathbf{1} (\mathcal{I}_{r_p}\neq \emptyset)+\sum_{q=1}^{|\mathcal{S}_2|}\mathbf{1} (\mathcal{I}_{m_q}\neq \emptyset)}.
\end{align}
\end{theorem}}

As detailed in Appendix \ref{appendix:Theorem4UpperBound1Proof}, the upper bound in Theorem \ref{Theorem4UpperBound1}  is based on the analysis of the received signal structures at all $K$ receivers and their respective interference components. To prove the Theorem, we first show how to get a nontrivial upper bound on secrecy rate by discarding all but one interference components from the received signal. Next,  we show that the secrecy rate of the resulting signal can be bounded in two different ways, using which we then deduce the desired upper bound. We use Examples \ref{3.C} and \ref{4.C} to highlight the principles behind  Theorem \ref{Theorem4UpperBound1}.

\begin{manualexample}{3.C}\label{3.C}($6$-user TIM-CM network: Upper bound on symmetric SDoF): \normalfont
Let us revisit the $6$-user TIM-CM network in Fig. \ref{fig:Fig2}.  Our aim is to show that we can obtain a symmetric SDoF upper bound of $\frac{1}{3}$.  To this end, we can now proceed as follows: Let us pick two disjoint subsets of receivers $\mathcal{S}_1=\{2, 4, 6\}$ and $\mathcal{S}_2=\{1, 3, 5\}$.  We note here that the users in $\mathcal{S}_1$ have interference sets  $\mathcal{I}_2=\{1, 3, 4\}$, $\mathcal{I}_4=\{3, 5\}$ and $\mathcal{I}_6=\{3\}$, whereas the users in $\mathcal{S}_2$ have interference sets $\mathcal{I}_1=\{4\}$, $\mathcal{I}_3=\{1\}$ and $\mathcal{I}_5=\{3\}$, respectively. Therefore, $\sum_{p=1}^{|\mathcal{S}_1|}\mathbf{1} (\mathcal{I}_{r_p}\neq \emptyset)=\sum_{r_p\in\mathcal{S}_1=\{2, 4, 6\}}\mathbf{1} (\mathcal{I}_{r_p}\neq \emptyset)=3$ and $\sum_{q=1}^{|\mathcal{S}_2|}\mathbf{1} (\mathcal{I}_{m_q}\neq \emptyset)=\sum_{m_q\in\mathcal{S}_2=\{1, 3, 5\}}\mathbf{1} (\mathcal{I}_{m_q}\neq \emptyset)=3$. Moreover, we can, for example, pick the multi-set $\mathcal{U}(\mathcal{S}_2)$  and the two sets $\mathcal{V}_2(\mathcal{S}_2)$ and $\mathcal{V}_1(\mathcal{S}_1)$ as follows:
\begin{itemize}
\item $\mathcal{U}(\mathcal{S}_2)=\{\mathcal{T}_1, \mathcal{T}_3, \mathcal{T}_5\}=\{\{1, 4\}, \{1, 3\},\{3, 5\}\}=\{1, 1, 3, 3, 4, 5\}$.
\item  $\mathcal{V}_2(\mathcal{S}_2)=\{4, 1, 3\}$, where $4\in\mathcal{I}_1$, $1\in\mathcal{I}_3$, and $3\in\mathcal{I}_5$.  
\item  $\mathcal{V}_1(\mathcal{S}_1)=\{1, 5, 3\}$, where $1\in\mathcal{I}_2$, $5\in\mathcal{I}_4$, and $3\in\mathcal{I}_6$.  
\end{itemize}
Thus, using its definition, we can calculate $\mu (\mathcal{S}_1, \mathcal{S}_2)$ as follows:
\begin{align*}
\mu (\mathcal{S}_1, \mathcal{S}_2)&= |(\mathcal{U}(\mathcal{S}_2) \setminus  \mathcal{V}_2(\mathcal{S}_2)) \setminus \mathcal{V}_1(\mathcal{S}_1)| \\&\hspace{-15pt}= |(\{1, 1, 3, 3, 4, 5\}\setminus \{4, 1, 3\})\setminus \{1, 5, 3\}|
= 0.
\end{align*}
Therefore, using the above, we obtain the following upper bound on symmetric SDoF:
\begin{align*}
     &\textup{\textsf{SDoF}}^{\textup{\textsf{sym}}}\\ &\leq  \min_{\mathcal{S}1, \mathcal{S}_2}\frac{|\mathcal{S}_1|+\mu (\mathcal{S}_1, \mathcal{S}_2)}{|\mathcal{S}_1|+\sum_{p=1}^{|\mathcal{S}_1|}\mathbf{1} (\mathcal{I}_{r_p}\neq \emptyset)+\sum_{q=1}^{|\mathcal{S}_2|}\mathbf{1} (\mathcal{I}_{m_q}\neq \emptyset)}
    \\&\leq \frac{1}{3}.
\end{align*}
\end{manualexample}

\begin{manualexample}{4.C}\label{4.C}($5$-user TIM-CM network: Upper bound on symmetric SDoF): \normalfont
Let us revisit the $5$-user TIM-CM network in Fig. \ref{fig:Fig3}.  Our aim is to show that we can obtain a symmetric SDoF upper bound of $\frac{1}{2}$, which is thus close to the achievable lower bound of $\frac{2}{5}$ that we derived in Example \ref{4.B} for the same topology (via secure independent sets).  To this end, we can now proceed as follows: Let us pick two disjoint subsets of receivers $\mathcal{S}_1=\{1, 4\}$ and $\mathcal{S}_2=\{2, 3\}$.  We note here that the users in $\mathcal{S}_1$ have interference sets $\mathcal{I}_1=\{3, 5\}$ and $\mathcal{I}_4=\{1, 2\}$, whereas the users in $\mathcal{S}_2$ have interference sets $\mathcal{I}_2=\emptyset$ and $\mathcal{I}_3=\emptyset$, respectively. Therefore, $\sum_{p=1}^{|\mathcal{S}_1|}\mathbf{1} (\mathcal{I}_{r_p}\neq \emptyset)=\sum_{r_p\in\mathcal{S}_1=\{1, 4\}}\mathbf{1} (\mathcal{I}_{r_p}\neq \emptyset)=2$ and $\sum_{q=1}^{|\mathcal{S}_2|}\mathbf{1} (\mathcal{I}_{m_q}\neq \emptyset)=\sum_{m_q\in\mathcal{S}_2=\{2, 3\}}\mathbf{1} (\mathcal{I}_{m_q}\neq \emptyset)=0$. Moreover, we can, for example, pick the multi-set $\mathcal{U}(\mathcal{S}_2)$  and the two sets $\mathcal{V}_2(\mathcal{S}_2)$ and $\mathcal{V}_1(\mathcal{S}_1)$ as follows:
\begin{itemize}
\item $\mathcal{U}(\mathcal{S}_2)=\{\mathcal{T}_2, \mathcal{T}_3\}=\{\{2\}, \{3\}\}=\{2, 3\}$.
\item  $\mathcal{V}_2(\mathcal{S}_2)=\{3, 2\}$, where $3\in\mathcal{I}_1$ and $2\in\mathcal{I}_4$. 
\item  $\mathcal{V}_1(\mathcal{S}_1)=\emptyset$, because  $\mathcal{I}_2=\emptyset$ and $\mathcal{I}_3=\emptyset$. 
\end{itemize}
Thus, using its definition, we can calculate $\mu (\mathcal{S}_1, \mathcal{S}_2)$ as follows:
\begin{align*}
\mu (\mathcal{S}_1, \mathcal{S}_2)&= |(\mathcal{U}(\mathcal{S}_2) \setminus  \mathcal{V}_2(\mathcal{S}_2)) \setminus \mathcal{V}_1(\mathcal{S}_1)|
\\&= |(\{2, 3\}\setminus \{3, 2\})\setminus \emptyset| 
= 0.
\end{align*}
Therefore, using the above, we obtain the following upper bound on symmetric SDoF:
\begin{align*}
     &\textup{\textsf{SDoF}}^{\textup{\textsf{sym}}} \\&\leq  \min_{\mathcal{S}1, \mathcal{S}_2}\frac{|\mathcal{S}_1|+\mu (\mathcal{S}_1, \mathcal{S}_2)}{|\mathcal{S}_1|+\sum_{p=1}^{|\mathcal{S}_1|}\mathbf{1} (\mathcal{I}_{r_p}\neq \emptyset)+\sum_{q=1}^{|\mathcal{S}_2|}\mathbf{1} (\mathcal{I}_{m_q}\neq \emptyset)}
    \\&\leq \frac{1}{2}.
\end{align*}
\end{manualexample}

We next present our second upper bound in Theorem \ref{Theorem5UpperBound2}, which we later show to be tighter for some examples compared to the results of Theorem \ref{Theorem4UpperBound1}. The main difference between this upper bound and the first one is that  we now exploit the potential presence of \textit{fractional signal generators} within the network, a concept that was introduced in  \cite{NaderializadehAvestimehr, Gesbert} for the nonsecrecy constrained TIM problem. This is a paradigm where the interference signal component of the received signal at some receiver can generate either a statistically equivalent version of the received signal at some other receiver or its cleaner version, i.e., with less interference. To this end, let us first define a fractional signal generator. 

\begin{definition}\label{FractionalGenerators} (Fractional signal generator):  \textcolor{black}{Let $\mathcal{I}_{k}$ be the index set of all signals that are seen as interference at receiver $R\textsf{x}_{k}$ and $\mathcal{T}_k=\{k\}\cup\mathcal{I}_k$}. We say that a receiver $R\textsf{x}_k$ is a fractional signal generator of  $\mathcal{G}_k$, if $\mathcal{G}_k\subseteq \mathcal{I}_k$ and there exists a permutation $\Pi^{(k)}=(\Pi_1, \Pi_2, \dots \Pi_{|\mathcal{G}_k|})$ of the set $\mathcal{G}_k$ such that $\mathcal{G}_k\setminus\{\Pi_1, \Pi_2, \dots, \Pi_i\}\subseteq \mathcal{I}_{\Pi_i}$, for $i=1, 2, \dots, |\mathcal{G}_k|$.
\end{definition}

Consider the $6$-user network topology depicted in Fig. \ref{fig:Fig2}. \textcolor{black}{In this network, $R\textsf{x}_2$ receives signals from Transmitters $1$, $2$, $3$, and $4$. This fact can be equivalently denoted as $\mathcal{T}_2=\{2\}\cup\mathcal{I}_2=\{1, 2, 3, 4\}$. The main idea behind Definition \ref{FractionalGenerators} is that, by following a not necessarily unique successive order, $R\textsf{x}_2$ can generate the signal index set $\mathcal{G}_2=\{3, 4\}$ because $\mathcal{G}_2\subseteq\mathcal{I}_2=\{1, 3, 4\}$. More specifically, we next demonstrate why $R\textsf{x}_2$ is called a fractional signal generator for the signal index set $\mathcal{G}_2$:}
\begin{itemize}
\item \textcolor{black}{First, we show how $R\textsf{x}_2$ can generate the second element of $\mathcal{G}_2$, i.e., $4$. Because the order of signal generation matters, we denote this fact by $(\{3, 4\}\setminus \{4\})=\{3\}\subseteq \mathcal{I}_4=\{3, 5\}$. This indicates that by first generating $4$, we only remain with $\{3\}$ which is a subset of the interference set $\mathcal{I}_4=\{3,5\}$ at $R\textsf{x}_4$. As we will show in the second step below, this order of signal generation will enable the generation of the first element of $\mathcal{G}_2$, i.e., $3$. To this end, assume that the interference component $X^{n}_1$ seen from $T\textsf{x}_1$ is provided by a genie to $R\textsf{x}_2$. In turn, this allows $R\textsf{x}_2$ to discard this interference signal component from its composite interference component in order to generate a signal $\tilde{Y}^{n}_4$ with components $\{3, 4\}$, which is an enhanced version of the original signal $Y^{n}_4$, whose components are $\mathcal{T}_4=\{3, 4, 5\}$. In other words, the newly generated signal $\tilde{Y}^{n}_4$ is the enhanced version $Y^{n}_4$ because $\tilde{Y}^{n}_4$ has only one interference term, i.e., sees signals from $T\textsf{x}_3$, whereas  the original signal $Y^{n}_4$ has two interference terms, i.e., sees signals from $T\textsf{x}_3$ and $T\textsf{x}_5$. Therefore, if $R\textsf{x}_4$ can decode its intended message $W_4$ from the original signal $Y^{n}_4$, then its enhanced version $\tilde{R}\textsf{x}_4$ can also decode $W_4$ from  $\tilde{Y}^{n}_4$.}

\item \textcolor{black}{Second, we show how further processing of the signal at $R\textsf{x}_2$ can generate the first element of the $\mathcal{G}_2$, i.e., $3$. We denote this by $(\{3\}\setminus \{3\})=\emptyset$ to indicate that $3$ is the element of $\mathcal{G}_2$ which is generated the last. The intuition behind this second step is as follows: From the interference subset $\{3, 4\}$ at $R\textsf{x}_2$ (i.e., having already removed the interference term from  $T\textsf{x}_1$ in the above step), one can next discard the (genie provided) interference component seen from $T\textsf{x}_4$ in order to generate a signal $\tilde{Y}^{n}_3$ with  a single component $\{3\}$, which is the enhanced version of the original signal $Y^{n}_3$, whose components are $\mathcal{T}_3=\{1, 3\}$. Therefore, if $R\textsf{x}_3$ can decode its intended message $W_3$ from the original signal $Y^{n}_3$, then $\tilde{R}\textsf{x}_3$ can also decode $W_3$ from  $\tilde{Y}^{n}_3$.}
\end{itemize}

Therefore, successive decodability of $W_3$ and $W_4$, from the signals generated by the interference signal at $R\textsf{x}_2$, follows the sequence $\Pi^{(2)}=(4, 3)$, which satisfies Definition \ref{FractionalGenerators}. 

The Theorem below states our second upper bound on the symmetric SDoF and its proof is provided in Appendix \ref{appendix:Theorem5UpperBound2Proof}. Here, we first provide a few important definitions before stating the Theorem. 

\textcolor{black}{Let $\mathcal{S}_1$ and $\mathcal{S}_2$ be two arbitrary and disjoint subsets of receivers, i.e., such that $\mathcal{S}_1\cap \mathcal{S}_2=\emptyset$  and $\mathcal{S}_1, \mathcal{S}_2 \subseteq \{1, 2, \dots, K\}$. Suppose  $\mathcal{S}_2=\{m_1, m_2, \dots, m_{|\mathcal{S}_2|}\}$ and let  $\mathcal{U}(\mathcal{S}_2)=\{\mathcal{T}_{m_q}\}_{q=1}^{|\mathcal{S}_2|}$ be  the collection of all index sets of transmitters whose signals are respectively seen at the receivers in $\mathcal{S}_2$. Consider a collection of sets $\mathcal{G}^{\mathcal{S}_2}=\{\mathcal{G}^{(2)}_{m_1}, \mathcal{G}^{(2)}_{m_2}, \dots, \mathcal{G}^{(2)}_{m_{|\mathcal{S}_2|}}\}$ such that Receiver $m_q$ is a generator of $\mathcal{G}^{(2)}_{m_q}\subseteq \mathcal{I}_{m_q}$ for all $q\in\{1, 2, \dots, |\mathcal{S}_2|\}$, with a permutation $\Pi^{(m_q)}=\{\Pi^{(m_q)}_1, \Pi^{(m_q)}_2, \dots \Pi^{(m_q)}_{|\mathcal{G}^{(2)}_{m_q}|}\}$ according to Definition \ref{FractionalGenerators}. Then, let $\mathcal{\tilde{V}}(\mathcal{S}_2)= \{\Pi^{(m_1)}_{|\mathcal{G}^{(2)}_{m_1}|}, \Pi^{(m_2)}_{|\mathcal{G}^{(2)}_{m_2}|}, \dots \Pi^{(m_{|\mathcal{S}_2|})}_{|\mathcal{G}^{(2)}_{m_{|\mathcal{S}_2|}}|}\}$ be an arbitrary set of indexes $ \Pi^{(m_q)}_{|\mathcal{G}^{(2)}_{m_q}|}\in \mathcal{I}_{m_q}$ for $m_q\in \mathcal{S}_2$. Similarly, suppose $\mathcal{S}_1=\{r_1, r_2, \dots r_{|\mathcal{S}_1|}\}$ and consider a collection of sets $\mathcal{G}^{\mathcal{S}_1}=\{\mathcal{G}^{(1)}_{r_1}, \mathcal{G}^{(1)}_{r_2}, \dots, \mathcal{G}^{(1)}_{r_{|\mathcal{S}_1|}}\}$ such that Receiver $r_p$ is a generator of $\mathcal{G}^{(1)}_{r_p}\subseteq \mathcal{I}_{r_p}$ for all $p\in\{1, 2, \dots, |\mathcal{S}_1|\}$, with a permutation $\Pi^{(r_p)}=\{\Pi^{(r_p)}_1, \Pi^{(r_p)}_2, \dots \Pi^{(r_p)}_{|\mathcal{G}^{(1)}_{r_p}|}\}$ according to Definition \ref{FractionalGenerators}. Then, let $\mathcal{\tilde{V}}_1(\mathcal{S}_1))= \{\Pi^{(r_1)}_{|\mathcal{G}^{(1)}_{r_1}|}, \Pi^{(r_2)}_{|\mathcal{G}^{(1)}_{r_2}|}, \dots \Pi^{(r_{|\mathcal{S}_1|})}_{|\mathcal{G}^{(1)}_{r_{|\mathcal{S}_1|}}|}\}$ be an arbitrary set of indexes $ \Pi^{(r_p)}_{|\mathcal{G}^{(1)}_{r_p}|}\in \mathcal{I}_{r_p}$ for $r_p\in \mathcal{S}_1$. Define $\tilde{\mu} (\mathcal{S}_1, \mathcal{S}_2) \triangleq |(\mathcal{U}(\mathcal{S}_2) \setminus  \mathcal{\tilde{V}}_2(\mathcal{S}_2)) \setminus \mathcal{\tilde{V}}_1(\mathcal{S}_1))|$ and let $\mathcal{I}_{k}$ be the index set of all signals that are seen as interference at receiver $R\textsf{x}_{k}$ and $\mathcal{T}_k=\{k\}\cup\mathcal{I}_k$.
\begin{theorem} \label{Theorem5UpperBound2}
The symmetric SDoF for the $K$-user TIM-CM network is upper bounded as follows:
\begin{align}
&\textup{\textsf{SDoF}}^{\textup{\textsf{sym}}}\nonumber\\&  \leq \min_{(\mathcal{S}1, \mathcal{S}_2)} \min_{(\mathcal{G}^{\mathcal{S}_1}, \mathcal{G}^{\mathcal{S}_2})}\frac{|\mathcal{S}_1|+\tilde{\mu} (\mathcal{S}_1, \mathcal{S}_2)}{|\mathcal{S}_1|+\sum_{p=1}^{|\mathcal{S}_1|}|\mathcal{G}^{(1)}_{r_p}|+\sum_{q=1}^{|\mathcal{S}_2|}|\mathcal{G}^{(2)}_{m_q}|}.
\end{align}
\end{theorem}}

As detailed in Appendix \ref{appendix:Theorem5UpperBound2Proof}, the upper bound in Theorem \ref{Theorem5UpperBound2} is also based on the analysis of the received signal structures at all $K$ receivers and their respective interference components with regard to the underlying arbitrary topology.  To prove the Theorem, we first show how to get a nontrivial bound on secrecy rate by discarding all the interference components in the received signal,  except for those that can be sequentially generated from the interference signal according to Definition \ref{FractionalGenerators}. Next, we show that the secrecy rate of the resulting signal can be bounded using two types of bounds from which we are then able to deduce the desired upper bound. We use Example \ref{3.D} to highlight the principles behind Theorem \ref{Theorem5UpperBound2}.

\begin{manualexample}{3.D}\label{3.D} ($6$-user TIM-CM network:  Upper bound on symmetric SDoF): \normalfont
Let us revisit the $6$-user TIM-CM network in Fig. \ref{fig:Fig2}.  Our aim is to show that we can obtain a symmetric SDoF upper bound of $\frac{1}{4}$ which is a tighter bound compared to the $\frac{1}{3}$ obtained in Example \ref{3.C}. We note here that  $\frac{1}{4}$ is close to the achievable lower bound of $\frac{1}{5}$ that we derived in Example \ref{3.A}  (via secure partition) and \ref{3.B} (via secure sets) for the same network topology. 

We can now proceed as follows: Let us, for example, pick two disjoint subsets receivers $\mathcal{S}_1=\{2\}$ and $\mathcal{S}_2=\{3\}$.  According to Definition \ref{FractionalGenerators} of fractional signal generators, Receiver $2\in\mathcal{S}_1$ (which has the interference set $\mathcal{I}_2=\{1, 3, 4\}$), can generate the following (not necessarily unique) set $\mathcal{G}^{(1)}_2=\{3, 4\}$, with a permutation sequence $\Pi^{(2)}=(4, 3)$. Therefore, this leads to the multi-set $\mathcal{G}^{\mathcal{S}_1}=\mathcal{G}^{(1)}_2=\{3, 4\}$, which is the collection of all signal sets generated by receivers in $\mathcal{S}_1$.  Following a similar analogy, Receiver $3\in\mathcal{S}_2$ (which has the interference set $\mathcal{I}_3=\{1\}$) can only generate $\mathcal{G}^{(2)}_3=\{1\}$. Therefore, $\mathcal{G}^{\mathcal{S}_2}=\mathcal{G}^{(2)}_3=\{1\}$.  This implies that $\sum_{p=1}^{|\mathcal{S}_1|}|\mathcal{G}^{(1)}_{r_p}|=\sum_{r_p\in\mathcal{S}_1=\{2\}}|\mathcal{G}^{(1)}_{r_p}|=2$, whereas  $\sum_{q=1}^{|\mathcal{S}_2|}|\mathcal{G}^{(2)}_{m_q}|=\sum_{m_q\in\mathcal{S}_2=\{3\}}|\mathcal{G}^{(2)}_{m_q}|=1$. Moreover, we can, for example, pick the multi-set $\mathcal{U}(\mathcal{S}_2)$  and the two sets $\mathcal{\tilde{V}}(\mathcal{S}_2)$ and $\mathcal{\tilde{V}}_1(\mathcal{S}_1)$ as follows:
\begin{itemize}
\item $\mathcal{U}(\mathcal{S}_2)=\{\mathcal{T}_3\}=\{1, 3\}$.
\item  $\mathcal{\tilde{V}}(\mathcal{S}_2)=\{\Pi^{(m_{|\mathcal{S}_2|})}_{|\mathcal{G}^{(2)}_{m_{|\mathcal{S}_2|}}|}\}=\{\Pi^{(3)}_{|\mathcal{G}^{(2)}_3|}\}=\{1\}$, where $1\in\mathcal{G}^{(2)}_3\subseteq\mathcal{I}_3$. 
\item  $\mathcal{\tilde{V}}_1(\mathcal{S}_1))=\{\Pi^{(r_{|\mathcal{S}_1|})}_{|\mathcal{G}^{(1)}_{r_{|\mathcal{S}_1|}}|}\}=\{\Pi^{(2)}_{|\mathcal{G}^{(1)}_2|}\}=\{3\}$, where $3\in\mathcal{G}^{(1)}_2\subseteq\mathcal{I}_2$. 
\end{itemize}
Thus, using its definition, we can calculate $\tilde{\mu} (\mathcal{S}_1, \mathcal{S}_2)$ as follows:
\begin{align*}
\tilde{\mu} (\mathcal{S}_1, \mathcal{S}_2)&=|(\mathcal{U}(\mathcal{S}_2) \setminus  \mathcal{\tilde{V}}_2(\mathcal{S}_2)) \setminus \mathcal{\tilde{V}}_1(\mathcal{S}_1))|
\\&=|(\{1, 3\}\setminus \{1\})\setminus \{3\}|
= 0.
\end{align*}
Therefore, using the above, we obtain the following upper bound on symmetric SDoF:
\begin{align*}
     &\textup{\textsf{SDoF}}^{\textup{\textsf{sym}}} \\&\leq  \min_{(\mathcal{S}1, \mathcal{S}_2)} \min_{(\mathcal{G}^{\mathcal{S}_1}, \mathcal{G}^{\mathcal{S}_2})}\frac{|\mathcal{S}_1|+\tilde{\mu} (\mathcal{S}_1, \mathcal{S}_2)}{|\mathcal{S}_1|+\sum_{p=1}^{|\mathcal{S}_1|}|\mathcal{G}^{(1)}_{r_p}|+\sum_{q=1}^{|\mathcal{S}_2|}|\mathcal{G}^{(2)}_{m_q}|}
     \\&\leq \frac{1}{4}.\nonumber
\end{align*}
\end{manualexample}

We now remark the following regarding the $5$-user example network topology of Fig. \ref{fig:Fig3}.
\begin{remark} \label{RemarkAboutFig3} Since the topology of Fig. \ref{fig:Fig3} that we used in Example \ref{4.C} does not contain nontrivial fractional signal generators (i.e., where some of the receivers can generate more than one signal), applying the principles of Theorem \ref{Theorem5UpperBound2} would also lead to the symmetric SDoF upper bound of $\frac{1}{2}$.
\end{remark}

\subsection{Case Studies: Settling Symmetric SDoF for $2$-user and $3$-user Topologies}\label{SectionCaseStudies}
\begin{figure*}[!t]
	\begin{center}
		\includegraphics[width=0.7\textwidth]{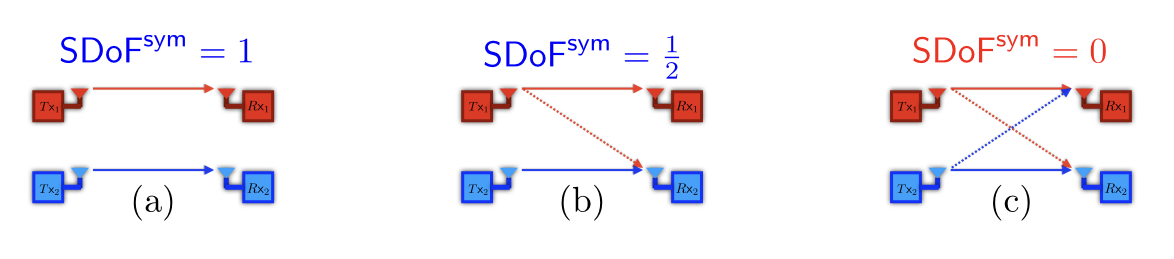}
	\vspace{0pt}
		\caption{Case study A: Application of current results to all $2$-user TIM-CM networks.}
		\label{fig:Fig5}
	\end{center}
	\vspace{0pt}
\end{figure*} 

\begin{figure*}[!t]
	\begin{center}
		\includegraphics[width=0.7\textwidth]{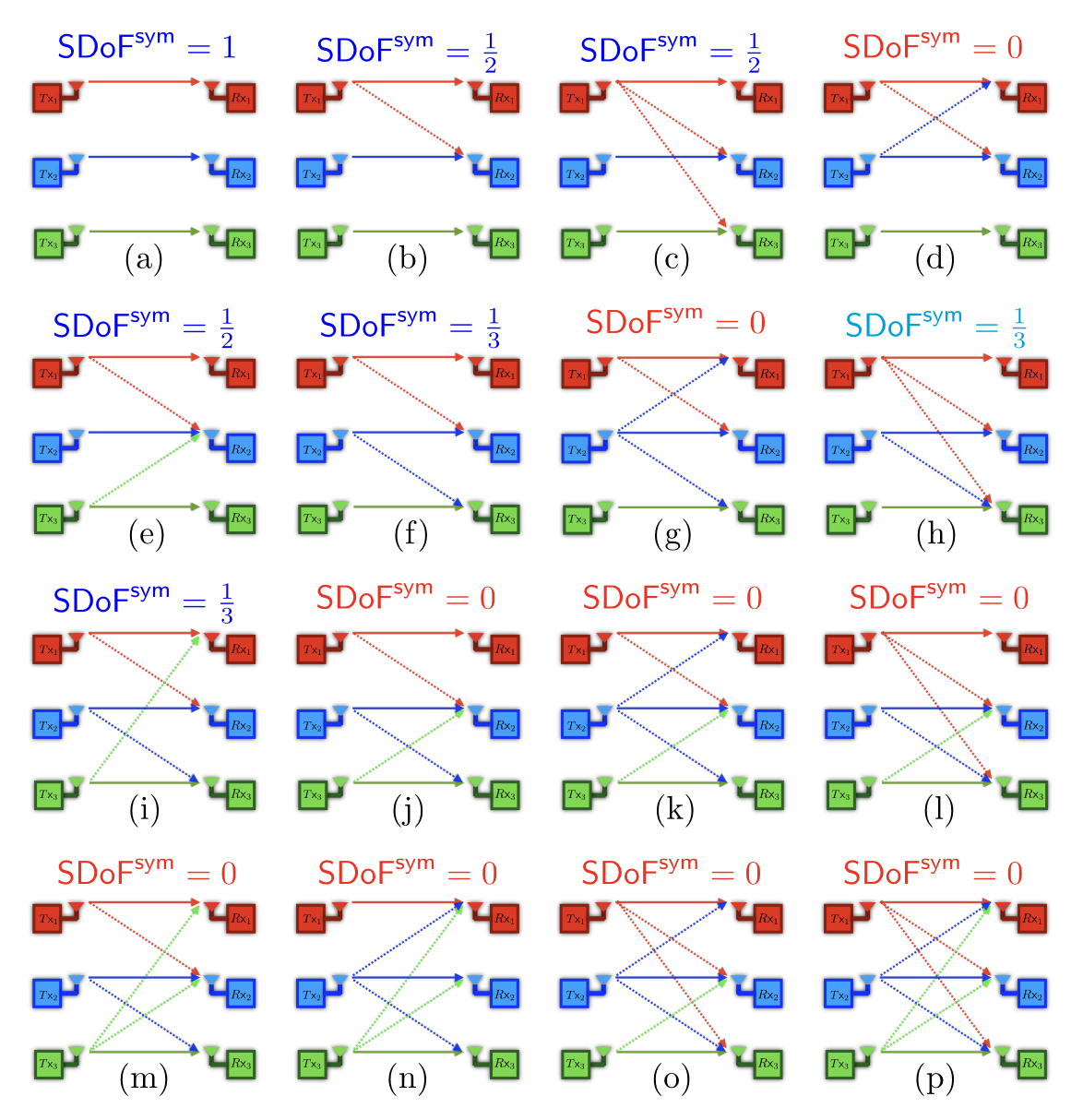}
	\vspace{0pt}
		\caption{Case study B: Application of current results to all $3$-user TIM-CM networks.}
		\label{fig:Fig6}
	\end{center}
	\vspace{0pt}
\end{figure*} 

In this Section, we present two case studies: In the first one, we study the $2$-user TIM-CM networks and, in the second one, we study the $3$-user TIM-CM networks.  To this end, we enumerate all possible (non-isomorphic) topologies for the $2$-user and $3$-user TIM-CM networks as respectively depicted in Fig. \ref{fig:Fig5} and Fig. \ref{fig:Fig6}. We then apply the result of Theorem \ref{TheoremSDoFSymmetry} to eliminate all topologies where positive symmetric SDoF is not feasible.  To find the outer bounds for the remaining networks, i.e., where positive symmetric SDoF is feasible, we apply the upper bound results of Theorems  \ref{Theorem4UpperBound1} and \ref{Theorem5UpperBound2}. We then verify the optimality of symmetric SDoF by showing that these upper bounds match the lower bounds by applying the achievable result of Theorem \ref{TheoremGeneralTIM-CM-SecurePartition} or \ref{TheoremGeneralTIM-CMSDoFNonGreedy}, thereby settling the $K$-user TIM-CM problem for $K\leq 3$.

Our results can be readily applied to calculate the exact symmetric SDoF for all (nonisomorphic)  $2$-user TIM-CM topologies as respectively given in Fig. \ref{fig:Fig5} (a)-(c).  To avoid repetitiveness, here, we only show how to find the exact symmetric SDoF for all $3$-user TIM-CM topologies. To this end, we first consider all the $3$-user TIM-CM networks for which symmetric SDoF is zero, i.e., the topologies that satisfy the conditions of Theorem \ref{TheoremSDoFSymmetry}. As indicated in Fig. \ref{fig:Fig6}, symmetric SDoF is zero for networks  in Fig. \ref{fig:Fig6} (d), (g), (j), (k), (l), (m), (n), (o), and (p). As an illustrative example, we provide calculation of the exact symmetric SDoF for the network of Fig. \ref{fig:Fig6} (o). The calculations for the other topologies (for which symmetric SDoF is zero) follow a similar analogy and are thus omitted here to avoid repetition. 

Consider the network of Fig. \ref{fig:Fig6} (o). At receiver $R\textsf{x}_2$, we have $\mathcal{I}_2=\{1, 3\}$ and $\mathcal{T}_2=\{2\}\cup \mathcal{I}_2=\{1, 2, 3\}$. At receiver $R\textsf{x}_3$, we have $\mathcal{I}_3=\{1, 2\}$ and $\mathcal{T}_3=\{3\}\cup \mathcal{I}_3=\{1, 2, 3\}$. We note here that for the two transmit-receive pairs $T\textsf{x}_2-R\textsf{x}_2$ and $T\textsf{x}_3-R\textsf{x}_3$, the conditions of Theorem \ref{TheoremSDoFSymmetry} are satisfied because $(i)$ $2\in\mathcal{I}_3$, $(ii)$ $3\in\mathcal{I}_2$, and $(iii)$ $\mathcal{I}_3\setminus\{2\}=\mathcal{I}_2\setminus\{3\}$. Since the conditions of Theorem \ref{TheoremSDoFSymmetry} are satisfied, then symmetric SDoF is zero for this network.

Second, we consider all the $3$-user TIM-CM networks for which SDoF is nonzero, i.e., the topologies that do not satisfy the conditions of Theorem \ref{TheoremSDoFSymmetry}. As indicated in Fig. \ref{fig:Fig6}, symmetric SDoF is nonzero for networks  in Fig. \ref{fig:Fig6} (a), (b), (c), (e), (f), (h), and (i). It can easily be verified that for each of these topologies, except for the topology of Fig. \ref{fig:Fig6} (h), the lower bound achieved using the secure partition scheme of Theorem \ref{TheoremGeneralTIM-CM-SecurePartition} matches the upper bound obtained through the result of Theorem \ref{Theorem4UpperBound1}, thereby leading to their exact symmetric SDoF. For the network of Fig. \ref{fig:Fig6} (h), we apply the result of Theorem \ref{TheoremGeneralTIM-CM-SecurePartition} to find the lower bound. However, we use the result of Theorem \ref{Theorem5UpperBound2} to find its matching upper bound instead of Theorem \ref{Theorem4UpperBound1} because the latter leads to a loose bound.

As illustrative examples, we provide calculations of exact symmetric SDoF for the networks of Fig. \ref{fig:Fig6} (e) and (h). The calculations for the other topologies (for which symmetric SDoF is nonzero) follow a similar analogy and are thus omitted here to avoid repetition. 

Consider the $3$-user network of Fig. \ref{fig:Fig6} (e). At receiver $R\textsf{x}_1$, we have $\mathcal{I}_1=\emptyset$ and $\mathcal{T}_1=\{1\}\cup \mathcal{I}_1=\{1\}$.  At $R\textsf{x}_2$, we have $\mathcal{I}_2=\{1, 3\}$ and $\mathcal{T}_2=\{2\}\cup \mathcal{I}_2=\{1, 2, 3 \}$.  At $R\textsf{x}_3$, we have $\mathcal{I}_3=\emptyset$ and $\mathcal{T}_3=\{3\}\cup \mathcal{I}_3=\{3\}$.  Then, we can apply the scheme of Theorem \ref{TheoremGeneralTIM-CM-SecurePartition} by creating a secure partition $\mathcal{P}=\{\mathcal{P}_1=\{1, 3\}, \mathcal{P}_2=\{2\}\}$ and scheduling the transmitters in  $\mathcal{P}_1$ to  send their respective messages over the first time slot (while the second transmitter $T\textsf{x}_2$ acts as a cooperative jammer by sending an artificial noise symbol over the first time slot, i.e., $\mathcal{C}_1=\{2\}$). We schedule the transmitter in  $\mathcal{P}_2$ to send its message over the second time slot (while $T\textsf{x}_1$ and $T\textsf{x}_3$  remain silent over the second time slot, i.e., $\mathcal{C}_2=\emptyset$). Therefore, we achieve a symmetric SDoF of $\frac{1}{|\mathcal{P}|}=\frac{1}{2}$ per user.

To find the upper bound on symmetric SDoF for  the network of Fig. \ref{fig:Fig6} (e), we apply the result of Theorem \ref{Theorem4UpperBound1} as as shown next. Let us pick two disjoint subsets of receivers $\mathcal{S}_1=\{2\}$ and $\mathcal{S}_2=\{3\}$.  We note here that the user in $\mathcal{S}_1$ has interference set $\mathcal{I}_2=\{1, 3\}$, whereas the user in $\mathcal{S}_2$ has interference set $\mathcal{I}_3=\emptyset$. Therefore, $\sum_{p=1}^{|\mathcal{S}_1|}\mathbf{1} (\mathcal{I}_{r_p}\neq \emptyset)=\sum_{r_p\in\mathcal{S}_1=\{2\}}\mathbf{1} (\mathcal{I}_{r_p}\neq \emptyset)=1$ and $\sum_{q=1}^{|\mathcal{S}_2|}\mathbf{1} (\mathcal{I}_{m_q}\neq \emptyset)=\sum_{m_q\in\mathcal{S}_2=\{3\}}\mathbf{1} (\mathcal{I}_{m_q}\neq \emptyset)=0$. Moreover, we can, for example, pick the multi-set $\mathcal{U}(\mathcal{S}_2)$  and the two sets $\mathcal{V}_2(\mathcal{S}_2)$ and $\mathcal{V}_1(\mathcal{S}_1)$ as follows: $\mathcal{U}(\mathcal{S}_2)=\{\mathcal{T}_3\}=\{3\}$. $\mathcal{V}_2(\mathcal{S}_2)=\emptyset$ because  $\mathcal{I}_3=\emptyset$. $\mathcal{V}_1(\mathcal{S}_1)=\{3\}$, where   $3\in\mathcal{I}_2=\{1, 3\}$. Thus, using its definition, we can calculate $\mu (\mathcal{S}_1, \mathcal{S}_2)$ as follows: $\mu (\mathcal{S}_1, \mathcal{S}_2)= |(\mathcal{U}(\mathcal{S}_2) \setminus  \mathcal{V}_2(\mathcal{S}_2)) \setminus \mathcal{V}_1(\mathcal{S}_1)|
= |(\{3\}\setminus \emptyset)\setminus \{3\}| 
= 0.$
Therefore, using the above, we obtain the following upper bound on symmetric SDoF:
\begin{align*}
     &\textup{\textsf{SDoF}}^{\textup{\textsf{sym}}} \\&\leq  \min_{\mathcal{S}1, \mathcal{S}_2}\frac{|\mathcal{S}_1|+\mu (\mathcal{S}_1, \mathcal{S}_2)}{|\mathcal{S}_1|+\sum_{p=1}^{|\mathcal{S}_1|}\mathbf{1} (\mathcal{I}_{r_p}\neq \emptyset)+\sum_{q=1}^{|\mathcal{S}_2|}\mathbf{1} (\mathcal{I}_{m_q}\neq \emptyset)}
    \\&\leq \frac{1}{2}.
\end{align*}
Thus, for this network, the upper bound matches the lower bound. This implies that its exact symmetric SDoF is $\frac{1}{2}$.

Consider the $3$-user network of Fig. \ref{fig:Fig6} (h). At receiver $R\textsf{x}_1$, we have $\mathcal{I}_1=\emptyset$ and $\mathcal{T}_1=\{1\}\cup \mathcal{I}_1=\{1\}$.  At $R\textsf{x}_2$, we have $\mathcal{I}_2=\{1\}$ and $\mathcal{T}_2=\{2\}\cup \mathcal{I}_2=\{1, 2\}$.  At $R\textsf{x}_3$, we have $\mathcal{I}_3=\{1, 3\}$ and $\mathcal{T}_3=\{3\}\cup \mathcal{I}_3=\{1, 2, 3\}$.  Then, we can apply the scheme of Theorem \ref{TheoremGeneralTIM-CM-SecurePartition} by creating a secure partition $\mathcal{P}=\{\mathcal{P}_1=\{1\}, \mathcal{P}_2=\{2\},  \mathcal{P}_3=\{3\}\}$. In turn, this implies that we can schedule the transmitter in  $\mathcal{P}_1$ to  send its message over the first time slot (while $T\textsf{x}_2$ acts as a cooperative jammer by sending an artificial noise symbol and $T\textsf{x}_3$ remains silent over the first time slot, i.e., $\mathcal{C}_1=\{2\}$) and schedule the transmitter in  $\mathcal{P}_2$ to send its message over the second time slot (while $T\textsf{x}_1$ remains silent and $T\textsf{x}_3$  acts as a cooperative jammer by sending an artificial noise symbol over the second time slot, i.e., $\mathcal{C}_2=\{3\}$). Over the third time slot, we schedule the transmitter in $\mathcal{P}_3$ to send its message (while $T\textsf{x}_1$ and $T\textsf{x}_2$ remain silent, i.e., $\mathcal{C}_3=\emptyset$). Therefore, we achieve a symmetric SDoF of $\frac{1}{|\mathcal{P}|}=\frac{1}{3}$ per user.

Finding the upper bound on symmetric SDoF for  the network of Fig. \ref{fig:Fig6} (h) by applying the result of Theorem \ref{Theorem4UpperBound1} leads to a symmetric SDoF of $\frac{1}{2}$, which is greater than the achieved above lower bound. Since this network contains fractional signal generators (as in Definition \ref{FractionalGenerators}),  we can improve the upper bound by applying the result of Theorem \ref{Theorem5UpperBound2}. As we show next, this leads to an upper bound that matches the above lower bound.

Let us, for example, pick two disjoint subsets of receivers $\mathcal{S}_1=\{3\}$ and $\mathcal{S}_2=\{1\}$.  According to Definition \ref{FractionalGenerators} of fractional signal generators, Receiver $3\in\mathcal{S}_1$ (which has the interference set $\mathcal{I}_2=\{1, 2\}$), can generate the set $\mathcal{G}^{(1)}_3=\{1, 2\}$, with a permutation sequence $\Pi^{(3)}=(2, 1)$. Therefore, this leads to the multi-set $\mathcal{G}^{\mathcal{S}_1}=\mathcal{G}^{(1)}_3=\{1, 2\}$, which is the collection of all signal sets generated by the receivers in $\mathcal{S}_1$.  Following a similar analogy, Receiver $1\in\mathcal{S}_2$ (which has the interference set $\mathcal{I}_1=\emptyset$) does not generate any signals sets, i.e., $\mathcal{G}^{(2)}_1=\emptyset$. Therefore, $\mathcal{G}^{\mathcal{S}_2}=\mathcal{G}^{(2)}_1=\emptyset$. This implies that $\sum_{p=1}^{|\mathcal{S}_1|}|\mathcal{G}^{(1)}_{r_p}|=\sum_{r_p\in\mathcal{S}_1=\{3\}}|\mathcal{G}^{(1)}_{r_p}|=2$, whereas  $\sum_{q=1}^{|\mathcal{S}_2|}|\mathcal{G}^{(2)}_{m_q}|=\sum_{m_q\in\mathcal{S}_2=\{1\}}|\mathcal{G}^{(2)}_{m_q}|=0$. Moreover, we can, for example, pick the multi-set $\mathcal{U}(\mathcal{S}_2)$  and the two sets $\mathcal{\tilde{V}}(\mathcal{S}_2)$ and $\mathcal{\tilde{V}}_1(\mathcal{S}_1))$ as follows: $\mathcal{U}(\mathcal{S}_2)=\{\mathcal{T}_1\}=\{1\}$. $\mathcal{\tilde{V}}(\mathcal{S}_2)=\{\Pi^{(m_{|\mathcal{S}_2|})}_{|\mathcal{G}^{(2)}_{m_{|\mathcal{S}_2|}}|}\}=\emptyset$ because $\mathcal{G}^{(2)}_1=\emptyset$. $\mathcal{\tilde{V}}_1(\mathcal{S}_1))=\{\Pi^{(r_{|\mathcal{S}_1|})}_{|\mathcal{G}^{(1)}_{r_{|\mathcal{S}_1|}}|}\}=\{\Pi^{(3)}_{|\mathcal{G}^{(1)}_3|}\}=\{1\}$, where $1\in\mathcal{G}^{(1)}_3=\mathcal{I}_3$. Thus, using its definition, we can calculate $\tilde{\mu} (\mathcal{S}_1, \mathcal{S}_2)$ as follows: $\tilde{\mu} (\mathcal{S}_1, \mathcal{S}_2)=|(\mathcal{U}(\mathcal{S}_2) \setminus  \mathcal{\tilde{V}}_2(\mathcal{S}_2)) \setminus \mathcal{\tilde{V}}_1(\mathcal{S}_1))|
=|(\{1\}\setminus \emptyset)\setminus \{1\}|
= 0.$ Therefore, using the above, we obtain the following upper bound on symmetric SDoF:
\begin{align*}
     &\textup{\textsf{SDoF}}^{\textup{\textsf{sym}}} \\&\leq  \min_{(\mathcal{S}1, \mathcal{S}_2)} \min_{(\mathcal{G}^{\mathcal{S}_1}, \mathcal{G}^{\mathcal{S}_2})}\frac{|\mathcal{S}_1|+\tilde{\mu} (\mathcal{S}_1, \mathcal{S}_2)}{|\mathcal{S}_1|+\sum_{p=1}^{|\mathcal{S}_1|}|\mathcal{G}^{(1)}_{r_p}|+\sum_{q=1}^{|\mathcal{S}_2|}|\mathcal{G}^{(2)}_{m_q}|}
     \\&\leq \frac{1}{3}.\nonumber
\end{align*}
Thus, for this network, the upper bound matches the lower bound. This implies that its exact symmetric SDoF is $\frac{1}{3}$. 


\section{Conclusion}\label{Conclusion}

In this paper, we studied the impact of network topology on symmetric SDoF for the $K$-user partially connected interference networks with confidential messages in the absence of CSIT, except for the knowledge of network topology and channel statistics, a setting we named the general TIM-CM problem. We first presented necessary and sufficient conditions for the feasibility of positive symmetric SDoF for any topology. We then presented inner bounds on symmetric SDoF for the TIM-CM problem. For the first lower bound, we introduced and demonstrated how to utilize the concept of secure partition. For the second one, we introduced and showed how to utilize the concept of secure independent sets. We also presented outer bounds on symmetric SDoF. First, we showed how to collectively manage the interference components of the received signals at all receivers in order to obtain the first nontrivial upper bound. Then, through our second outer bound, we demonstrated that the first one can be made tighter in the presence of fractional signal generators. We also provided several examples to  further clarify the principles behind each of our proposed results and their application in calculating symmetric SDoF. In the end, we presented case studies through which we characterized the optimal symmetric SDoF for all $K$-user TIM-CM topologies with $K\leq 3$.  To the best of our knowledge, this is the first work to do a comprehensive study of the general TIM-CM problem. 

This work opens up several avenues for future research directions. The main one is that of finding the exact symmetric SDoF for the general TIM-CM problem. Another avenue is to investigate the topology conditions under which the secure partition based transmission scheme is better than the secure independent sets scheme in terms of spectral efficiency or vice versa. 

\section*{Appendix}
\begin{appendices}

\section{Proof of Theorem \ref{TheoremSDoFSymmetry}}\label{appendic:Theorem1Proof}
In this Section, we present the proof of Theorem \ref{TheoremSDoFSymmetry}: Its converse is provided in Appendix \ref{appendix:Thoerem1Converse}, whereas the achievability is provided in Appendix \ref{appendix:Thoerem1Achievability}.
\subsection{Converse Proof}
\label{appendix:Thoerem1Converse}

Consider the $K$-user TIM-CM network and suppose there exists a pair of users $i, j\in\{1, 2, \dots, K\}$, for $i\neq j$, such that $(i)$ $i\in\mathcal{I}_{j}$, $(ii)$ $j \in\mathcal{I}_{i}$, and $(iii)$ $\mathcal{I}_{i}\setminus \{j\} \subseteq \mathcal{I}_{j}\setminus{\{i\}}$, i.e., satisfying the conditions of Theorem \ref{TheoremSDoFSymmetry}.  The  signals seen at receivers $R\textsf{x}_{i}$ and $R\textsf{x}_{j}$ at time $t$ can then be written as follows:
\begin{align}
\label{ReceivedSignalatTxi-Rxi}
&Y_i(t)=h_{ii}X_i(t)+\sum_{m\in\mathcal{I}_i}h_{im}X_m(t)+Z_i(t),\\
\label{ReceivedSignalatTxj-Rxj}
&Y_j(t)=h_{jj}X_j(t)+\sum_{m\in\mathcal{I}_j}h_{jm}X_m(t)+Z_j(t).
\end{align}
We next show that since the considered user pair $i, j$ satisfies the  conditions of Theorem \ref{TheoremSDoFSymmetry}, then positive symmetric SDoF is not feasible, i.e, it is zero. To this end, we now create a virtual receiver $\tilde{R}\textsf{x}_{j}$ which is a genie-enhanced version of the original receiver $R\textsf{x}_{j}$. Note that $\mathcal{I}_j=(\mathcal{I}_j\setminus\{i\})\cup\{i\}$, and by condition $(iii)$  $\mathcal{I}_i\setminus\{j\} \subseteq \mathcal{I}_j\setminus\{i\}$. Therefore, $(\mathcal{I}_i\setminus\{j\})\cup\{i\}\subseteq( \mathcal{I}_j\setminus\{i\})\cup \{i\}=\mathcal{I}_j$. Since $(\mathcal{I}_i\setminus\{j\})\cup\{i\}$ is a subset of $\mathcal{I}_j$, we can let receiver $\tilde{R}\textsf{x}_{j}$ be as shown next.  Consider the received signal at $R\textsf{x}_{j}$ and let a genie provide the interfering signals set $X^n_{\mathcal{I}_j\setminus \mathcal{I}_i}$ (i.e., corresponding to the interference index set $\mathcal{I}_j\setminus\mathcal{I}_i$) to receiver $R\textsf{x}_{j}$. This leads to an  enhanced receiver $\tilde{R}\textsf{x}_{j}$ with the following output at time $t$:
\begin{align}
\label{EnhcancedRxj}
&\tilde{Y}_j(t)=h_{jj}X_j(t)+\sum_{m\in(\mathcal{I}_i\setminus\{j\})\cup \{i\}}h_{jm}X_m(t)+Z_j(t).
\end{align}
We note here that, if receiver $R\textsf{x}_j$ can decode so can receiver $\tilde{R}\textsf{x}_{j}$ because $(\mathcal{I}_i\setminus\{j\})\cup\{i\}$ is a subset of $\mathcal{I}_j$. The above enhancement leads to a statistically equivalent receive pair $(R\textsf{x}_i, \tilde{R}\textsf{x}_{j})$ that can be represented by a $2$-user symmetric interference network defined by:
\begin{align}
&Y_{i}(t)=\sum_{m\in\mathcal{T}_i}h_{i  m}X_{m}(t)+Z_{i}(t)\\
&\tilde{Y}_{j}(t)=\sum_{m\in\tilde{\mathcal{T}}_j}h_{j  m}X_{m}(t)+Z_{j}(t),
\end{align}
where $\tilde{\mathcal{T}}_j=\{j\}\cup(\mathcal{I}_i\setminus\{j\})\cup \{i\}=\mathcal{T}_i$. To complete the proof, we use the remaining steps to show that the SDoF at the original receiver $R\textsf{x}_{j}$ is upper bounded by the SDoF at the enhanced receiver $\tilde{R}\textsf{x}_{j}$, which is zero. To this end, we start by upperbounding the secrecy rate $R_j$ at $R\textsf{x}_{j}$  as follows:
\begin{align}
\label{RxjRatej11}
&n R_{j}=H(W_{j}|\mathcal{H})\\
\label{RxjRatej21}
&=I(W_{j}; Y^n_{j}|\mathcal{H})+H(W_{j}|Y^n_{j}, \mathcal{H})\\
\label{RxjRatej31}
&=I(W_{j}; Y^n_{j}|\mathcal{H})+no(n),\\
\label{RxjRatej41}
&\leq I(W_{j}; Y^n_{j}, X^n_{\mathcal{I}_j\setminus \mathcal{I}_i}|\mathcal{H})+no(n)\\
\label{RxjRatej51}
&= I(W_{j}; X^n_{\mathcal{I}_j\setminus \mathcal{I}_i}|\mathcal{H})+ I(W_{j}; Y^n_{j}|X^n_{\mathcal{I}_j\setminus \mathcal{I}_i}, \mathcal{H})+no(n)\\
\label{RxjRatej61}
&=I(W_{j}; Y^n_{j}|X^n_{\mathcal{I}_j\setminus \mathcal{I}_i}, \mathcal{H})+no(n)\\
\label{RxjRatej71}
&= I(W_{j}; \tilde{Y}^n_{j}|\mathcal{H})+no(n)\\
\label{RxjRatej81}
&= I(W_{j}; Y^n_{i}|\mathcal{H})+no(n)\\
\label{RxjRatej91}
&=no(n),
\end{align}
where  \eqref{RxjRatej31} is due to the decodability constraint in \eqref{DecodabilityConstraint}, \eqref{RxjRatej41} is due to the enhancement by a genie providing the interfering signals set $X^n_{\mathcal{I}_j\setminus \mathcal{I}_i}$ to receiver $R\textsf{x}_{j}$, and \eqref{RxjRatej51} follows from the chain rule of mutual information.  Equation \eqref{RxjRatej61} is due to the independence of $W_j$ from $X^n_{\mathcal{I}_j\setminus \mathcal{I}_i}$, whereas \eqref{RxjRatej71} follows from the fact that $\tilde{Y}^n_j$ is the enhanced version of $Y^n_j$. Equation \eqref{RxjRatej81} is due to the fact that the joint distribution of $(W_j, \tilde{Y}^n_j)$ is the same as the joint distribution of $(W_j, Y^n_i)$, whereas \eqref{RxjRatej91} is due to the secrecy constraint in \eqref{ConfidentialityConstraint}. Taking the limit of \eqref{RxjRatej91} as $n\rightarrow \infty$, we get $R_j=0$. This implies that the symmetric secrecy rate is zero. Therefore, by Definition \ref{SDoF_symDefinition},   symmetric SDoF is also zero. \QEDA

\subsection{Achievability Proof}
\label{appendix:Thoerem1Achievability}

Our aim is to show that symmetric SDoF is nonzero, if the conditions of Theorem \ref{TheoremSDoFSymmetry} are not met. Consider the $K$-user TIM-CM network and suppose there is no single pair of users $i, j\in\{1, 2, \dots, K\}$, for $i\neq j$, within the network that satisfies the conditions of Theorem \ref{TheoremSDoFSymmetry}. We next provide a simple secure TDMA algorithm that achieves SDoF of $\frac{1}{K}$.


The proposed secure transmission works as follows:  Let $\mathcal{C}_t$ be the index set of transmitters that have to act as cooperative jammers (by sending artificial noise symbols) during the $t$th time slot. The following steps show how each transmitter $T\textsf{x}_k$, for $k\in\{1, 2, \dots, K\}$,  can securely transmit its message $W_k$ during time slot $t\in\{1, 2, \dots, T\}$: 
\begin{enumerate}[(i)]
\item Let  $T\textsf{x}_k$ send $W_k$ over the $t$th time slot.
\item Let  all transmitters in $\{\{T\textsf{x}_i\}_{i\neq k}^K$:  $i\in\mathcal{C}_t\}$ send artificial noise symbols during time slot $t$. 
\item Let all transmitters $\{\{T\textsf{x}_i\}_{i\neq k}^K$: $i\notin\mathcal{C}_t\}$ remain silent during time slot $t$. 
\item Repeat steps $(i)-(iii)$ for a total of $T=K$ times.
\end{enumerate}
We are thus able to separately serve the $K$ users  over a transmission block of length $T=K$ time slots, which leads to a symmetric SDoF of $\frac{1}{K}$ per user. Note that $\frac{1}{K}$ is the least symmetric SDoF that can be achieved for any TIM-CM topology that is symmetric SDoF feasible. \QEDA

\section{Proof of Theorem \ref{Theorem4UpperBound1}}
\label{appendix:Theorem4UpperBound1Proof}

Consider the signal $Y^n_k$ received at $R\textsf{x}_k$ over transmission blocklength $n$. For simplicity of notation, we denote $\mathbf{H}^n_{ki}=h_{ki}\mathbf{I}_n$ as the channel matrix between transmitter $T\textsf{x}_i$ and receiver $R\textsf{x}_k$ over a block of length $n$, where $h_{ki}$ is the time-invariant channel coefficient and 
$\mathbf{I}_n$ is the $n\times n$ identity matrix. We invoke decodability and confidentiality constraints to upper bound the secrecy rate $R_k$ at $R\textsf{x}_k$ as follows:
\begin{align}
\label{UpperBound1_1}
&nR_k=H(W_k|\mathcal{H})\\
\label{UpperBound1_2}
&=I(W_k; Y^n_k|\mathcal{H})+H(W_k|Y^n_k, \mathcal{H})\\
\label{UpperBound1_3}
&= I(W_k; Y^n_k|\mathcal{H})+n o(n)\\
\label{UpperBound1_4}
&= I(W_k; Y^n_k|\mathcal{H})-I(W_{\mathcal{I}_k}; Y^n_k|\mathcal{H})\nonumber\\&\hspace{15pt}+I(\mathcal{W}_{\mathcal{I}_k}; Y^n_k|\mathcal{H})+n o(n)\\
\label{UpperBound1_5}
&=h(Y^n_k|W_{\mathcal{I}_k}, \mathcal{H})-h(Y^n_k|W_k, \mathcal{H})\nonumber\\&\hspace{15pt}+I(\mathcal{W}_{\mathcal{I}_k}; Y^n_k|\mathcal{H})+n o(n)\\
\label{UpperBound1_6}
&= h(Y^n_k|W_{\mathcal{I}_k}, \mathcal{H})\nonumber\\&\hspace{15pt}-h(Y^n_k|W_k, \mathcal{H})+n o(n)\\
\label{UpperBound1_61}
&\leq  h(Y^n_k|W_{\mathcal{I}_k}, \mathcal{H})\nonumber\\&\hspace{15pt}-h(Y^n_k|W_k, X^n_k, \mathcal{H})+n o(n)\\
\label{UpperBound1_7}
&= h(Y^n_k|W_{\mathcal{I}_k}, \mathcal{H})\nonumber\\&\hspace{15pt}-h(Y^n_k|X^n_k, \mathcal{H})+n o(n)\\
\label{UpperBound1_8}
&= h(Y^n_k|W_{\mathcal{I}_k}, \mathcal{H})\nonumber\\&\hspace{15pt}-h\left(\sum_{i\in\mathcal{T}_k}\mathbf{H}^n_{ki}X^n_i+Z^n_k\Big| X^n_k, \mathcal{H}\right)+n o(n)\\
\label{UpperBound1_9}
&=h(Y^n_k|W_{\mathcal{I}_k}, \mathcal{H})\nonumber\\&\hspace{15pt}-h\left(\sum_{i\in\mathcal{I}_k}\mathbf{H}^n_{ki}X^n_i+Z^n_k\Big|\mathcal{H}\right)+no(n)\\
\label{UpperBound1_91}
&\leq   h(Y^n_k|W_{\mathcal{I}_k}, \mathcal{H})\nonumber\\&\hspace{15pt}-h\left(\sum_{i\in\mathcal{I}_k}\mathbf{H}^n_{ki}X^n_i+Z^n_k\Big|X^n_{\mathcal{I}_k\setminus\{i\}}, \mathcal{H}\right)+no(n)\\
\label{UpperBound1_10}
&=h(Y^n_k|W_{\mathcal{I}_k}, \mathcal{H})-h(\mathbf{H}^n_{ki}X^n_{i}+Z^n_k|\mathcal{H})+no(n),
\end{align}
for all $i\in\mathcal{I}_k$. Here, \eqref{UpperBound1_3} is due to Fano's inequality, \eqref{UpperBound1_5} comes from the identity $I(X; Y)-I(X; Z)=h(X|Z)-h(X|Y)$, \eqref{UpperBound1_6} is due to the secrecy constraint in \eqref{ConfidentialityConstraint}, \eqref{UpperBound1_61} is due to the fact that conditioning reduces entropy,  \eqref{UpperBound1_7} follows from the Markov chain $W_k\rightarrow X^n_k\rightarrow Y^n_k$, and \eqref{UpperBound1_91} is due to the fact that conditioning reduces entropy. Starting with \eqref{UpperBound1_10}, we can then proceed as follows. 
\begin{align}
\label{Type1Bound_1New1}
nR_k& \leq h(Y^n_k|W_{\mathcal{I}_k}, \mathcal{H})-h(\mathbf{H}^n_{ki}X^n_{i}+Z^n_k|\mathcal{H})+no(n)\nonumber\\
&= h(Y^n_k|W_{\mathcal{I}_k}, \mathcal{H})-h(\tilde{Y}^n_{i}| \mathcal{H})+no(n)\\
\label{Type1Bound_1New2}
&=h(Y^n_k|W_{\mathcal{I}_k}, \mathcal{H})-I(W_i; \tilde{Y}^n_i|\mathcal{H})\nonumber\\&\hspace{15pt}-h(\tilde{Y}^n_i|W_i, \mathcal{H})+no(n)\\
\label{Type1Bound_1New3}
&=h(Y^n_k|W_{\mathcal{I}_k}, \mathcal{H})-H(W_i|\mathcal{H})+H(W_i|\tilde{Y}^n_i, \mathcal{H})\nonumber\\&\hspace{15pt}-h(\tilde{Y}^n_i|W_i, \mathcal{H})+no(n)\\
\label{Type1Bound_1New4}
&=h(Y^n_k|W_{\mathcal{I}_k}, \mathcal{H})-nR_i+H(W_i|\tilde{Y}^n_i, \mathcal{H})\nonumber\\&\hspace{15pt}-h(\tilde{Y}^n_i|W_i, \mathcal{H})+no(n)\\
\label{Type1Bound_1New5}
&=h(Y^n_k|W_{\mathcal{I}_k}, \mathcal{H})-nR_i-h(\tilde{Y}^n_i|W_i, \mathcal{H})+no(n),
\end{align}
for all $i\in\mathcal{I}_k$ and where \eqref{Type1Bound_1New1} follows from substitution  $\tilde{Y}^n_i=\mathbf{H}^n_{ki}X^n_{i}+Z^n_k$. Here,  \eqref{Type1Bound_1New5} is due to the  fact that $\tilde{Y}^n_i$ is the enhanced version of the signal $Y^n_i$ seen at receiver $R\textsf{x}_i$, and therefore due to the decodability constraint, $H(W_i|\tilde{Y}^n_i, \mathcal{H})=o(n)$. 

We now bound \eqref{Type1Bound_1New5} in two different ways. We start with the first type of bound as follows: 
\begin{align}
\label{Type1Bound_1New6}
n(R_i+R_k) &\leq h(Y^n_k|W_{\mathcal{I}_k}, \mathcal{H})-h(\tilde{Y}^n_i|W_i, \mathcal{H})+no(n)\\
\label{Type1Bound_1New7}
 & \leq h(Y^n_k|\mathcal{H})-h(\tilde{Y}^n_i|W_i, \mathcal{H})+no(n)\\
\label{Type1Bound_1New8}
&\leq  n\log(P)-h(\tilde{Y}^n_i|W_i, \mathcal{H})\nonumber\\&\hspace{15pt}+no(n)+no(\log(P)),
\end{align}
for all $i\in\mathcal{I}_k$. Here, \eqref{Type1Bound_1New6} follows from rearranging the terms in \eqref{Type1Bound_1New5}, \eqref{Type1Bound_1New7} is due to the fact that conditioning reduces entropy, and \eqref{Type1Bound_1New8} follows from the fact that Gaussian distribution maximizes entropy \cite{CoverThomas}.   We note here that, for any receiver $R\textsf{x}_k$ with interference set $\mathcal{I}_k=\emptyset$, we would get the following bound instead of \eqref{Type1Bound_1New8}: $nR_k \leq n\log(P)+no(\log(P))$.

We next derive the second type of bound. We start by bounding \eqref{Type1Bound_1New6} as follows:
\begin{align}
n(R_i+R_k) &\leq h(Y^n_k|W_{\mathcal{I}_k}, \mathcal{H})-h(\tilde{Y}^n_i|W_i, \mathcal{H})+no(n)\nonumber\\
\label{Type1Bound_1New10}
\leq nR_k&+\sum_{j\in\mathcal{T}_k}h(\tilde{Y}^n_j|W_j, \mathcal{H})-h(\tilde{Y}^n_{i}|W_{i}, \mathcal{H})\nonumber\\&\hspace{15pt}+no(\log(P))+no(n),
\end{align}
for all $i\in\mathcal{I}_k$ and where $\mathcal{T}_k=\{k\}\cup\mathcal{I}_k$. Here, \eqref{Type1Bound_1New10} follows from bounding  $h(Y^n_k|W_{\mathcal{I}_k}, \mathcal{H})$ using Lemma \ref{Lemma1} below whose proof is in Appendix \ref{appendix:Lemma1Proof}.  \begin{lemma}\label{Lemma1} Let $\tilde{Y}^n_i=\mathbf{H}^n_{ki}X^n_i+Z^n_k$ , for all $i\in\mathcal{T}_k=\{k\}\cup\mathcal{I}_k$, then we have the following:
\begin{align}
\label{Lemma1EquationB}
h(Y^n_k|W_{\mathcal{I}_k}, \mathcal{H})&\leq nR_k+\sum\limits_{j\in\mathcal{T}_k}h(\tilde{Y}^n_j|W_j, \mathcal{H})\nonumber\\&\hspace{15pt}+no(\log(P)).
\end{align}
\end{lemma}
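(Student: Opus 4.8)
The plan is to prove the bound in two stages: first peel off the rate term $nR_k$, which reduces the claim to a bound on $h(Y^n_k|W_{\mathcal{T}_k},\mathcal{H})$ with $\mathcal{T}_k=\{k\}\cup\mathcal{I}_k$; then decompose this conditional entropy of the received superposition into a sum of per-stream entropies. For the first stage, I would split $h(Y^n_k|W_{\mathcal{I}_k},\mathcal{H}) = I(W_k;Y^n_k|W_{\mathcal{I}_k},\mathcal{H}) + h(Y^n_k|W_k,W_{\mathcal{I}_k},\mathcal{H})$ and bound $I(W_k;Y^n_k|W_{\mathcal{I}_k},\mathcal{H})\le H(W_k|W_{\mathcal{I}_k},\mathcal{H}) = H(W_k) = nR_k$, where the last equalities use the mutual independence of the messages and their independence from $\mathcal{H}$. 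Since $\{k\}\cup\mathcal{I}_k=\mathcal{T}_k$, it then remains to show $h(Y^n_k|W_{\mathcal{T}_k},\mathcal{H})\le\sum_{j\in\mathcal{T}_k}h(\tilde{Y}^n_j|W_j,\mathcal{H})+no(\log(P))$.

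For the second stage, the key structural observation is that, conditioned on $W_{\mathcal{T}_k}$ and $\mathcal{H}$, the per-transmitter signals $A_j\triangleq\mathbf{H}^n_{kj}X^n_j$, for $j\in\mathcal{T}_k$, are mutually independent: each $X^n_j$ is produced from $W_j$ together with transmitter-local randomness that is independent across transmitters, as is the case for the stochastic wiretap encoders. Writing $Y^n_k=\sum_{j\in\mathcal{T}_k}A_j+Z^n_k$ with $Z^n_k$ independent of everything, I would peel the streams off one at a time. The central step is the per-stream inequality $h(A_{j_1}+B+Z^n_k)\le h(A_{j_1}+Z^n_k)+h(B+Z^n_k)-h(Z^n_k)$, valid whenever $A_{j_1}$ is independent of the aggregate $B$ of the remaining streams. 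I would prove it through the chain $h(A_{j_1}+B+Z^n_k)=h(A_{j_1}+B+Z^n_k\,|\,\mathbf{A}_{-1})+I(\mathbf{A}_{-1};A_{j_1}+B+Z^n_k)$, where $\mathbf{A}_{-1}$ collects the remaining streams (so $B$ is a deterministic function of $\mathbf{A}_{-1}$): the conditional term equals $h(A_{j_1}+Z^n_k)$, and the mutual information is bounded by $I(\mathbf{A}_{-1};B+Z^n_k)=h(B+Z^n_k)-h(Z^n_k)$ after introducing the independent $A_{j_1}$ as an extra observation and using $A_{j_1}\perp\mathbf{A}_{-1}$.

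Iterating this peeling over the $|\mathcal{T}_k|$ streams yields $h(Y^n_k|W_{\mathcal{T}_k},\mathcal{H})\le\sum_{j\in\mathcal{T}_k}h(A_j+Z^n_k|W_{\mathcal{T}_k},\mathcal{H})-(|\mathcal{T}_k|-1)h(Z^n_k)$. Finally, since $A_j+Z^n_k=\tilde{Y}^n_j$ and the pair $(A_j,Z^n_k)$ is independent of $W_{\mathcal{T}_k}\setminus\{j\}$, each summand equals $h(\tilde{Y}^n_j|W_j,\mathcal{H})$; the leftover $-(|\mathcal{T}_k|-1)h(Z^n_k)$ is a $P$-independent constant of order $O(n)$, hence absorbed into $no(\log(P))$. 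Combining the two stages gives exactly \eqref{Lemma1EquationB}.

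The hard part will be that differential entropy does not satisfy the naive genie bound $h(X)\le h(X,Y)$ (conditional differential entropy may be negative), so the stream-peeling cannot be done by simply handing a genie the side streams and must instead be routed through nonnegative mutual-information terms as above. The two supporting facts that make this work are the conditional independence of the transmitted signals given the messages and the careful bookkeeping of the single shared receiver noise $Z^n_k$, whose reuse across the $\tilde{Y}^n_j$ is precisely what produces the benign correction $-(|\mathcal{T}_k|-1)h(Z^n_k)$.
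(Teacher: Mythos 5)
Your proof is correct, but it takes a genuinely different route from the paper's. The opening move is shared: both of you split off the rate via $h(Y^n_k|W_{\mathcal{I}_k},\mathcal{H}) = I(W_k;Y^n_k|W_{\mathcal{I}_k},\mathcal{H}) + h(Y^n_k|W_{\mathcal{T}_k},\mathcal{H})$, though you bound the mutual information by $H(W_k|W_{\mathcal{I}_k},\mathcal{H})=H(W_k)=nR_k$ outright, whereas the paper inserts the genie signals $\tilde{Y}^n_{\mathcal{I}_k}$ inside that mutual information, expands by the chain rule, and invokes the decodability constraint (Fano), which costs an extra $no(n)$ term. The real divergence is in how the per-stream entropies $h(\tilde{Y}^n_j|W_j,\mathcal{H})$ are produced. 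You decompose the residual $h(Y^n_k|W_{\mathcal{T}_k},\mathcal{H})$ directly, using the conditional independence of the transmitted streams given the messages (valid for independent stochastic encoders, the same fact the paper uses implicitly when it factors $h(\tilde{Y}^n_{\mathcal{I}_k}|W_{\mathcal{I}_k},\mathcal{H})$ into a sum) together with the peeling inequality $h(A+B+Z)\le h(A+Z)+h(B+Z)-h(Z)$ iterated over $\mathcal{T}_k$; the shared-noise correction $-(|\mathcal{T}_k|-1)h(Z^n_k)$ is negative and can simply be dropped. The paper instead extracts the terms $h(\tilde{Y}^n_j|W_j,\mathcal{H})$ only for $j\in\mathcal{I}_k$ out of the genie mutual-information term (via conditioning-reduces-entropy and Markov-chain steps that terminate in a $-h(Z^n_{\mathcal{I}_k})$ constant), carries the residual along as $h(Y^n_k|W_{\mathcal{T}_k},\mathcal{H})\le h(Y^n_k|W_k,\mathcal{H})$, and finally folds that residual into the $j=k$ summand. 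Your route buys three things: it needs no reliability assumption and produces no $no(n)$ slack, matching the stated bound exactly; it never forms a joint entropy of enhanced signals that all share the single noise realization $Z^n_k$ (a point the paper glosses over when it writes that joint entropy as a sum of marginals); and it treats the direct stream $j=k$ on the same footing as the interferers, whereas the paper's final step tacitly identifies $h(Y^n_k|W_k,\mathcal{H})$ --- a signal that still contains interference --- with the summand $h(\tilde{Y}^n_k|W_k,\mathcal{H})$, which is the loosest link in its chain. What the paper's route buys is uniformity of technique: the same genie/Fano manipulations reappear essentially verbatim in the proofs of its two outer bounds and of Lemma 2, so the lemma's proof there doubles as a template for the rest of the appendix.
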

By simplifying \eqref{Type1Bound_1New10}, we then obtain the desired second type of bound:
\begin{align}
\label{Type1Bound_1New11}
nR_i&\leq \sum_{j\in\mathcal{T}_k}h(\tilde{Y}^n_j|W_j, \mathcal{H})-h(\tilde{Y}^n_{i}|W_{i}, \mathcal{H})\nonumber\\&\hspace{15pt}+no(\log(P))+no(n).
\end{align}
Similarly, we note that for any $R\textsf{x}_k$ with $\mathcal{I}_k=\emptyset$, we would get the following bound instead of \eqref{Type1Bound_1New11}: $0 \leq\sum_{j\in \mathcal{T}_k} h(\tilde{Y}^n_{j}|W_j,  \mathcal{H})=h(\tilde{Y}^n_{k}|W_k,  \mathcal{H})$.

For the purpose of the current paper, we are interested in symmetric SDoF. Therefore, we can now drop the indexing and instead use the notation $R_i=R^{\textsf{sym}}, \forall i\in\{1, 2, \dots, K\}$, which represents the symmetric secrecy rate at any receiver. Moreover, as  shown by \eqref{Type1Bound_1New8} and  \eqref{Type1Bound_1New11}, we get two types of bounds on $R^{\textsf{sym}}$. Consider the receiver $R\textsf{x}_k$. Then, from \eqref{Type1Bound_1New8}, we directly obtain the following  first type of bound:
\begin{align}
\label{Type1Bound_2} 
    2nR^{\textsf{sym}}&\leq n\log(P) - h(\tilde{Y}^n_{i}|W_i, \mathcal{H})\nonumber\\&\hspace{15pt}+no(n)+no(\log(P)),
\end{align}
for all $i\in \mathcal{I}_k$. We again note here that, for any receiver $R\textsf{x}_k$ with $\mathcal{I}_k=\emptyset$, we would get $nR^{\textsf{sym}}\leq n\log(P)+no(\log(P))$ instead of \eqref{Type1Bound_2}. The second type of bound follows from \eqref{Type1Bound_1New11} and is given by:
\begin{align}
\label{Type2Bound_2} 
    nR^{\textsf{sym}}&\leq \sum_{j\in \mathcal{T}_k} h(\tilde{Y}^n_{j}|W_j, \mathcal{H}) - h(\tilde{Y}^n_{i}|W_i, \mathcal{H})\nonumber\\&\hspace{15pt}+n o(\log(P))+no(n),
\end{align}
for all $i\in \mathcal{I}_k$ and where $\mathcal{T}_k=\{k\}\cup\mathcal{I}_k$. Similarly, we note that, for any $R\textsf{x}_k$ with $\mathcal{I}_k=\emptyset$, we would get $ 0 \leq\sum_{j\in \mathcal{T}_k} h(\tilde{Y}^n_{j}|W_j,  \mathcal{H})=h(\tilde{Y}^n_{k}|W_k,  \mathcal{H})$ instead of \eqref{Type2Bound_2}. 

We now remark the following fact which is heavily used in the subsequent steps of this proof.
\begin{remark} \label{remark33}
 Due to the absence of CSI at the transmitters, i.e., no CSIT under the TIM-CM framework, and the fact that the channel coefficients are i.i.d., the enhanced signal $\tilde{Y}^n_i=\mathbf{H}^n_{ki}X^n_i+Z^n_k$ is statistically equivalent to the enhanced signal $\tilde{Y}^{'n}_i=\mathbf{H}^n_{ji}X^n_i+Z^n_j$, for all $k\neq j\in\{1, 2, \dots, K\}$.
\end{remark}

We can now proceed as follows: Pick any two arbitrary and disjoint subsets of receivers $\mathcal{S}_1$ and $\mathcal{S}_2$, i.e., such that $\mathcal{S}_1\cap \mathcal{S}_2=\emptyset$  and $\mathcal{S}_1, \mathcal{S}_2 \subseteq \{1, 2, \dots, K\}$. For all receivers in $\mathcal{S}_1$, we can apply  the first type of bound \eqref{Type1Bound_2}, and for all receivers in $\mathcal{S}_2$, we apply the second type of bound \eqref{Type2Bound_2}. Then, we can add up the resulting $|\mathcal{S}_1|+|\mathcal{S}_2|$ equations and see the resulting cancellation possible as a function of the set choices $(\mathcal{S}_1, \mathcal{S}_2)$. Moreover, suppose $\mathcal{S}_1=\{r_1, r_2, \dots, r_{|\mathcal{S}_1|}\}$ and $\mathcal{S}_2=\{m_1, m_2, \dots, m_{|\mathcal{S}_2|}\}$. Thus, by adding up all the equations and rearranging terms, we obtain the following:
\begin{align}
\label{RateSum}
&nR^{\textsf{sym}}\left(|\mathcal{S}_1|+\sum_{p=1}^{|\mathcal{S}_1|}\mathbf{1} (\mathcal{I}_{r_p}\neq \emptyset)\right)\nonumber\\&\hspace{25pt}+nR^{\textsf{sym}}\sum_{q=1}^{|\mathcal{S}_2|}\mathbf{1} (\mathcal{I}_{m_q}\neq \emptyset)\nonumber\\&\hspace{15pt}\leq|\mathcal{S}_1| n\log(P)+no(\log(P))+no(n)\nonumber\\&\hspace{25pt}+\underbrace{\sum\limits_{m_q\in\mathcal{S}_2}\sum\limits_{i\in\mathcal{T}_{m_q}}h(\tilde{Y}^n_{i}|W_i, \mathcal{H})}_{\mathcal{U}(\mathcal{S}_2)}\nonumber\\&\hspace{25pt}-\hspace{-8pt}\underbrace{\sum\limits_{j_q\in\mathcal{I}_{m_q}:m_q\in\mathcal{S}_2}\hspace{-8pt}h(\tilde{Y}^n_{j_q}|W_{j_q}, \mathcal{H})}_{\mathcal{V}_2(\mathcal{S}_2)} \nonumber\\&\hspace{25pt}-\hspace{-8pt}\underbrace{\sum\limits_{\ell_p\in\mathcal{I}_{r_p}:r_p\in\mathcal{S}_1}\hspace{-8pt}h(\tilde{Y}^n_{\ell_p}|W_{\ell_p}, \mathcal{H})}_{\mathcal{V}_1(\mathcal{S}_1)}.
\end{align}

Starting with the last three terms in \eqref{RateSum}, it is clear that the aim should be to choose the indexes for signal components corresponding to the set choices $\mathcal{S}_1$ and  $\mathcal{S}_2$ in a way that allows the maximum cancellation of the entropy terms, while taking advantage of the signal distribution properties stated in Remark \ref{remark33}. Depending upon the network topology $\mathcal{G}$, this cancellation will then result in zero or more positive entropy terms. We calculate the number of these remaining positive entropy terms as shown next.  

To calculate the number of (positive) entropy terms resulting from $\sum_{m_q\in\mathcal{S}_2}\sum_{i\in\mathcal{T}_{m_q}}h(\tilde{Y}^n_{i}|W_i,  \mathcal{H})$, we can use the cardinality of the collection of all index sets of transmitters whose signals are respectively seen at the receivers in $\mathcal{S}_2=\{m_1, m_2, \dots, m_{|\mathcal{S}_2|}\}$. We denote this multi-set of transmitter indexes as $\mathcal{U}(\mathcal{S}_2)=\{\mathcal{T}_{m_q}\}_{q=1}^{|\mathcal{S}_2|}$. To calculate the number of (negative) entropy terms resulting from $\sum_{j_q\in\mathcal{I}_{m_q}:m_q\in\mathcal{S}_2}h(\tilde{Y}^n_{j_q}|W_{j_q},  \mathcal{H})$, we use the cardinality of the set denoted by $\mathcal{V}_2(\mathcal{S}_2)=  \{j_{1}, j_{2}, \dots, j_{|\mathcal{S}_2|}\}$, which is an arbitrary set of indexes $j_{q}\in \mathcal{I}_{m_q}$ for $m_q\in \mathcal{S}_2$.  Similarly, to calculate the number of (negative) entropy terms resulting from $\sum_{\ell_p\in\mathcal{I}_{r_p}:r_p\in\mathcal{S}_1}h(\tilde{Y}^n_{\ell_p}|W_{\ell_p},  \mathcal{H})$, we use the cardinality of the set denoted by $\mathcal{V}_1(\mathcal{S}_1)= \{\ell_{1}, \ell_{2}, \dots, \ell_{|\mathcal{S}_1|}\}$, which is an arbitrary set of indexes $\ell_{p}\in \mathcal{I}_{r_p}$ for $r_p\in \mathcal{S}_1=\{r_1, r_2, \dots, r_{|\mathcal{S}_1|}\}$.

 From the defined above three sets, we can then calculate the resulting number of (positive) entropy terms as follows:
\begin{align}
\label{EntropySetCounter}
\mu (\mathcal{S}_1, \mathcal{S}_2) = \min_{\mathcal{V}_2(\mathcal{S}_2), \mathcal{V}_1(\mathcal{S}_1)}|(\mathcal{U}(\mathcal{S}_2) \setminus  \mathcal{V}_2(\mathcal{S}_2)) \setminus \mathcal{V}_1(\mathcal{S}_1)|.
\end{align}

As indicated in \eqref{PowerConstraint},  each signal is transmitted with an average power constraint $P$. Thus, each of the $\mu (\mathcal{S}_1, \mathcal{S}_2)$ entropy terms can be upper bounded by $n\log(P)$. Therefore, using \eqref{RateSum} and \eqref{EntropySetCounter}, we can upper bound the secrecy rate as follows:
\begin{align}
\label{BlockSymmetricRate}
     &nR^{\textsf{sym}}\nonumber\\&\leq \frac{|\mathcal{S}_1|n\log(P)+\mu (\mathcal{S}_1, \mathcal{S}_2)n\log(P)}{|\mathcal{S}_1|+\sum_{p=1}^{|\mathcal{S}_1|}\mathbf{1} (\mathcal{I}_{r_p}\neq \emptyset)+\sum_{q=1}^{|\mathcal{S}_2|}\mathbf{1} (\mathcal{I}_{m_q}\neq \emptyset)}\nonumber\\&+\frac{no(\log(P))+no(n)}{|\mathcal{S}_1|+\sum_{p=1}^{|\mathcal{S}_1|}\mathbf{1} (\mathcal{I}_{r_p}\neq \emptyset)+\sum_{q=1}^{|\mathcal{S}_2|}\mathbf{1} (\mathcal{I}_{m_q}\neq \emptyset)}.
\end{align}
Then, by dividing both sides of  \eqref{BlockSymmetricRate} by $n\log(P)$, minimizing over all choices of $(\mathcal{S}_1, \mathcal{S}_2)$, and taking the limit as $n\rightarrow \infty$ and $P\rightarrow \infty$, we obtain the following upper bound on symmetric SDoF:
\begin{align}
\label{SDoFProofExpressionThm4}
&\textup{\textsf{SDoF}}^{\textup{\textsf{sym}}}\nonumber\\&\leq \min_{\mathcal{S}1, \mathcal{S}_2}\frac{|\mathcal{S}_1|+\mu (\mathcal{S}_1, \mathcal{S}_2)}{|\mathcal{S}_1|+\sum_{p=1}^{|\mathcal{S}_1|}\mathbf{1} (\mathcal{I}_{r_p}\neq \emptyset)+\sum_{q=1}^{|\mathcal{S}_2|}\mathbf{1} (\mathcal{I}_{m_q}\neq \emptyset)},
\end{align}
which completes the proof of Theorem \ref{Theorem4UpperBound1}. \QEDA
\section{Proof of Theorem \ref{Theorem5UpperBound2}}
\label{appendix:Theorem5UpperBound2Proof}

We now present the proof of Theorem \ref{Theorem5UpperBound2}. As it will be demonstrated by the steps below, the main difference between Theorem \ref{Theorem4UpperBound1} and Theorem \ref{Theorem5UpperBound2} lies in the way, for a given receiver $R\textsf{x}_k$, we lower bound the entropy of the interfering signals' component $h\left(\sum_{i\in\mathcal{I}_k}\mathbf{H}^n_{ki}X^n_i+Z^n_k| \mathcal{H}\right)$ from \eqref{UpperBound1_9} when receiver $R\textsf{x}_k$ is  fractional signal generator as given by Definition \ref{FractionalGenerators}.

Consider the signal $Y^n_k$ received at $R\textsf{x}_k$ over transmission blocklength $n$. Starting with \eqref{UpperBound1_9}, we can upper bound the symmetric secrecy rate $R^{\textsf{sym}}$ as follows:
\begin{align}
&nR^{\textsf{sym}}\leq h(Y^n_k|W_{\mathcal{I}_k},  \mathcal{H})\nonumber\\&\hspace{35pt}-h\left(\sum_{i\in\mathcal{I}_k}\mathbf{H}^n_{ki}X^n_i+Z^n_k\Big| \mathcal{H}\right)+no(n)\nonumber\\
\label{UpperBound2_2}
\hspace{15pt}&\leq h(Y^n_k|W_{\mathcal{I}_k}, \mathcal{H})-n |\mathcal{G}_k|R^{\textsf{\textup{sym}}}\nonumber\\&\hspace{35pt}-h(\tilde{Y}^n_{\Pi_{|\mathcal{G}_k|}}|W_{\Pi_{|\mathcal{G}_k|}},  \mathcal{H})+no(n),
\end{align}
where $\Pi_{|\mathcal{G}_k|}\in\mathcal{G}_k\subseteq \mathcal{I}_k$ and \eqref{UpperBound2_2} follows from Lemma \ref{Lemma2} below whose proof is in Appendix \ref{appendix:Lemma2Proof}.

\begin{lemma}\label{Lemma2}  Let $\tilde{Y}^n_i=\mathbf{H}^n_{ki}X^n_i+Z^n_k$ , for all $i\in\mathcal{T}_k=\{k\}\cup\mathcal{I}_k$, and consider any set $\mathcal{G}_k\subseteq\mathcal{I}_k$ and a permutation sequence  $\Pi^{(k)}=(\Pi_1, \Pi_2, \dots \Pi_{|\mathcal{G}_k|})$ of the elements of $\mathcal{G}_k$ satisfying Definition \ref{FractionalGenerators}, then
\begin{align}
\label{Lemma2EquationB}
&h\left(\sum_{i\in\mathcal{I}_k}\mathbf{H}^n_{ki}X^n_i+Z^n_k\Big|\mathcal{H}\right)\geq n |\mathcal{G}_k|R^{\textsf{\textup{sym}}}\nonumber\\&\hspace{35pt}+h(\tilde{Y}^n_{\Pi_{|\mathcal{G}_k|}}|W_{\Pi_{|\mathcal{G}_k|}},  \mathcal{H})+no(n).
\end{align}
\end{lemma}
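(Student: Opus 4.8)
The plan is to establish the bound by successively ``peeling'' the interference entropy one transmitter at a time, following the order prescribed by the fractional-generator permutation $\Pi^{(k)}$, so that each peeled transmitter contributes exactly one symmetric secrecy rate $nR^{\textsf{sym}}$ and leaves behind a cleaner interference signal. First I would discard the transmitters outside $\mathcal{G}_k$. Splitting $\sum_{i\in\mathcal{I}_k}\mathbf{H}^n_{ki}X^n_i+Z^n_k$ into its $\mathcal{G}_k$-part and its $(\mathcal{I}_k\setminus\mathcal{G}_k)$-part, conditioning on the genie-supplied signals $X^n_{\mathcal{I}_k\setminus\mathcal{G}_k}$ (conditioning reduces entropy), subtracting their now-known contribution, and invoking the mutual independence of distinct transmitters' codewords, I obtain
\[
h\Big(\sum_{i\in\mathcal{I}_k}\mathbf{H}^n_{ki}X^n_i+Z^n_k\,\Big|\,\mathcal{H}\Big)\ \geq\ h\Big(\sum_{i\in\mathcal{G}_k}\mathbf{H}^n_{ki}X^n_i+Z^n_k\,\Big|\,\mathcal{H}\Big).
\]
Writing $L=|\mathcal{G}_k|$ and, for $0\le j\le L-1$, $J^{(j)}=\sum_{i\in\{\Pi_{j+1},\dots,\Pi_{L}\}}\mathbf{H}^n_{ki}X^n_i+Z^n_k$ for the interference restricted to the yet-unpeeled transmitters, the right-hand side is $h(J^{(0)}\mid\mathcal{H})$ and $J^{(L-1)}=\tilde{Y}^n_{\Pi_{L}}$.

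The heart of the argument is the one-step recursion $h(J^{(j)}\mid\mathcal{H})\ge nR^{\textsf{sym}}+h(J^{(j+1)}\mid\mathcal{H})+no(n)$, which I would derive from the chain rule $h(J^{(j)}\mid\mathcal{H})=I(W_{\Pi_{j+1}};J^{(j)}\mid\mathcal{H})+h(J^{(j)}\mid W_{\Pi_{j+1}},\mathcal{H})$. For the mutual-information term, observe that $J^{(j)}$ carries the direct link of transmitter $\Pi_{j+1}$ together with interference only from $\{\Pi_{j+2},\dots,\Pi_{L}\}$; the fractional-generator property of Definition~\ref{FractionalGenerators} at index $i=j+1$ states precisely that $\{\Pi_{j+2},\dots,\Pi_{L}\}\subseteq\mathcal{I}_{\Pi_{j+1}}$, so after invoking the statistical equivalence of Remark~\ref{remark33} the signal $J^{(j)}$ is a reduced-interference (cleaner) version of $Y^n_{\Pi_{j+1}}$. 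Hence $W_{\Pi_{j+1}}$ remains decodable from $J^{(j)}$ and, by the decodability constraint \eqref{DecodabilityConstraint}, $I(W_{\Pi_{j+1}};J^{(j)}\mid\mathcal{H})=nR^{\textsf{sym}}-no(n)$. For the residual term I would condition additionally on $X^n_{\Pi_{j+1}}$, subtract the now-known $\mathbf{H}^n_{k\Pi_{j+1}}X^n_{\Pi_{j+1}}$, and use independence to get $h(J^{(j)}\mid W_{\Pi_{j+1}},\mathcal{H})\ge h(J^{(j+1)}\mid\mathcal{H})$, which completes the recursion.

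Iterating the recursion for $j=0,\dots,L-2$ yields $h(J^{(0)}\mid\mathcal{H})\ge (L-1)nR^{\textsf{sym}}+h(\tilde{Y}^n_{\Pi_{L}}\mid\mathcal{H})+no(n)$, and one final chain-rule split of the interference-free signal, $h(\tilde{Y}^n_{\Pi_{L}}\mid\mathcal{H})=nR^{\textsf{sym}}-no(n)+h(\tilde{Y}^n_{\Pi_{L}}\mid W_{\Pi_{L}},\mathcal{H})$ (again using that $\tilde{Y}^n_{\Pi_{L}}$ is an interference-free, hence decodable, version of $Y^n_{\Pi_{L}}$), assembles the $L=|\mathcal{G}_k|$ rate terms and the claimed residual $h(\tilde{Y}^n_{\Pi_{|\mathcal{G}_k|}}\mid W_{\Pi_{|\mathcal{G}_k|}},\mathcal{H})$, proving \eqref{Lemma2EquationB}.

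The step I expect to be the main obstacle is making the decodability claim airtight at each peeling stage: I must argue that a signal which is merely \emph{statistically equivalent} to a reduced-interference copy of $Y^n_{\Pi_{j+1}}$ still admits the same asymptotically reliable decoder. This requires extending Remark~\ref{remark33} from a single enhanced term to the entire sum $J^{(j)}$ (legitimate because the coefficients $h_{ki}$ are i.i.d.\ across all $(k,i)$ and the codewords are mutually independent), and then appealing to the principle that deleting interfering components cannot destroy decodability. A secondary subtlety, which is the reason I condition on $X^n_{\Pi_{j+1}}$ rather than on $W_{\Pi_{j+1}}$ when bounding the residual, is that stochastic wiretap encoding means knowing a decoded message does not determine its codeword, so the peeled contribution cannot be literally subtracted; the entropy inequality sidesteps this while still removing one transmitter. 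Since $L\le K$ is a fixed constant, the finitely many $no(n)$ terms accumulated across the recursion remain $no(n)$.
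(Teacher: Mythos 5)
Your proposal is correct and follows essentially the same route as the paper's own proof: discard the transmitters outside $\mathcal{G}_k$ by conditioning on $X^n_{\mathcal{I}_k\setminus\mathcal{G}_k}$, then peel one transmitter per step in the order of $\Pi^{(k)}$ via the chain rule, extracting $nR^{\textsf{sym}}$ through decodability (guaranteed by the nested-interference condition of Definition \ref{FractionalGenerators} together with Remark \ref{remark33}) and removing the peeled contribution by conditioning on the codeword $X^n_{\Pi_i}$ rather than the message, with the final iteration retaining the conditioning on $W_{\Pi_{|\mathcal{G}_k|}}$ to leave the residual entropy term. Your recursion on $J^{(j)}$ is precisely the paper's repeated application of its peeling steps, and your explicit treatment of the statistical-equivalence and stochastic-encoding subtleties only makes more careful what the paper leaves implicit.
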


We now bound \eqref{UpperBound2_2} in two different ways. We start with the first type of bound as follows:
\begin{align}
nR^{\textsf{sym}}&\leq h(Y^n_k|W_{\mathcal{I}_k}, \mathcal{H})-n |\mathcal{G}_k|R^{\textsf{\textup{sym}}}\nonumber\\&\hspace{35pt}-h(\tilde{Y}^n_{\Pi_{|\mathcal{G}_k|}}|W_{\Pi_{|\mathcal{G}_k|}},  \mathcal{H})+no(n) \nonumber\\
\label{UpperBound2_11}
&\leq h(Y^n_k|\mathcal{H})-n |\mathcal{G}_k|R^{\textsf{\textup{sym}}}\nonumber\\&\hspace{35pt}-h(\tilde{Y}^n_{\Pi_{|\mathcal{G}_k|}}|W_{\Pi_{|\mathcal{G}_k|}},  \mathcal{H})+no(n)\\
\label{UpperBound2_12}
&\leq n\log (P)-n |\mathcal{G}_k|R^{\textsf{\textup{sym}}}\nonumber\\&\hspace{35pt}-h(\tilde{Y}^n_{\Pi_{|\mathcal{G}_k|}}|W_{\Pi_{|\mathcal{G}_k|}},  \mathcal{H})\nonumber\\&\hspace{35pt}+no(n)+no(\log(P)),
\end{align}
where $\Pi_{|\mathcal{G}_k|}\in\mathcal{G}_k\subseteq \mathcal{I}_k$, \eqref{UpperBound2_11} is due to the fact that conditioning reduces entropy, and \eqref{UpperBound2_12} follows from the fact that Gaussian distribution maximizes entropy \cite{CoverThomas}. By rearranging the terms in \eqref{UpperBound2_12}, we then obtain the desired first type of bound:
\begin{align} 
\label{Type1Bound2_1}
&nR^{\textsf{sym}}(1+|\mathcal{G}_k|) \leq n\log(P)\nonumber\\&\hspace{35pt}-h(\tilde{Y}^n_{\Pi_{|\mathcal{G}_k|}}|W_{\Pi_{|\mathcal{G}_k|}},  \mathcal{H})+no(n),
\end{align}
where $\Pi_{|\mathcal{G}_k|}\in\mathcal{G}_k\subseteq \mathcal{I}_k$. We next derive the second type of bound.  To this end, we start with \eqref{UpperBound2_2} as follows:
\begin{align}
& nR^{\textsf{sym}}\leq h(Y^n_k|W_{\mathcal{I}_k}, \mathcal{H})-n |\mathcal{G}_k|R^{\textsf{\textup{sym}}}\nonumber\\&\hspace{35pt}-h(\tilde{Y}^n_{\Pi_{|\mathcal{G}_k|}}|W_{\Pi_{|\mathcal{G}_k|}},  \mathcal{H})+no(n) \nonumber\\
\label{UpperBound2_111}
&\hspace{15pt}\leq nR^{\textsf{sym}} + \sum_{i\in\mathcal{T}_k}h(\tilde{Y}^n_i|W_i,  \mathcal{H})+n o(\log(P))\nonumber\\&\hspace{0pt}-n |\mathcal{G}_k|R^{\textsf{\textup{sym}}}-h(\tilde{Y}^n_{\Pi_{|\mathcal{G}_k|}}|W_{\Pi_{|\mathcal{G}_k|}},  \mathcal{H})+no(n),
\end{align}
where $\Pi_{|\mathcal{G}_k|}\in\mathcal{G}_k\subseteq \mathcal{I}_k$ and \eqref{UpperBound2_111} follows from upperbounding $h(Y^n_k|W_{\mathcal{I}_k},  \mathcal{H})$ using Lemma \ref{Lemma1}.  By simplifying and rearranging the terms in \eqref{UpperBound2_111}, we then obtain the desired second type of bound:
\begin{align}
\label{Type2Bound2_1}
&nR^{\textsf{sym}} |\mathcal{G}_k| \leq \sum_{i\in\mathcal{T}_k}h(\tilde{Y}^n_i|W_i,  \mathcal{H})\nonumber\\&\hspace{5pt}-h(\tilde{Y}^n_{\Pi_{|\mathcal{G}_k|}}|W_{\Pi_{|\mathcal{G}_k}|},  \mathcal{H})+n o(\log(P))+no(n),
\end{align}
where $\Pi_{|\mathcal{G}_k|}\in\mathcal{G}_k\subseteq \mathcal{I}_k$.

We can now proceed as follows: Pick any two arbitrary and disjoint subsets of receivers $\mathcal{S}_1$ and $\mathcal{S}_2$, i.e., such that $\mathcal{S}_1\cap \mathcal{S}_2=\emptyset$  and $\mathcal{S}_1, \mathcal{S}_2 \subseteq \{1, 2, \dots, K\}$.  Suppose $\mathcal{S}_1=\{r_1, r_2, \dots r_{|\mathcal{S}_1|}\}$. Consider a collection of sets $\mathcal{G}^{\mathcal{S}_1}=\{\mathcal{G}^{(1)}_{r_1}, \mathcal{G}^{(1)}_{r_2}, \dots, \mathcal{G}^{(1)}_{r_{|\mathcal{S}_1|}}\}$ such that Receiver $r_p$ is a generator of $\mathcal{G}^{(1)}_{r_p}\subseteq \mathcal{I}_{r_p}$ for all $p\in\{1, 2, \dots, |\mathcal{S}_1|\}$, with a permutation $\Pi^{(r_p)}=\{\Pi^{(r_p)}_1, \Pi^{(r_p)}_2, \dots \Pi^{(r_p)}_{|\mathcal{G}^{(1)}_{r_p}|}\}$ according to Definition \ref{FractionalGenerators}. Similarly, suppose $\mathcal{S}_2=\{m_1, m_2, \dots m_{|\mathcal{S}_2|}\}$. Consider a collection of sets $\mathcal{G}^{\mathcal{S}_2}=\{\mathcal{G}^{(2)}_{m_1}, \mathcal{G}^{(2)}_{m_2}, \dots, \mathcal{G}^{(2)}_{m_{|\mathcal{S}_2|}}\}$ such that Receiver $m_q$ is a generator of $\mathcal{G}^{(2)}_{m_q}\subseteq \mathcal{I}_{m_q}$ for all $q\in\{1, 2, \dots, |\mathcal{S}_2|\}$, with a permutation $\Pi^{(m_q)}=\{\Pi^{(m_q)}_1, \Pi^{(m_q)}_2, \dots \Pi^{(m_q)}_{|\mathcal{G}^{(2)}_{m_q}|}\}$ according to Definition \ref{FractionalGenerators}.  Then, for all receivers in $\mathcal{S}_1$, we can apply  the first type of bound \eqref{Type1Bound2_1}, and for all receivers in $\mathcal{S}_2$, we apply the second type of bound \eqref{Type2Bound2_1}. Then, we add up the resulting $|\mathcal{S}_1|+|\mathcal{S}_2|$ equations, and see the resulting cancellation possible as a function of the choice of $(\mathcal{S}_1, \mathcal{S}_2, \mathcal{G}^{\mathcal{S}_1}, \mathcal{G}^{\mathcal{S}_2})$.  Thus, by adding up all the equations and rearranging terms, we obtain the following:
\begin{align}
\label{RateSum2}
n&R^{\textsf{sym}}\left(|\mathcal{S}_1|+\sum_{p=1}^{|\mathcal{S}_1|}|\mathcal{G}^{(1)}_{r_p}|\right)+nR^{\textsf{sym}}\sum_{q=1}^{|\mathcal{S}_2|}|\mathcal{G}^{(2)}_{m_q}|\nonumber\\&\leq |\mathcal{S}_1| n\log(P)+no(\log(P))+no(n)\nonumber\\&\hspace{15pt}+\underbrace{\sum\limits_{m_q\in\mathcal{S}_2}\sum\limits_{i\in\mathcal{T}_{m_q}}h(\tilde{Y}^n_{i}|W_i,  \mathcal{H})}_{\mathcal{U}(\mathcal{S}_2)}\nonumber\\&\hspace{15pt}-\hspace{-8pt}\underbrace{\sum\limits_{\Pi^{(m_q)}_{|\mathcal{G}^{(2)}_{m_q}|}\in\mathcal{I}_{m_q}:m_q\in\mathcal{S}_2}\hspace{-9pt}h\left(\tilde{Y}^n_{\Pi^{(m_q)}_{|\mathcal{G}^{(2)}_{m_q}|}}\Big|W_{\Pi^{(m_q)}_{|\mathcal{G}^{(2)}_{m_q}|}},  \mathcal{H}\right)}_{\mathcal{\tilde{V}}_2(\mathcal{S}_2)}\nonumber\\&\hspace{15pt}-\hspace{-8pt}\underbrace{\sum\limits_{\Pi^{(r_p)}_{|\mathcal{G}^{(1)}_{r_p}|}\in\mathcal{I}_{r_p}:r_p\in\mathcal{S}_1}\hspace{-9pt}h\left(\tilde{Y}^n_{\Pi^{(r_p)}_{|\mathcal{G}^{(1)}_{r_p}|}}\Big|W_{\Pi^{(r_p)}_{|\mathcal{G}^{(1)}_{r_p}|}},  \mathcal{H}\right)}_{\mathcal{\tilde{V}}_1(\mathcal{S}_1)}.
\end{align}

Starting with the last three terms in \eqref{RateSum2}, we follow the same logic as in the proof of Theorem \ref{Theorem4UpperBound1}.  Clearly, the aim here should be to choose the indexes for signal components corresponding to the set choices $\mathcal{S}_1$ and  $\mathcal{S}_2$ in a way that allows the maximum cancellation of the entropy terms, while taking advantage of the signal distribution properties stated in Remark \ref{remark33}. Depending upon the network topology $\mathcal{G}$, this cancellation will then result in zero or more positive entropy terms. We calculate the number of these remaining positive entropy terms as shown next. 

To calculate the number of (positive) entropy terms resulting from $\sum_{m_q\in\mathcal{S}_2}\sum_{i\in\mathcal{T}_{m_q}}h(\tilde{Y}^n_{i}|W_i,  \mathcal{H})$, we can use the cardinality of the collection of all index sets of transmitters whose signals are respectively seen at the receivers in $\mathcal{S}_2=\{m_1, m_2, \dots, m_{|\mathcal{S}_2|}\}$. We denote this multi-set of transmitter indexes as $\mathcal{U}(\mathcal{S}_2)=\{\mathcal{T}_{m_q}\}_{q=1}^{|\mathcal{S}_2|}$. To calculate the number of (negative) entropy terms resulting from $\sum_{\Pi^{(m_q)}_{|\mathcal{G}^{(2)}_{m_q}|}\in\mathcal{I}_{m_q}:m_q\in\mathcal{S}_2}h(\tilde{Y}^n_{\Pi^{(m_q)}_{|\mathcal{G}^{(2)}_{m_q}|}}|W_{\Pi^{(m_q)}_{|\mathcal{G}^{(2)}_{m_q}|}},  \mathcal{H})$, we use the cardinality of the set denoted by $\mathcal{\tilde{V}}_2(\mathcal{S}_2)= \{\Pi^{(m_1)}_{|\mathcal{G}^{(2)}_{m_1}|}, \Pi^{(m_2)}_{|\mathcal{G}^{(2)}_{m_2}|}, \dots \Pi^{(m_{|\mathcal{S}_2|})}_{|\mathcal{G}^{(2)}_{m_{|\mathcal{S}_2|}}|}\}$, which is an arbitrary set of indexes $ \Pi^{(m_q)}_{|\mathcal{G}^{(2)}_{m_q}|}\in \mathcal{I}_{m_q}$ for $m_q\in \mathcal{S}_2$. Similarly, to calculate the number of (negative) entropy terms resulting from $\sum_{\Pi^{(r_p)}_{|\mathcal{G}^{(1)}_{r_p}|}\in\mathcal{I}_{r_p}:r_p\in\mathcal{S}_1} h(\tilde{Y}^n_{\Pi^{(r_p)}_{|\mathcal{G}^{(1)}_{r_p}|}}|W_{\Pi^{(r_p)}_{|\mathcal{G}^{(1)}_{r_p}|}},  \mathcal{H})$, we use the cardinality of the set denoted by $\mathcal{\tilde{V}}_1(\mathcal{S}_1))= \{\Pi^{(r_1)}_{|\mathcal{G}^{(1)}_{r_1}|}, \Pi^{(r_2)}_{|\mathcal{G}^{(1)}_{r_2}|}, \dots \Pi^{(r_{|\mathcal{S}_1|})}_{|\mathcal{G}^{(1)}_{r_{|\mathcal{S}_1|}}|}\}$, which is an arbitrary set of indexes $ \Pi^{(r_p)}_{|\mathcal{G}^{(1)}_{r_p}|}\in \mathcal{I}_{r_p}$ for $r_p\in \mathcal{S}_1$. From the defined above three sets, we can then calculate the resulting number of (positive) entropy terms as follows:
\begin{align}
\label{EntropySetCounter2}
\tilde{\mu} (\mathcal{S}_1, \mathcal{S}_2) =|(\mathcal{U}(\mathcal{S}_2) \setminus  \mathcal{\tilde{V}}_2(\mathcal{S}_2)) \setminus \mathcal{\tilde{V}}_1(\mathcal{S}_1))|.
\end{align}

As indicated in \eqref{PowerConstraint},  each signal is transmitted with an average power constraint $P$. Thus, each of the $\tilde{\mu} (\mathcal{S}_1, \mathcal{S}_2)$ entropy terms can be upper bounded by $n\log(P)$. Therefore, using \eqref{RateSum2} and \eqref{EntropySetCounter2}, we can upper bound the secrecy rate as follows:
\begin{align}
\label{BlockSymmetricRate2}
     nR^{\textsf{sym}}&\leq \frac{|\mathcal{S}_1|n\log(P)+\tilde{\mu} (\mathcal{S}_1, \mathcal{S}_2)n\log(P)}{|\mathcal{S}_1|+\sum_{p=1}^{|\mathcal{S}_1|}|\mathcal{G}^{(1)}_{r_p}|+\sum_{q=1}^{|\mathcal{S}_2|}|\mathcal{G}^{(2)}_{m_q}|}\nonumber\\& \hspace{15pt}+\frac{no(\log(P))+no(n)}{|\mathcal{S}_1|+\sum_{p=1}^{|\mathcal{S}_1|}|\mathcal{G}^{(1)}_{r_p}|+\sum_{q=1}^{|\mathcal{S}_2|}|\mathcal{G}^{(2)}_{m_q}|}.
\end{align}
Then, by dividing both sides of  \eqref{BlockSymmetricRate2} by $n\log(P)$, minimizing over all choices of $(\mathcal{S}_1, \mathcal{S}_2, \mathcal{G}^{\mathcal{S}_1}, \mathcal{G}^{\mathcal{S}_2})$, and taking the limit as $n\rightarrow \infty$ and $P\rightarrow \infty$, we obtain the following upper bound on symmetric SDoF:
\begin{align}
\label{SDoFProofExpressionThm5} 
&\textup{\textsf{SDoF}}^{\textup{\textsf{sym}}} \nonumber\\&\leq \min_{(\mathcal{S}1, \mathcal{S}_2)} \min_{(\mathcal{G}^{\mathcal{S}_1}, \mathcal{G}^{\mathcal{S}_2})}\frac{|\mathcal{S}_1|+\tilde{\mu} (\mathcal{S}_1, \mathcal{S}_2)}{|\mathcal{S}_1|+\sum_{p=1}^{|\mathcal{S}_1|}|\mathcal{G}^{(1)}_{r_p}|+\sum_{q=1}^{|\mathcal{S}_2|}|\mathcal{G}^{(2)}_{m_q}|},
\end{align}
which completes the proof of Theorem \ref{Theorem5UpperBound2}.  \QEDA

\section{Proof of Lemma \ref{Lemma1}}
\label{appendix:Lemma1Proof}

Let $\tilde{Y}^n_i=\mathbf{H}^n_{ki}X^n_i+Z^n_k$ , for $i\in\mathcal{T}_k=\{k\}\cup\mathcal{I}_k$. To prove Lemma \ref{Lemma1}, we proceed as follows:
\begin{align}
\label{Lemma1Equation1}
h&(Y^n_k|W_{\mathcal{I}_k}, \mathcal{H})= I(W_k; Y^n_k| W_{\mathcal{I}_k}, \mathcal{H})\nonumber\\&\hspace{70pt}+h(Y^n_k|W_{\{k\}\cup\mathcal{I}_k}, \mathcal{H})\\
\label{Lemma1Equation2}
&\leq I(W_k; Y^n_k, \tilde{Y}^n_{\mathcal{I}_k}| W_{\mathcal{I}_k}, \mathcal{H})\nonumber\\&\hspace{15pt}+h(Y^n_k|W_{\{k\}\cup\mathcal{I}_k}, \mathcal{H})\\
\label{Lemma1Equation3}
&= I(W_k; Y^n_k| W_{\mathcal{I}_k}, \mathcal{H})+I(W_k, \tilde{Y}^n_{\mathcal{I}_k} | Y^n_k, W_{\mathcal{I}_k}, \mathcal{H})\nonumber\\&\hspace{15pt}+h(Y^n_k|W_{\{k\}\cup\mathcal{I}_k}, \mathcal{H})\\
\label{Lemma1Equation31}
&= H(W_k|W_{\mathcal{I}_k}, \mathcal{H})-H(W_k|Y^n_k, W_{\mathcal{I}_k}, \mathcal{H})\nonumber\\&\hspace{15pt}+I(W_k, \tilde{Y}^n_{\mathcal{I}_k} | Y^n_k, W_{\mathcal{I}_k}, \mathcal{H})+h(Y^n_k|W_{\{k\}\cup\mathcal{I}_k}, \mathcal{H})\\
\label{Lemma1Equation4}
&= H(W_k|\mathcal{H})-H(W_k|Y^n_k, \mathcal{H})\nonumber\\&\hspace{15pt}+I(W_k, \tilde{Y}^n_{\mathcal{I}_k} | Y^n_k, W_{\mathcal{I}_k}, \mathcal{H})+h(Y^n_k|W_{\{k\}\cup\mathcal{I}_k}, \mathcal{H})\\
\label{Lemma1Equation5}
&= nR_k+no(n)+I(W_k, \tilde{Y}^n_{\mathcal{I}_k} | Y^n_k, W_{\mathcal{I}_k}, \mathcal{H})\nonumber\\&\hspace{15pt}+h(Y^n_k|W_{\{k\}\cup\mathcal{I}_k}, \mathcal{H})\\
\label{Lemma1Equation6}
&= nR_k+no(n)+h(\tilde{Y}^n_{\mathcal{I}_k} | Y^n_k, W_{\mathcal{I}_k}, \mathcal{H})\nonumber\\&\hspace{15pt}-h(\tilde{Y}^n_{\mathcal{I}_k} | Y^n_k, W_{\{k\}\cup\mathcal{I}_k}, \mathcal{H})+h(Y^n_k|W_{\{k\}\cup\mathcal{I}_k}, \mathcal{H})\\
\label{Lemma1Equation7}
&\leq nR_k+no(n)+h(\tilde{Y}^n_{\mathcal{I}_k} |W_{\mathcal{I}_k}, \mathcal{H})\nonumber\\&\hspace{15pt}-h(\tilde{Y}^n_{\mathcal{I}_k} | Y^n_k, W_{\{k\}\cup\mathcal{I}_k}, \mathcal{H})+h(Y^n_k|W_{\{k\}\cup\mathcal{I}_k}, \mathcal{H})\\
\label{Lemma1Equation71}
&\leq nR_k+no(n)+h(\tilde{Y}^n_{\mathcal{I}_k} |W_{\mathcal{I}_k}, \mathcal{H})\nonumber\\&\hspace{15pt}-h(\tilde{Y}^n_{\mathcal{I}_k} | Y^n_k, W_{\{k\}\cup\mathcal{I}_k}, X^n_{\mathcal{I}_k}, \mathcal{H})\nonumber\\&\hspace{15pt}+h(Y^n_k|W_{\{k\}\cup\mathcal{I}_k}, \mathcal{H})\\
\label{Lemma1Equation8}
&=nR_k+no(n)+h(\tilde{Y}^n_{\mathcal{I}_k} |W_{\mathcal{I}_k}, \mathcal{H})\nonumber\\&\hspace{15pt}-h(\tilde{Y}^n_{\mathcal{I}_k} | Y^n_k, X^n_{\mathcal{I}_k}, W_k, \mathcal{H})+h(Y^n_k|W_{\{k\}\cup\mathcal{I}_k}, \mathcal{H})\\
\label{Lemma1Equation9}
&= nR_k+no(n)+h(\tilde{Y}^n_{\mathcal{I}_k} |W_{\mathcal{I}_k}, \mathcal{H})\nonumber\\&\hspace{15pt}-h(Z^n_{\mathcal{I}_k} | Y^n_k, W_k, \mathcal{H})+h(Y^n_k|W_{\{k\}\cup\mathcal{I}_k}, \mathcal{H})\\
\label{Lemma1Equation10}
&\leq nR_k+no(n)+h(\tilde{Y}^n_{\mathcal{I}_k} |W_{\mathcal{I}_k}, \mathcal{H})\nonumber\\&\hspace{15pt}-h(Z^n_{\mathcal{I}_k})+h(Y^n_k|W_{\{k\}\cup\mathcal{I}_k}, \mathcal{H})\\
\label{Lemma1Equation11}
&= nR_k+no(n)+h(\tilde{Y}^n_{\mathcal{I}_k} |W_{\mathcal{I}_k}, \mathcal{H})\nonumber\\&\hspace{15pt}+no(\log (P))+h(Y^n_k|W_{\{k\}\cup\mathcal{I}_k}, \mathcal{H})\\
\label{Lemma1Equation12}
&\leq nR_k+no(n)+h(\tilde{Y}^n_{\mathcal{I}_k} |W_{\mathcal{I}_k}, \mathcal{H})\nonumber\\&\hspace{15pt}+no(\log (P))+h(Y^n_k|W_k, \mathcal{H})\\
\label{Lemma1Equation13}
&= nR_k+no(n)+\sum_{j\in\mathcal{I}_k}h(\tilde{Y}^n_j |W_j, \mathcal{H})\nonumber\\&\hspace{15pt}+no(\log (P))+h(Y^n_k|W_k, \mathcal{H})\\
\label{Lemma1Equation14}
&= nR_k+\sum_{j\in\mathcal{T}_k}h(\tilde{Y}^n_j |W_j, \mathcal{H})+no(\log (P))+no(n),
\end{align}
where \eqref{Lemma1Equation3} is due to the chain rule of mutual information, \eqref{Lemma1Equation4} follows from the independence of the message $W_k$ from the interfering message set $W_{\mathcal{I}_k}$,  \eqref{Lemma1Equation5} is due to the decodability constraint in \eqref{DecodabilityConstraint}, \eqref{Lemma1Equation7} and \eqref{Lemma1Equation71} follow from the fact that conditioning reduces entropy, \eqref{Lemma1Equation8} is due to the Markov chain $W_{\mathcal{I}_k}\rightarrow X^n_{\mathcal{I}_k}\rightarrow \tilde{Y}^n_{\mathcal{I}_k}$, \eqref{Lemma1Equation10} and \eqref{Lemma1Equation12} follow from the fact that conditioning reduces entropy, \eqref{Lemma1Equation13} is due to the independence of $W_j$ from the message set $W_{\mathcal{I}_k\setminus\{j\}}$, and \eqref{Lemma1Equation14} follows because $\mathcal{T}_k=\{k\}\cup\mathcal{I}_k$.
 \QEDA

\section{Proof of Lemma \ref{Lemma2}}
\label{appendix:Lemma2Proof}

 Let $\tilde{Y}^n_i=\mathbf{H}^n_{ki}X^n_i+Z^n_k$ , for all $i\in\mathcal{T}_k=\{k\}\cup\mathcal{I}_k$, and consider any set $\mathcal{G}_k\subseteq\mathcal{I}_k$ and a permutation sequence  $\Pi^{(k)}=(\Pi_1, \Pi_2, \dots \Pi_{|\mathcal{G}_k|})$ of the elements of $\mathcal{G}_k$ satisfying Definition \ref{FractionalGenerators}. We proceed with the proof of Lemma \ref{Lemma2} as follows:
\begin{align}
\label{Lemma2Equation1}
h&\left(\sum_{i\in\mathcal{I}_k}\mathbf{H}^n_{ki}X^n_i+Z^n_k\Big|\mathcal{H}\right)\nonumber\\&\hspace{0pt}\geq h\left(\sum_{i\in\mathcal{I}_k}\mathbf{H}^n_{ki}X^n_i+Z^n_k\Big|X_{\mathcal{I}_k\setminus\mathcal{G}_k}, \mathcal{H}\right)\\
\label{Lemma2Equation2}
&= h\left(\sum_{\Pi_i\in\mathcal{G}_k}\mathbf{H}^n_{k\Pi_i}X^n_{\Pi_i}+Z^n_k\Big| \mathcal{H}\right)\\
\label{Lemma2Equation3}
&=I\left(W_{\Pi_1}; \sum_{\Pi_i\in\mathcal{G}_k}\mathbf{H}^n_{k\Pi_i}X^n_{\Pi_i}+Z^n_k\Big| \mathcal{H}\right)\nonumber\\&\hspace{15pt}+h\left(\sum_{\Pi_i\in\mathcal{G}_k}\mathbf{H}^n_{k\Pi_i}X^n_{\Pi_i}+Z^n_k\Big|W_{\Pi_1}, \mathcal{H}\right)\\
\label{Lemma2Equation4}
&= H(W_{\Pi_1}| \mathcal{H})\nonumber\\&\hspace{15pt}-H\left(W_{\Pi_1}\Big|\sum_{\Pi_i\in\mathcal{G}_k}\mathbf{H}^n_{k\Pi_i}X^n_{\Pi_i}+Z^n_k,  \mathcal{H}\right)\nonumber\\&\hspace{15pt}+h\left(\sum_{\Pi_i\in\mathcal{G}_k}\mathbf{H}^n_{k\Pi_i}X^n_{\Pi_i}+Z^n_k\Big|W_{\Pi_1},  \mathcal{H}\right)\\
\label{Lemma2Equation5}
&= nR^{\textsf{\textup{sym}}}-H\left(W_{\Pi_1}\Big|\sum_{\Pi_i\in\mathcal{G}_k}\mathbf{H}^n_{k\Pi_i}X^n_{\Pi_i}+Z^n_k,  \mathcal{H}\right)\nonumber\\&\hspace{15pt}+h\left(\sum_{\Pi_i\in\mathcal{G}_k}\mathbf{H}^n_{k\Pi_i}X^n_{\Pi_i}+Z^n_k\Big|W_{\Pi_1}, \mathcal{H}\right)\\
\label{Lemma2Equation6}
&= nR^{\textsf{\textup{sym}}}+no(n)\nonumber\\&\hspace{15pt}+h\left(\sum_{\Pi_i\in\mathcal{G}_k}\mathbf{H}^n_{k\Pi_i}X^n_{\Pi_i}+Z^n_k\Big|W_{\Pi_1},  \mathcal{H}\right)\\
\label{Lemma2Equation61}
&\geq nR^{\textsf{\textup{sym}}}+no(n)\nonumber\\&\hspace{15pt}+h\left(\sum_{\Pi_i\in\mathcal{G}_k}\mathbf{H}^n_{k\Pi_i}X^n_{\Pi_i}+Z^n_k\Big|W_{\Pi_1}, X_{\Pi_1}, \mathcal{H}\right)\\
\label{Lemma2Equation7}
&= nR^{\textsf{\textup{sym}}}+no(n)\nonumber\\&\hspace{15pt}+h\left(\sum_{\Pi_i\in\mathcal{G}_k}\mathbf{H}^n_{k\Pi_i}X^n_{\Pi_i}+Z^n_k\Big|X_{\Pi_1}, \mathcal{H}\right)\\
\label{Lemma2Equation8}
&=  nR^{\textsf{\textup{sym}}}+no(n)\nonumber\\&\hspace{15pt}+h\left(\sum_{\Pi_i\in\mathcal{G}_k\setminus\Pi_1}\mathbf{H}^n_{k\Pi_i}X^n_{\Pi_i}+Z^n_k\Big|\mathcal{H}\right)\\
\label{Lemma2Equation9}
&\geq n |\mathcal{G}_k|R^{\textsf{\textup{sym}}}+h(\tilde{Y}^n_{\Pi_{|\mathcal{G}_k|}}|W_{\Pi_{|\mathcal{G}_k|}},  \mathcal{H})+no(n),
\end{align}
where $\Pi_{|\mathcal{G}_k|}\in\mathcal{G}_k\subseteq \mathcal{I}_k$  \eqref{Lemma2Equation1} follows from the fact that conditioning reduces entropy, \eqref{Lemma2Equation6} is due to Definition \ref{FractionalGenerators}'s fractional signal generator condition $\mathcal{G}_k\setminus\{\Pi_1, \Pi_2, \dots, \Pi_i\}\subseteq \mathcal{I}_{\Pi_i}$, for $i=1, 2, \dots, |\mathcal{G}_k|$, and the decodability constraint in \eqref{DecodabilityConstraint}, \eqref{Lemma2Equation61} is due to the fact that conditioning reduces entropy, \eqref{Lemma2Equation7} is due to the Markov chain $W_{\Pi_1}\rightarrow X^n_{\Pi_1}\rightarrow Y^n_{\Pi_1}$, whereas  \eqref{Lemma2Equation9} is obtained by repeatedly  applying steps \eqref{Lemma2Equation2}-\eqref{Lemma2Equation8} for a total of $|\mathcal{G}_k|$ times.\QEDA
\end{appendices}

\bibliographystyle{IEEEtran}
\bibliography{TIM-CM-Paper-bib}

\end{document}